\newtheorem{theorem}{Theorem}
\newtheorem{lemma}{Lemma}
\newtheorem{assumption}{Assumption}
\newtheorem{proposition}{Proposition}
\theoremstyle{definition}
\theoremstyle{remark}
\newtheorem{remark}{Remark}
\newcommand{\E}{\mathbb{E}}
\newcommand{\pr}{\mathbb{P}}
\newcommand{\corr}{\mathbb{C}\mathrm{or}}
\newcommand{\cov}{\mathbb{C}\mathrm{ov}}
\newcommand{\astar}{\alpha^{*}}
\newcommand{\atheta}{\alpha_{\theta}}
\newcommand{\aphi}{\alpha_{\phi}}
\newcommand{\Var}{\mathbb{V}{\rm ar}}
\tikzset{
    -Latex,auto,node distance =1 cm and 1 cm,semithick,
    state/.style ={ellipse, draw, minimum width = 0.7 cm},
    point/.style = {circle, draw, inner sep=0.04cm,fill,node contents={}},
    bidirected/.style={Latex-Latex,dashed},
    el/.style = {inner sep=2pt, align=left, sloped}
}
\definecolor{edgeblue}{HTML}{0072B2}
\definecolor{edgeorange}{HTML}{D55E00}
\definecolor{edgeteal}{HTML}{009E73}
\newcommand{\blind}{0}
\begin{document}

\def\spacingset#1{\renewcommand{\baselinestretch}%
{#1}\small\normalsize} \spacingset{1}


\if0\blind
{
  \title{\bf 
Identification and estimation of causal mechanisms in cluster-randomized trials with post-treatment confounding using Bayesian nonparametrics
  }
  \author{Yuki Ohnishi$^{1*}$, Michael J. Daniels$^{2\star}$, Lei Yang$^{2\dagger}$, and Fan Li$^{1 \ddagger}$\\
    $^1$Department of Biostatistics,  
    Yale University School of Public Health, \\
    New Haven, CT, United States\\
    $^2$Department of Statistics, University of Florida, \\
    Gainesville, FL, United States\\
    \\
    $^*$yuki.ohnishi@yale.edu, $^\star$daniels@ufl.edu, $^\dagger$yang.lei@ufl.edu, $^\ddagger$fan.f.li@yale.edu}
  \maketitle
} \fi

\if1\blind
{
  \bigskip
  \bigskip
  \bigskip
  \begin{center}
    {\LARGE\bf The nested enriched Dirichlet process for causal mediation in cluster-randomized trials with post-treatment confounding}
\end{center}
  \medskip
} \fi

\bigskip

\begin{abstract}
Causal mediation analysis in cluster-randomized trials (CRTs) is essential for explaining how cluster-level interventions affect individual outcomes, yet it is complicated by interference, post-treatment confounding, and hierarchical covariate adjustment. We develop a Bayesian nonparametric framework that simultaneously accommodates interference and a post-treatment confounder that precedes the mediator. Identification is achieved through a multivariate Gaussian copula that replaces cross-world independence with a single dependence parameter, yielding a built-in sensitivity analysis to residual post-treatment confounding. For estimation, we introduce a nested common atoms enriched Dirichlet process (CA-EDP) prior that integrates the Common Atoms Model (CAM) to share information across clusters while capturing between- and within-cluster heterogeneity, and an Enriched Dirichlet Process (EDP) structure delivering robust covariate adjustment without impacting the outcome model. We provide formal theoretical support for our prior by deriving the model's key distributional properties, including its partially exchangeable partition structure, and by establishing convergence guarantees for the practical truncation-based posterior inference strategy. We demonstrate the performance of the proposed methods in simulations and provide further illustration through a reanalysis of a completed CRT.
\end{abstract}

\noindent%
{\it Keywords: Causal mediation, Spillover mediation effect, Post-treatment confounder, Copula, Sensitivity analysis, Nested Common Atoms Enriched Dirichlet Process}
\vfill

\newpage
\spacingset{1.9} 

\section{Introduction}
\label{sec:intro}
\subsection{Background and literature review}
Cluster randomized trials (CRTs), where groups of individuals are randomized to treatment conditions, are a cornerstone of intervention research in public health, education, and social policy \citep{Turner2017}. Clusters can be schools, hospitals, or communities, and are chosen to be the unit of randomization 
when individual randomization is infeasible or risks contamination.
Beyond simply determining \emph{if} an intervention is effective, a central goal of scientific inquiry is to understand \emph{how} it works. 
Causal mediation analysis can reveal these mechanisms by disentangling intervention pathways \citep{Imai2010, Daniel2015}, but two features of CRTs complicate standard approaches: interference within clusters and confounding by intermediate variables that occur after treatment assignment.

In CRTs, the intervention is implemented at the cluster level: all individuals within the same cluster receive the same treatment according to the cluster's assignment. \emph{Interference} arises when one individual's treatment-induced behaviors or states affect another individual's outcome within that cluster. 
In mediation analysis of CRTs, a key channel of interference is the mediator itself: an individual-level mediator  can influence peers' outcomes through shared environment, social norms, or resource reallocation. This motivates decomposing mediation effects into (i) an \emph{individual} component operating through a unit's own mediator and (ii) a \emph{spillover} component operating through others' mediators within the cluster \citep{VanderWeele2013_CRT}.
A second complication arises from the presence of \emph{post-treatment confounders}: intermediate variables that occur after treatment assignment and influence both the mediator and outcome. Such variables are not uncommon (e.g., mandated clinic visits that deliver growth monitoring and nutrition counseling), and if unaccounted for, can induce bias through uncontrolled mediator-outcome confounding. 
Identification in the presence of post-treatment confounding relies on strong cross-world independence assumptions \citep{VanderWeele2014_effect_decomposition}, which are inherently untestable from the observed data. Careful adjustment and sensitivity analyses are thus essential to ensure the credibility of causal mediation findings.

A few prior efforts have developed causal mediation methods to address interference in CRTs, all of which assume away post-treatment confounding. For example, \citet{VanderWeele2013_CRT} decomposed the natural indirect effect into a spillover mediation effect and an individual mediation effect, and established nonparametric identification. \citet{cheng2024} studied the semiparametric efficiency theory and proposed conditionally doubly robust estimators for these estimands. \citet{ohnishi2025} proposed a Bayesian nonparametric (BNP) framework for assessing mediation and spillover effects with multiple unordered mediators in CRTs. They leveraged a newly-developed nested dependent Dirichlet process (nDDP) as the inferential engine for complex multilevel data. Although finer estimands were proposed to investigate mechanisms via multiple mediators, their identification assumptions still rule out post-treatment confounding.
For independent data, several methods were recently introduced to estimate indirect effects in the presence of post-treatment confounding \citep[e.g.,][]{Hong2023, Rudolph2023,xia2023identification, Bae2024}. However, these methods assume no interference nor multilevel data structure and hence cannot be applied directly in CRTs to investigate finer mechanisms.

\subsection{Motivating application}
\label{sec:motivating_example}

Child undernutrition remains a major public health challenge, increasing early-life morbidity and mortality and constraining cognitive and economic potential across the life course. The conditional cash transfer (CCT) programs aim to relax liquidity constraints while incentivizing verifiable health behaviors that promote investments in child health and development. 
The \emph{Red de Protecci\'on Social} (RPS) study is a landmark CRT evaluating the efficacy of CCT programs, conducted in $42$ rural \emph{comarcas}, administrative regions (unit of randomization), in Nicaragua and randomized one to one to treatment and control. Baseline interviews were conducted in August-September $2000$, with follow-ups in October $2001$ and $2002$. The sample comprises $N=449$ children aged $6$-$35$ months measured in $2002$.

\begin{table}[htbp]
\centering
\caption{Summary statistics of baseline characteristics, mean (standard deviation). }
\label{tab:baseline_cov}
\begin{adjustbox}{width=14.2cm}
\begin{tabular}{llcc}
\toprule
\textbf{Characteristic} &  & Control ($n=253$ households) & Treatment ($n=196$ households) \\
\midrule
\multicolumn{4}{l}{\textbf{Cluster-level covariates}}\\
& Cluster size (number of households) & $12.048 \;(4.955)$ & $9.333 \;(4.328)$  \\
\multicolumn{4}{l}{\textbf{Individual-level covariates}}\\
& Mother's literacy & $1.897 \;(0.987)$ & $1.913 \;(0.981)$  \\
& Mother's formal education & $1.972 \;(1.570)$ & $1.923 \;(1.488)$ \\
& Highest educational attainment & $2.870 \;(1.156)$ & $2.857 \;(0.829)$ \\
\multicolumn{4}{l}{\textbf{Post-treatment confounder}}\\
& Health-checkup condition & $0.854 \;(0.354)$ & $0.954 \;(0.210)$ \\
\multicolumn{4}{l}{\textbf{Mediator}}\\
& Household dietary diversity score & $7.794 \;(1.844)$ & $9.230 \;(1.732)$ \\
\multicolumn{4}{l}{\textbf{Outcome}}\\
& Height-for-age z-score   & $-1.865 \;(1.213)$ & $-1.540 \;(1.147)$ \\
\bottomrule
\end{tabular}
\end{adjustbox}
\end{table}

The CCT program is randomized at the comarcas level, and provided demand-side transfers and funded supply-side services to the households in treated comarcas.
Baseline covariates that may confound mediator-outcome relations include: mother's formal education, mother's literacy, and the household's highest educational attainment.
The primary outcome was height-for-age $z$-score. All anthropometrics were measured by trained interviewers and standardized to international reference distributions. 
Household dietary diversity score is  measured and constructed from $12$ food-group categories using detailed records of foods acquired by any household member in the prior $15$ days. 
Additionally, preventive health services use is captured by a binary indicator for whether the child was taken to a routine health check-up in the previous $6$ months. One potential hypothesis is that, by the temporal ordering, health check-ups precede dietary diversity. This is because health check-up behavior is directly incentivized by RPS conditionalities, and it can influence both subsequent household food choices (via counseling and growth-monitoring feedback) and child growth; hence, health check-up behavior serves as a post-treatment confounder. Table \ref{tab:baseline_cov} reports the summary statistics of baseline covariates, post-treatment confounders, mediators, and outcomes for both treatment and control groups.

Prior work on the RPS trial reported improvements in child nutrition and examined mediation from several angles.  \citet{Charters2023} modeled routine pediatric check-ups and household dietary diversity as independent, sequential mediators; however, their analysis assumed no interference, a strong restriction in CRTs that rules out within-cluster spillovers, and it relied on parametric models, raising concerns about model misspecification. \citet{cheng2024} developed doubly robust estimators for mediation and spillover effects in CRTs with a single mediator; however, they focused exclusively on dietary diversity and did not adjust for the post-treatment confounder (health check-ups). Finally, \citet{ohnishi2025} treated these two as unstructured mediators, ignoring their likely causal ordering, with check-ups plausibly preceding changes in diet. Consequently, both \citet{cheng2024} and \citet{ohnishi2025} leave open the possibility of post-treatment mediator-outcome confounding through check-ups, when investigating the causal role of household dietary diversity score. In addition, these analyses invoke cross-world independence for the check-up variable within a cluster, an untestable condition that warrants principled sensitivity analysis of the counterfactual dependence structure.

\subsection{Contributions}
Motivated by the RPS study, our work makes two primary methodological contributions to causal mediation analysis. First, we develop a framework for causal mediation in CRTs with post-treatment confounders, together with a sensitivity analysis procedure for cross-world dependence among those confounders. We consider two causally ordered intermediate outcomes, and treat the first as a post-treatment confounder. 
In this setting, we define pathway-specific causal estimands for the cluster-level intervention, including spillover and individual mediator effects. We propose a new assumption that replaces cross-world independence with a Gaussian copula sensitivity model indexed by a single, interpretable dependence parameter \citep[e.g.,][]{Hong2023,Bae2024}. Under this assumption, we show that the proposed estimands are identified from the observed-data law once a sensitivity model is specified for the unidentified cross-world law of the post-treatment confounder. This formulation provides a transparent framework for sensitivity analysis to examine how causal conclusions change as the untestable cross-world assumption is relaxed.
Placing a prior on this sensitivity parameter enables transparent reporting as assumptions are relaxed or tightened, with automatic uncertainty quantification.

Second, we develop a novel BNP prior, termed the nested Common Atoms Enriched Dirichlet Process (CA-EDP), for causal modeling in CRTs. The importance of flexible models in CRTs has been emphasized in prior work \citep{Ho2013} to mitigate bias due to model misspecification. 
In causal mediation analysis, one requires a coherent model for the joint distribution of covariates and responses. The recently proposed nDDP model \citep{ohnishi2025}, while highly flexible, does not jointly model covariates and responses; instead, it specifies the conditional response distribution directly given confounders, replacing the covariate distributions with its empirical counterpart, which can under-propagate uncertainty in the confounder distribution. Consequently, when the covariate space is high-dimensional, estimating this functional dependence becomes difficult, making the approach more vulnerable to the curse of dimensionality. In addition, by modeling the joint distribution of covariates, we can automatically accommodate ignorable missingness (in covariates).
Another challenge is that the nDDP may suffer from a \emph{degeneracy} issue inherent to its nested random-partition structure \citep{Camerlenghi2019_2}: if two distributions share even a single atom, they are automatically assigned to the same latent group. 
Our proposed CA-EDP combines two BNP priors: the Common Atoms Model (CAM) \citep{Denti2023} and the Enriched Dirichlet Process (EDP) \citep{Wade2011}. CAM enables sharing a common set of atoms to capture between- and within-cluster heterogeneity while avoiding degeneracy, and EDP supplies robust covariate adjustment at both the cluster and individual levels through joint modeling. We provide theoretical underpinnings for the CA-EDP, including analyses of its distributional properties (tie probabilities and correlation structures), discuss a random partition property showing how the model avoids degeneracy, and establish convergence guarantees for the truncated CA-EDP.
Our extensive simulation results confirm that our methods outperform the parametric model in accuracy and provide conservative uncertainty quantification relative to the nDDP across a range of data-generating processes. We then discuss how the CA-EDP can complement the nDDP.

Our empirical analysis using the RPS data yields several key insights. First, even after adjusting for the post-treatment confounder, the CCT program's effect on child nutritional status is largely mediated through household dietary diversity with pronounced spillovers within administrative regions, findings that strengthen prior results \citep{cheng2024,ohnishi2025}. However, adjusting for post-treatment confounders reduces the magnitude of the effect mediated through household dietary diversity and increases the direct effect operating independently of the mediators.
Using the proposed copula-based sensitivity analysis, we further assess how sensitive our conclusions are to the cross-world dependence structure of post-treatment confounders and find that both the sign and the magnitude of the diet-mediated effects remain stable.

\section{Structural Assumptions, Causal Estimands, and Nonparametric Identification}
\label{sec:setup}
\subsection{Notation, data structure, and causal estimands}
\label{sec:notation}

We consider a CRT with $ I $ clusters. For cluster $ i \in \{1, \ldots, I\} $, we denote $ N_i $ as the cluster size, $ A_i \in \{0, 1\} $ as the cluster-level treatment assignment, with $A_i = 1$ if it is assigned treatment (CCT program) and $A_i = 0$ otherwise, and $ \mathbf{V}_i \in \mathcal{V}=\mathbb{R}^{q \times 1} $ as a vector of cluster-level baseline covariates. 
The total number of individuals in the study is denoted by $ N = \sum_{i=1}^{I} N_i$.
For individual $ j \in \{1, \ldots, N_i\} $ in cluster $ i $, we observe a vector of individual-level baseline covariates $ \mathbf{X}_{ij} \in \mathcal{X}=\mathbb{R}^{p \times 1} $, and write $ \mathbf{X}_i = [\mathbf{X}_{i1}, \ldots, \mathbf{X}_{iN_i}]^\top \in \mathbb{R}^{N_i \times p} $. 
Let $ \mathbf{C}_i = \{\mathbf{V}_i, \mathbf{X}_i\} $ represent all baseline covariates in cluster $ i $ and $ \mathbf{C}_{ij} = \{\mathbf{V}_i, \mathbf{X}_{ij}\} $ represent baseline covariates of individual $j$ in cluster $ i $. 
We also observe an individual-level outcome $ Y_{ij} \in \mathbb{R} $ (i.e., child nutritional outcome z-score), and consider an individual-level post-treatment confounder $D_{ij} \in \mathbb{R}$  (i.e., child health check-up condition), measured before observing the outcome but after assignment. Additionally, we observe an individual-level mediator $M_{ij} \in \mathbb{R}$ (i.e., household dietary diversity score) that lies between $D_{ij}$ and $Y_{ij}$. We let $ \mathbf{Y}_i = [Y_{i1}, \ldots, Y_{iN_i}]^\top \in \mathbb{R}^{N_i \times 1} $, $ \mathbf{D}_i = [D_{i1}, \ldots, D_{iN_i}]^\top \in \mathbb{R}^{N_i \times 1} $, $ \mathbf{M}_i = [M_{i1}, \ldots, M_{iN_i}]^\top \in \mathbb{R}^{N_i \times 1} $, and $ \mathbf{D}_{i(-j)}, \mathbf{M}_{i(-j)} \in \mathbb{R}^{(N_i - 1) \times 1} $ as the vector of post-treatment confounders and mediators from cluster $ i $ excluding individual $ j $. 
Finally, we let $ \mathbf{A} $, $ \mathbf{M} $, $ \mathbf{D} $, and $ \mathbf{Y} $ be the ($ I \times 1 $)-dimensional vector of treatment assignments and the ($ N \times 1 $)-dimensional vectors of the mediators, the post-treatment confounders and outcomes, respectively. Figure \ref{fig:DAG_mediators} provides a graphical representation of the causal structure between the observed variables. We adopt the potential outcomes framework and define $D_{ij}(\mathbf{A})$, $M_{ij}(\mathbf{A}, \mathbf{D})$ as the potential mediator variables under assignment vector $ \mathbf{A} $, and $Y_{ij}(\mathbf{A}, \mathbf{D}, \mathbf{M})$ as the potential outcomes for unit $ j $ in cluster $ i $ when $ \mathbf{A}, \mathbf{D}, \mathbf{M} $ were the vectors of assignments, post-treatment confounders, and mediators in the whole study population.
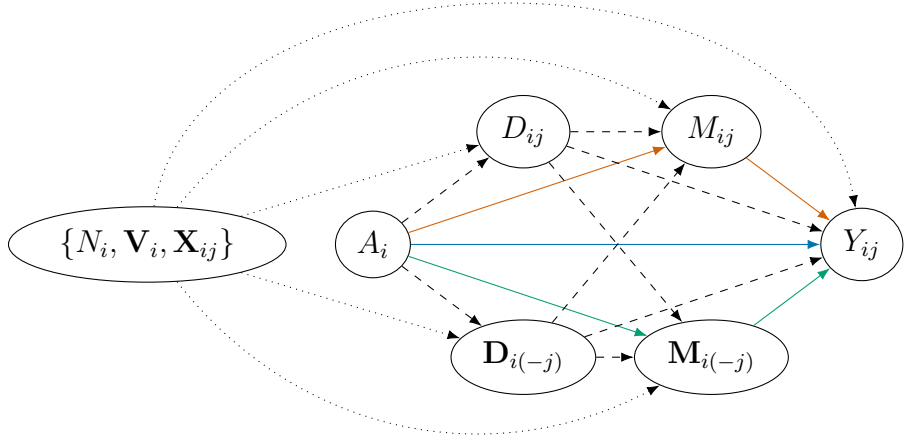
\begin{figure}[htbp]
\centering 
 \begin{tikzpicture}
 
    \node[state] (a) at (-0.5,0) {$A_i$};
    \node[state] (d1) at (1.5,1.5) {$D_{ij}$};
    \node[state] (m1) at (4, 1.5)  {$M_{ij}$};
    \node[state] (y) at (6,0) {$Y_{ij}$};
    \node[state] (d2) at (1.5,-1.5) {$\mathbf{D}_{i(-j)}$};
    \node[state] (m2) at (4, -1.5) {$\mathbf{M}_{i(-j)}$};
    \node[state] (c) at (-3.5,0) {$\left\{N_i,\mathbf{V}_i,\mathbf{X}_{ij}\right\}$};

    \path[dashed] (a) edge (d1);
    \path[edgeorange]         (a) edge (m1);
    \path[dashed] (a) edge (d2);
    \path[edgeteal]         (a) edge (m2);
    \path[edgeblue]      (a) edge (y);
    
    \path[dashed]  (d1) edge (y);
    \path[edgeorange]          (m1) edge (y);
    \path[dashed]  (d2) edge (y);
    \path[edgeteal]          (m2) edge (y);
    
    \draw[dashed]  (d1) edge (m1);
    \draw[dashed]  (d1) edge (m2);
    \draw[dashed]  (d2) edge (m1);
    \draw[dashed]  (d2) edge (m2);

    \path[dotted] (c) edge[bend left=80] (y);
    \path[dotted] (c) edge (d1);
    \path[dotted] (c) edge (d2);
    \path[dotted] (c) edge[bend left=40] (m1);
    \path[dotted] (c) edge[bend left=-40] (m2);
    
\end{tikzpicture}
\caption{ Mediation directed acyclic graph for a CRT with a post-treatment confounder. Here, $\mathbf{X}_i$ and $N_i$ are baseline covariates and cluster size, $A_i$ is cluster-level treatment assignment, $D_{ij}$, $M_{ij}$ and $Y_{ij}$ are post-treatment confounder, mediator and outcome of individual $j$ in cluster $i$, and $\mathbf{D}_{i(-j)}$ and $\mathbf{M}_{i(-j)}$ are the vectors of post-treatment confounders and mediators excluding individual $j$. The blue, orange, and green arrows represent the natural direct effect, individual mediator effect, and spillover mediator effect, respectively.}
\label{fig:DAG_mediators}
\end{figure}

\begin{assumption}[Cluster-level SUTVA]
\label{asmp:sutva}
    Cluster-level stable unit treatment value assumption (SUTVA) for the cluster-randomized experiment consists of two parts:
    \begin{enumerate}
        \item \textit{(No interference between clusters).} An individual's potential mediators and outcome do not vary with treatments assigned to clusters other than their cluster, that is, $ D_{ij}(\mathbf{A}) = D_{ij}(A_i)$, $ M_{ij}(\mathbf{A}, \mathbf{D}) = M_{ij}(A_i, \mathbf{D}_{i})$, and $Y_{ij}(\mathbf{A}, \mathbf{D}, \mathbf{M}) = Y_{ij}(A_i, \mathbf{D}_{i}, \mathbf{M}_{i}) $ 
        \item \textit{(No different versions of treatment).} There are no different versions of each treatment level, that is, if $ A_i = A_i' $ then $ D_{ij}(A_i) = D_{ij}(A_i')$ and if $ A_i = A_i' $, $\mathbf{D}_{i}= \mathbf{D}_{i}' $ and $\mathbf{M}_{i}= \mathbf{M}_{i}' $ then $ Y_{ij}(A_i, \mathbf{D}_{i}, \mathbf{M}_{i}) = Y_{ij}(A_i', \mathbf{D}_{i}', \mathbf{M}_{i}')$.
    \end{enumerate}
\end{assumption}

The first part of the cluster-level SUTVA does not rule out the possibility of spillover effects of the mediators on the outcomes within the same cluster. That is, the outcome for unit $ j $ in cluster $ i $ can still be affected by mediators of other units in the same cluster $i$. The second part of Assumption \ref{asmp:sutva} ensures a well-defined treatment. This is a standard assumption in causal inference and is considered in the RPS study since CCT follows a standardized procedure. By Assumption \ref{asmp:sutva}, we define $ D_{i j}(a) $ as the potential intermediate confounder under condition $ a \in \{0, 1\} $, and $ \mathbf{D}_{i}(a) = [D_{i 1}(a), \ldots, D_{i N_i}(a)]^\top $ as the vector of potential confounder variables for all individuals in cluster $ i $.
Define $ M_{ij}(a, \mathbf{D}_{i}(a)) $ as the potential mediator for unit $j$ in cluster $i$ under condition $ a \in \{0, 1\} $ and $ \mathbf{M}_{i}(a) = [M_{i1}(a, \mathbf{D}_{i}(a)), \ldots, M_{i N_i}(a, \mathbf{D}_{i}(a))]^\top $ as the vector of potential mediator variables for all individuals in cluster $ i $.
Finally, $ Y_{i j}(a, \mathbf{D}_{i}(a), \mathbf{M}_{i}(a)) $ represents the potential outcome.

To simplify notation and focus on evaluating the mediation effect, we succinctly write $M_{ij}(a)$ and $ Y_{i j}(a, \mathbf{M}_{i}(a')) $ for $M_{ij}(a, \mathbf{D}_{i}(a))$ and $ Y_{i j}(a, \mathbf{D}_{i}(a), \mathbf{M}_{i}(a')) $, respectively.
Also, notice that one can equivalently represent $ Y_{i j}(a, \mathbf{m}_i) = Y_{i j}(a, m_{i j}, \mathbf{m}_{i(-j)}) $ with $ \mathbf{m}_i = \{m_{i j}, \mathbf{m}_{i(-j)} \} $; this notation explicitly distinguishes an individual's own mediator from the mediators of the remaining cluster members. 
The only possibly observable potential outcome is the one where, if $ A_i $ were set to $ a $, the mediators of all the units in cluster $i$  were set to the value they would have taken under condition $a$. Throughout we use the following notation for potential outcomes of this type: 
$ Y_{i j}(a) = Y_{i j}(a, \mathbf{M}_{i}(a)) = Y_{i j}(a, M_{i j}(a), \mathbf{M}_{i(-j)}(a) ) $.

\begin{assumption}[Cluster randomization]
    \label{asmp:randomization}
    The treatment assignment for each cluster is an independent realization from a Bernoulli distribution with $ p(A_i =1 ) = \pi \in (0,1) $.
\end{assumption}

\begin{assumption}[Super-population framework]
\label{asmp:superpopulation}
    (a) The cluster size $N_i$ follows an unknown distribution $ \mathcal{P}^{N}$ over $ \mathbb{N}^{+} $. (b) Conditional on $ N_i $, all observed variables are realization from the joint distribution $\mathcal{P}^{Y, M, D, C, A \mid N} $. Furthermore, positivity holds such that the conditional density $f_{M, D \mid \mathbf{C}, A, N} (\mathbf{m},\mathbf{d} \mid \mathbf{c}, a, n) >0$ for any $\{ \mathbf{m},\mathbf{d}, \mathbf{c}, a, n \}$ over their valid support.
\end{assumption}

\noindent
Assumption \ref{asmp:randomization} eliminates unmeasured confounding for both the treatment-mediator and the treatment-outcome relationships, and is guaranteed by cluster randomization. Assumption \ref{asmp:superpopulation} extends \citet{Wang2024} and conceptualizes a super-population of clusters with a finite size of individuals within each cluster.

Without ruling out the potential for informative cluster size \citep{kahan2024demystifying}, we focus on the cluster-average treatment effect $\mathrm{TE_C} = \E \left[ \frac{1}{N_i} \sum_{j=1}^{N_i} \left\{ Y_{ij}(1) - Y_{ij}(0) \right\}   \right]$.
The cluster-average treatment effect can be decomposed into two parts: the natural direct effect (NDE) and the natural indirect effect (NIE). That is, $ \mathrm{TE_C} = \mathrm{NIE_C} + \mathrm{NDE_C} $, where 
\begin{align*}
    \mathrm{NIE_C} &= \E \left[ \frac{1}{N_i} \sum_{j=1}^{N_i} \left\{ Y_{ij}(1, \mathbf{M}_i(1)) - Y_{ij}(1, \mathbf{M}_i(0)) \right\}   \right], \\
    \mathrm{NDE_C} &= \E \left[ \frac{1}{N_i} \sum_{j=1}^{N_i} \left\{ Y_{ij}(1, \mathbf{M}_i(0)) - Y_{ij}(0, \mathbf{M}_i(0)) \right\}   \right].
\end{align*}
Due to within-cluster interference, the NIE can be further decomposed into the individual mediator effect (IME) and spillover mediator effect (SME) as
$ \mathrm{NIE}_{\mathrm{C}} = \mathrm{IME}_{\mathrm{C}} + \mathrm{SME}_{\mathrm{C}}$, where 
\begin{align*}
    \mathrm{IME}_{\mathrm{C}} 
    &= \E \left[ \frac{1}{N_i}\sum_{j=1}^{N_i} \left \{ Y_{ij}(1, M_{ij}(1), \mathbf{M}_{i(-j)}(0)) - Y_{ij}(1, M_{ij}(0), \mathbf{M}_{i(-j)}(0)) \right \} \right], \\
    \mathrm{SME}_{\mathrm{C}} 
    &= \E \left[ \frac{1}{N_i}\sum_{j=1}^{N_i} \left \{ Y_{ij}(1, M_{ij}(1), \mathbf{M}_{i(-j)}(1)) - Y_{ij}(1, M_{ij}(1), \mathbf{M}_{i(-j)}(0)) \right \} \right].
\end{align*}
These estimands have been studied by \citet{cheng2024}, who estimated causal mediation effects in the absence of a post-treatment confounder. 
In this article, we consider a setting with an exposure-induced mediator-outcome confounder $D$, that is, a post-treatment variable affected by treatment that acts as a common cause of the mediator $M$ and the outcome $Y$ (see Figure \ref{fig:DAG_mediators}).
As noted by \citet{Chao2025}, there are at least two complementary perspectives on the role of a post-treatment confounder $D$ in causal mediation analysis: (i) treating $D$ as a post-treatment confounder that precedes the second mediator $M$, and (ii) regarding $D$ as an additional mediator on the pathway from $A$ to $Y$. While \citet{ohnishi2025} considered perspective (ii) by studying two causally unstructured mediators, here we adopt perspective (i). As a context, prior work by \citet{ohnishi2025} indicates that the second intermediate outcome, household dietary diversity, serves as a key mediator, and \citet{cheng2024} focused on this mediator but ignored health check-ups as the post-treatment confounder. Hence, it remains unanswered what the pure mediation effect of household dietary diversity would be when appropriately accounting for post-treatment confounding. More generally, it remains unclear how to properly identify $\mathrm{NIE_C}$ and $\mathrm{SME}_{\mathrm{C}}$ in CRTs when post-treatment confounding is present, thus motivating the proposed methodological framework.

Figure \ref{fig:DAG_mediators} provides a graphical representation of each estimand. The NDE (blue arrow) quantifies the effect of the treatment on the outcome by setting the mediators $\mathbf{M}_i$ to their natural values (the potential values in the absence of the intervention). The IME (orange arrow) quantifies the effect of the treatment on the outcome of unit $j$ in cluster $i$ that operates through the unit's own mediator, $M_{ij}$. The SME (green arrow) quantifies the effect of the treatment on the outcome of unit $j$ in cluster $i$ that operates independently of the unit's own mediator $M_{ij}$ but through the peers' mediators within the same cluster $i$,  $\mathbf{M}_{i(-j)}$. The composition of the orange arrows and green arrows constitutes the NIE, and the composition of all colored arrows constitutes the TE under CRTs.

\subsection{Nonparametric identification}
\label{sec:identification_cwi}
To identify the proposed causal mediation estimands, we introduce additional identification assumptions and then provide the nonparametric identification results. 
\begin{assumption}[Sequential ignorability]\label{asmp:si}  $\mathbf{M}_i(a) \perp \mathbf{D}_i(a') \mid A_i=a, \mathbf{D}_i(a)=\mathbf{d}, \mathbf{C}_i=\mathbf{c}, N_i=n$ and $Y_{ij}(a,\mathbf{m}) \perp \mathbf{M}_i(a') \mid A_i=a, \mathbf{D}_i(a)=\mathbf{d}, \mathbf{C}_i=\mathbf{c}, N_i=n $
for all $ i, j$, $ a,a' \in \{0,1\}$, $\mathbf{d}$, and $\mathbf{m}$ over their valid support. 
\end{assumption}

\begin{assumption}[Cross-world inter-individual independence]
\label{asmp:cond_crossworld_indep_mediators}
Conditional on cluster size, baseline covariates, and post-treatment confounders, we have $M_{ij}(1)\perp\mathbf{M}_{i(-j)}(0)\mid \mathbf{D}_i(1)=\mathbf{d},\ \mathbf{C}_i=\mathbf{c},\ N_i=n$.
\end{assumption}

In the RPS setting, $A_i$ was randomized at the cluster level, which eliminates any confounding associated with treatment by design. Assumption \ref{asmp:si} further extends sequential ignorability \citep{Imai2013} to clustered data with a post-treatment confounder, ruling out unmeasured confounding of the $\mathbf{D} \to \mathbf{M}$ and $\mathbf{M} \to Y$ relations after conditioning on $(A_i,\mathbf{C}_i,N_i)$. 
In particular, the post-treatment confounder $\mathbf{D}_i(a)$, routine health check-ups mandated and verified by program conditionalities, precedes the mediator window and is closely monitored through the management information system. Given this design and measurement timing, conditioning on $(\mathbf{C}_i,N_i,\mathbf{D}_i(a))$ plausibly captures the main determinants of both dietary diversity and child growth that could otherwise confound mediation analyses. 
Assumption \ref{asmp:cond_crossworld_indep_mediators} posits that, given $N_i$, $\mathbf{C}_i$, and $\mathbf{D}_i(1)$, an individual's mediator under treatment is independent of the cluster peers' mediators under control. In RPS, several features make this cross-world independence plausible: (i) cluster-level randomization and program implementation generate a common treated environment, with health-service use captured by $\mathbf{D}_i(1)$; (ii) key cluster-level determinants of food access and norms are likely explained by $\mathbf{C}_i$ and, indirectly, by $N_i$, which can be a proxy for the intensity of within-cluster interactions; and (iii) the control-state mediators for peers are not jointly realized when the cluster is treated, limiting direct mechanisms that link $M_{ij}(1)$ to $\mathbf{M}_{i(-j)}(0)$ beyond what is already conditioned on. Of note, we do not rule out dependence among mediators under the same treatment status and between a unit's own cross-world mediators, and Assumption \ref{asmp:cond_crossworld_indep_mediators} is required to identify SME and IME, but not NIE or NDE.

\begin{theorem}
\label{thm:identification}
        Under Assumption \ref{asmp:sutva}--\ref{asmp:si}, for $a, a' \in \{0,1\}$, $\E\left[\frac{1}{N} \sum_{j=1}^{N}  Y_{\cdot j}(a, \mathbf{M}(a')) \right] $ are partially identified up to the conditional distribution of $\mathbf{D}(a') \mid \mathbf{D}(a), A, \mathbf{C},N $ as follows: 
        \begin{align*}
             &\E\left[\frac{1}{N} \sum_{j=1}^{N}  Y_{\cdot j}(a, \mathbf{M}(a')) \right] \\
             &= \; \E_{\mathbf{C},N} \left[\frac{1}{N} \sum_{j=1}^{N}  \int_{\mathbb{R}^{N}}  \int_{\mathbb{R}^{N}} \int_{\mathbb{R}^{N}} \E\left[ Y_{\cdot j} \mid \mathbf{M}=\mathbf{m}, \mathbf{D}=\mathbf{d}, A=a, \mathbf{C},N \right] \right.\\
            &\left. \quad\quad\quad\quad\quad\quad  dF_{\mathbf{M} \mid \mathbf{D} = \mathbf{d}',  A=a', \mathbf{C},N}(\mathbf{m}) dF_{ \mathbf{D}(a') \mid  \mathbf{D}(a) = \mathbf{d}, A=a, \mathbf{C},N}(\mathbf{d}') dF_{\mathbf{D} \mid A=a, \mathbf{C},N}(\mathbf{d}) \right].
        \end{align*}
        Additionally, under Assumption \ref{asmp:sutva}--\ref{asmp:cond_crossworld_indep_mediators}, $\E \left[ \frac{1}{N}\sum_{j=1}^{(n)}  Y_{\cdot j}(1, M_{\cdot j}(1), \mathbf{M}_{(-j)}(0)) \right] $ is partially identified as follows:
        \begin{align*}
            &\E \left[ \frac{1}{N}\sum_{j=1}^{(n)}  Y_{\cdot j}(1, M_{\cdot j}(1), \mathbf{M}_{(-j)}(0)) \right] \\
            &= \;  
             \E_{\mathbf{C},N} \left [\frac{1}{N} \sum_{j=1}^{N}  \int_{\mathbb{R}^{N}}\int_{\mathbb{R}^{N}}\int_{\mathbb{R}^{N-1}}\int_{\mathbb{R}}   \E\left[ Y_{\cdot j}  \mid A=1, M_{\cdot j}=m, \mathbf{M}_{(-j)}=\mathbf{m}_{N-1}, \mathbf{D}=\mathbf{d}, \mathbf{C},N \right] \right. \\
            &\left. \quad\quad\quad\quad\quad\quad\quad\quad\quad\quad\quad\quad \quad\quad 
            dF_{M_{\cdot j} \mid A=1, \mathbf{D}=\mathbf{d}, \mathbf{C},N}(m) 
            dF_{\mathbf{M}_{(-j)} \mid \mathbf{D} = \mathbf{d}',  A=a', \mathbf{C},N}(\mathbf{m}_{N-1})  \right. \\
            &\left. \quad\quad\quad\quad\quad\quad\quad\quad\quad\quad\quad\quad\quad\quad\quad \quad\quad
            dF_{ \mathbf{D}(a') \mid  \mathbf{D}(a) = \mathbf{d}, A=a, \mathbf{C},N}(\mathbf{d}') 
            dF_{\mathbf{D} \mid A=1, \mathbf{C},N}(\mathbf{d}) \right]. &
        \end{align*}
\end{theorem}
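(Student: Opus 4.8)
The plan is to prove both displays by g-computation, peeling the layered counterfactual chain $\mathbf{D}(a)\to\mathbf{M}(a')\to Y(a,\mathbf{M}(a'))$ from the outside in. At each stage I convert a potential-variable law into an observed-data law by combining cluster randomization (Assumption~\ref{asmp:randomization}, which makes $A_i$ independent of every potential variable and of $(\mathbf{C}_i,N_i)$), consistency and SUTVA (Assumption~\ref{asmp:sutva}), positivity (Assumption~\ref{asmp:superpopulation}, ensuring the conditional laws are well defined), and the relevant half of sequential ignorability (Assumption~\ref{asmp:si}). The one law that cannot be reduced to observables is the cross-world conditional distribution of $\mathbf{D}(a')$ given $\mathbf{D}(a)$, which is exactly why each estimand is only partially identified.

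For the first display I would write $\E[\frac1N\sum_j Y_{\cdot j}(a,\mathbf{M}(a'))]=\E_{\mathbf{C},N}[\frac1N\sum_j \E(Y_{\cdot j}(a,\mathbf{M}(a'))\mid\mathbf{C},N)]$ and integrate out $\mathbf{D}(a)=\mathbf{d}$ at the outermost level; randomization and consistency turn $dF_{\mathbf{D}(a)\mid\mathbf{C},N}$ into the outer factor $dF_{\mathbf{D}\mid A=a,\mathbf{C},N}(\mathbf{d})$. Inside $\E(Y_{\cdot j}(a,\mathbf{d},\mathbf{M}(a'))\mid\mathbf{D}(a)=\mathbf{d},\mathbf{C},N)$ I integrate over $\mathbf{M}(a')=\mathbf{m}$: the outcome half of Assumption~\ref{asmp:si} gives $Y_{\cdot j}(a,\mathbf{m})\perp\mathbf{M}(a')\mid A=a,\mathbf{D}(a)=\mathbf{d},\mathbf{C},N$, which strips $\mathbf{M}(a')$ from the conditioning, and the same independence with $a'=a$ together with consistency re-expresses the mean as the observed regression $\E(Y_{\cdot j}\mid\mathbf{M}=\mathbf{m},\mathbf{D}=\mathbf{d},A=a,\mathbf{C},N)$. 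It then remains to identify the law of $\mathbf{M}(a')$ given $\mathbf{D}(a)=\mathbf{d}$: introducing $\mathbf{D}(a')=\mathbf{d}'$ by iterated expectation produces precisely the unidentified factor $dF_{\mathbf{D}(a')\mid\mathbf{D}(a)=\mathbf{d},A=a,\mathbf{C},N}(\mathbf{d}')$ (with $A=a$ attached by randomization), while the mediator half of Assumption~\ref{asmp:si}, $\mathbf{M}(a')\perp\mathbf{D}(a)\mid A=a',\mathbf{D}(a')=\mathbf{d}',\mathbf{C},N$, plus consistency reduce $dF_{\mathbf{M}(a')\mid\mathbf{D}(a')=\mathbf{d}',\mathbf{D}(a)=\mathbf{d},\mathbf{C},N}$ to the observed $dF_{\mathbf{M}\mid\mathbf{D}=\mathbf{d}',A=a',\mathbf{C},N}(\mathbf{m})$. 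Collecting the three factors in the order $\mathbf{m},\mathbf{d}',\mathbf{d}$ reproduces the claimed expression.

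The second display follows the same template with $a=1$, the only new feature being that the own mediator $M_{\cdot j}(1)$ and the peers' mediators $\mathbf{M}_{(-j)}(0)$ live in different worlds. After integrating out $\mathbf{D}(1)=\mathbf{d}$ and removing both mediators from the outcome regression by Assumption~\ref{asmp:si} and consistency (yielding $\E(Y_{\cdot j}\mid A=1,M_{\cdot j}=m,\mathbf{M}_{(-j)}=\mathbf{m}_{N-1},\mathbf{D}=\mathbf{d},\mathbf{C},N)$), I factorize the joint law of $(M_{\cdot j}(1),\mathbf{M}_{(-j)}(0))$ given $\mathbf{D}(1)=\mathbf{d}$ using the cross-world inter-individual independence of Assumption~\ref{asmp:cond_crossworld_indep_mediators}. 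The $M_{\cdot j}(1)$ factor reduces directly, via randomization and consistency, to $dF_{M_{\cdot j}\mid A=1,\mathbf{D}=\mathbf{d},\mathbf{C},N}(m)$, whereas the $\mathbf{M}_{(-j)}(0)$ factor is handled exactly as the mediator law above: introducing $\mathbf{D}(0)=\mathbf{d}'$ yields the cross-world factor $dF_{\mathbf{D}(0)\mid\mathbf{D}(1)=\mathbf{d},A=1,\mathbf{C},N}(\mathbf{d}')$ and, by the mediator half of Assumption~\ref{asmp:si}, the observed $dF_{\mathbf{M}_{(-j)}\mid\mathbf{D}=\mathbf{d}',A=0,\mathbf{C},N}(\mathbf{m}_{N-1})$.

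The step requiring the most care is the bookkeeping of which treatment arm to attach to each counterfactual law: randomization lets me condition any potential-variable distribution on its own world's value of $A$, but the worlds $a$ and $a'$ must be kept separate so that the outcome regression depends on $\mathbf{D}(a)$ alone and never on the cross-world $\mathbf{D}(a')$. Choosing the integration order so that $\mathbf{D}(a)$ is removed first and $\mathbf{D}(a')$ enters only through the law of the mediator is exactly what makes the stated form of Assumption~\ref{asmp:si} (whose outcome part conditions on $\mathbf{D}(a)$ only) sufficient. The other delicate point, specific to the second display, is that stripping $M_{\cdot j}(1)$ and $\mathbf{M}_{(-j)}(0)$ from the outcome mean \emph{simultaneously} requires the cross-world independences of Assumption~\ref{asmp:si} to hold jointly across worlds (as under an NPSEM), not merely pairwise for each fixed $a'$; I would state this joint reading explicitly before invoking it.
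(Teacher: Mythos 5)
Your proposal is correct and follows essentially the same route as the paper's proof: iterated expectations peeling off $\mathbf{D}(a)$ first (converted to the observed law by randomization and consistency), then reduction of the outcome regression via the outcome half of sequential ignorability plus consistency, then identification of the mediator law by introducing $\mathbf{D}(a')$ so that the only non-identified factor is $F_{\mathbf{D}(a')\mid\mathbf{D}(a),A,\mathbf{C},N}$, with Assumption~\ref{asmp:cond_crossworld_indep_mediators} used exactly as in the paper to factorize the mixed-world pair $(M_{\cdot j}(1),\mathbf{M}_{(-j)}(0))$ in the second display. Your closing caveat—that stripping $M_{\cdot j}(1)$ and $\mathbf{M}_{(-j)}(0)$ from the outcome regression simultaneously requires a joint, not merely pairwise, reading of Assumption~\ref{asmp:si}—is a fair point about a step the paper's own proof invokes without comment, and stating it explicitly strengthens rather than departs from the argument.
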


\noindent
This identification result generalizes the result of \citet{cheng2024} in the presence of a post-treatment confounder. 
This theorem shows that the estimands are only partially identified because the joint distribution of $\mathbf{D}(a) $ and $ \mathbf{D}(a'))$ is not point identified from the observed data, as these two potential quantities are never jointly observed. In particular, the correlation parameter linking them cannot be estimated from the data.
Notably, this identification result does not require the cross-world independence assumption about the post-treatment confounders as in \citet{ohnishi2025}. Instead, the identification formula leaves the cross-world conditional law $F_{\mathbf{D}(a') \mid \mathbf{D}(a)=\mathbf{d},A=a,\mathbf{C},N}$ unrestricted, thereby relaxing the conditional cross-world independence assumption of \citet{ohnishi2025} and isolating the sole unidentifiable component to this cross-world dependence structure. In our implementation, we model this component through a Gaussian copula indexed by a sensitivity parameter $\rho$, which enables a transparent sensitivity analysis by showing how causal conclusions vary across scientifically plausible cross-world dependence structure of the post-treatment confounders; see Section \ref{sec:sensitivity_analysis_introduction} for a detailed discussion.
Except for the cross-world conditional law $F_{ \mathbf{D}(a') \mid  \mathbf{D}(a) = \mathbf{d}, A=a, \mathbf{C},N}$, all other components can be expressed in terms of the observed data. Therefore, this identification formula motivates a g-computation approach based on models for the observed quantities, such as the outcome, mediator, and post-treatment confounder. Details are provided in Supplementary Material D.

\section{Bayesian Nonparametric Causal Mediation Analysis in CRTs}
\label{sec:BNP}
\subsection{Nested common atoms enriched Dirichlet process}
\label{sec:CA-EDP}
Although multilevel parametric models \citep{VanderWeele2013_CRT, Forastiere2016} are common for hierarchical data,
they remain highly susceptible to bias whenever the functional form is misspecified.
The nDDP \citep{ohnishi2025} extends the classical nDP \citep{Rodriguez2008} by allowing the weights and atoms of the nDP to depend on cluster- and individual-level covariates, while maintaining a hierarchical grouping mechanism that simultaneously quantifies uncertainty and facilitates information-sharing across multiple levels of grouped data. 
Despite its flexibility, the nDDP faces two inferential challenges. First, it is prone to the \emph{degeneracy} problem \citep{Camerlenghi2019_2}: if two distributions share even a single atom, they are assigned to the same mixture component. This issue, inherent to nDP priors, can constrain their flexibility in applications to CRTs where clusters often share common structure but differ in sparse, site-specific ways. Degeneracy also leads to over-pooling and reduces the effective number of independent mixture components, which in turn underestimate between-cluster variability and attenuate treatment contrasts.

In addition, estimating the causal mediation and spillover effects in Section \ref{sec:identification_cwi} requires a  model for the joint distribution $\pr( \mathbf{Y}_i, \mathbf{M}_i, \mathbf{D}_i, \mathbf{X}_i,\mathbf{V}_i, A_i,N_i)$. By contrast, the nDDP models the conditional outcome distribution $\pr(\mathbf{Y}_i, \mathbf{M}_i, \mathbf{D}_i \mid \mathbf{X}_i, \mathbf{V}_i, N_i)$ and estimates $\pr(\mathbf{X}_i,\mathbf{V}_i, N_i)$ with its empirical distribution. Consequently, the inferential target shifts from population-level treatment effects to its approximation, conditional mixed treatment effects \citep{FanLi2023}. 
This practice may fail to fully propagate uncertainty in the covariate distribution with small samples and a limited number of clusters, as is common in CRTs, the empirical distribution can be a suboptimal approximation to the population distribution. We discuss these trade-offs in the Supplementary Material Section C.

To allow for information sharing across clusters without the degeneracy issue and avoiding the dominance of potentially high-dimensional covariates, we introduce a new BNP prior for CRTs, the Nested Common Atom Enriched Dirichlet Process (CA-EDP). The CA-EDP retains the hierarchical strengths of the nDP, while avoiding the degeneracy issue by explicitly modeling the commonality of atoms between subgroups, mirroring the structure of the Common Atom Model (CAM) \citep{Denti2023}. 
Within the CAM structure, the CA-EDP incorporates the enrichment strategy of the Enriched Dirichlet Process (EDP) \citep{Wade2011} for within-group distribution, where the distribution of covariate parameters $\boldsymbol{\phi}_{ij}$ is nested within the groups of the response parameters $\boldsymbol{\theta}_{ij}$. This strategy stabilizes inference for the joint distributions of the outcome and potentially high-dimensional covariates by avoiding the unfavorable random partition, 
dominated by the high-dimensional covariates \citep{Wade2014}.
Specifically, we consider the following hierarchical model:
\begin{equation}
\label{eq:model}
    \begin{split}
        Y_{ij} \mid \mathbf{M}_{i}, \mathbf{D}_{i},\mathbf{X}_{ij}, \mathbf{V}_i, N_i, A_i; \boldsymbol{\theta}_{ij}^{(y)} &\sim f(y_{ij} \mid  \mathbf{m}_{i}, \mathbf{d}_{i}, \mathbf{x}_{ij}, \mathbf{v}_i, n_i, a_i; \boldsymbol{\theta}_{ij}^{(y)}), \\
        M_{ij} \mid \mathbf{D}_{i},\mathbf{X}_{ij}, \mathbf{V}_i, N_i, A_i; \boldsymbol{\theta}_{ij}^{(m)} &\sim f(m_{ij} \mid  \mathbf{d}_{i}, \mathbf{x}_{ij}, \mathbf{v}_i, n_i, a_i; \boldsymbol{\theta}_{ij}^{(m)}), \\
        D_{ij} \mid \mathbf{X}_{ij}, \mathbf{V}_i, N_i, A_i; \boldsymbol{\theta}_{ij}^{(d)} &\sim f( d_{ij} \mid \mathbf{x}_{ij}, \mathbf{v}_i, n_i, a_i; \boldsymbol{\theta}_{ij}^{(d)}), \\
        \mathbf{X}_{ij} \mid \mathbf{V}_i, N_i; \boldsymbol{\phi}_{ij} &\sim f( \mathbf{x}_{ij} \mid \mathbf{v}_i, n_i; \boldsymbol{\phi}_{ij}),  \\
        (\boldsymbol{\theta}_{ij}, \boldsymbol{\phi}_{ij}) \mid F_i &\sim F_i, \\
        \mathbf{V}_{i} \mid N_i; \boldsymbol{\eta}_i^{(v)} &\sim f( \mathbf{v}_i \mid n_i; \boldsymbol{\eta}_{i}^{(v)}), \\
        N_i \mid \boldsymbol{\eta}_{i}^{(n)} &\sim f(  n_i \mid \boldsymbol{\eta}_{i}^{(n)}),\\
        (F_i, \boldsymbol{\eta}_i) &\sim \sum_{k=1}^{\infty} \pi^{*}_k  \delta_{(F^{*}_{k}, \boldsymbol{\eta}_k^*)},
    \end{split}
\end{equation}
where $\boldsymbol{\theta}_{ij}=(\boldsymbol{\theta}_{ij}^{(y)},\boldsymbol{\theta}_{ij}^{(m)},\boldsymbol{\theta}_{ij}^{(d)})$, 
$\boldsymbol{\eta}_i = (\boldsymbol{\eta}_{i}^{(n)},\boldsymbol{\eta}_{i}^{(v)})$, $\pi^{*}_k = s^{*}_k \prod_{j=1}^{k-1}(1- s^{*}_j)$, $s^{*}_j \sim \mathrm{Beta}(1,\astar)$, $\boldsymbol{\eta}_k^* \sim H_{\eta}$ for a base measure $H_{\eta}$ defined over $\mathcal{H}$. 

An important feature of the proposed model is that the within-cluster distribution $F_{i}$ and parameters for the cluster-level covariates $\boldsymbol{\eta}_i$ constitute the atoms for the first layer stick-breaking representation, enabling joint modelling of  cluster-level variables with individual-level variables. 
Another important feature is that each atomic distribution $F^{*}_{k}$ is constructed using an EDP-style formulation, which creates a mixture over separate sets of atoms for the outcome parameters and the covariate parameters:
$F^{*}_{k}(\cdot) = \sum_{l=1}^{\infty}\sum_{m=1}^{\infty} w^{\theta}_{kl}w^{\phi}_{klm} \delta_{\boldsymbol{\theta}^{*}_{l}}(\cdot)\delta_{\boldsymbol{\phi}^{*}_{m}}(\cdot)$ 
with  
$w_{kl}^{\theta} = v_{kl}^{\theta} \prod_{j<l}(1-v_{kj}^{\theta})$, 
$v_{kl}^{\theta} \sim \mathrm{Be}(1,\atheta)$, $w_{k1}^{\theta}=v_{k1}^{\theta}$ for each $k$,   
$w_{klm}^{\phi} = v_{klm}^{\phi} \prod_{j<m}(1-v_{klj}^{\phi})$, 
$v_{klm}^{\phi} \sim \mathrm{Be}(1,\aphi)$, $w_{kl1}^{\phi}=v_{kl1}^{\phi}$ for each $k,l$.
A critical feature of this formulation, and  the EDP, is that the mixture weights for the covariate parameters, $w^{\phi}_{klm}$, has subpartitions indexed by $l$ nested under the partition $k$. This dependence nests the covariate-parameter mixture for $\boldsymbol{\phi}$ within the response groups defined by $\boldsymbol{\theta}$, allowing the covariate distribution to adapt to (rather than dominate) the response model. 
Additionally, to avoid degeneracy of the nested prior, we adopt the CAM strategy within the EDP-induced random partition; in particular, we draw the common atoms from shared base measures $G_{\theta}$ and $G_{\phi}$: $\boldsymbol{\theta}^{*}_{l}\sim G_{\theta}$, $\boldsymbol{\phi}^{*}_{m}\sim G_{\phi}$.
The EDP-style formulation induces a prior for the random joint distribution $G_0$ through the joint law of the marginal and conditionals and the mapping $(G_{\theta}, G_{\phi}) \to \int G_{\phi}(\cdot \mid \theta ) dG_{\theta}$. 
Then, the prior is parameterized by the base measure $G_0$, defined by $G_0(A \times B) = \int_A G_{\phi}(B \mid \theta ) dG_{\theta}$.

The CA-EDP has three concentration parameters $\astar$, $\atheta$ and $\aphi$, while the nDDP, CAM and EDP only have two concentration parameters. The number of groups for cluster-level variables ($N_i$ and $\mathbf{V}_i$) depends on $\astar$, the number of groups for outcomes, mediators and post-treatment confounders ($Y_{ij}$, $M_{ij}$ and $D_{ij}$) depends on $\atheta$, and the number of groups for individual-level covariates ($\mathbf{X}_{ij}$) depends on $\aphi$. In what follows, we use $v$-group, $y$-group, and $x$-group to represent different groups induced by CA-EDP for simplicity. 
The two nested levels induced by three concentration parameters serve a distinct purpose. At the first level, the random partition generated by $\pi^{*}_k$ (inspired by the CAM construction) captures cluster heterogeneity, while allowing for information sharing across clusters. Each atom of this Dirichlet process is a pair $(F_i, \boldsymbol{\eta}_i)$, where $\boldsymbol{\eta}_i$ is the parameter for the cluster‑level covariate distribution and $F_i$ represents the data‑generating distribution of individual observations within cluster $i$.
Then, the second level of nested structure is introduced into $F_i$ (inspired by the EDP construction),  which avoids the random partition within a subgroup being dominantly determined by the high-dimensional covariates \citep{Wade2014}, while explicitly modeling the commonality of atoms between subgroups (inspired by the CAM).

To operationalize this prior for causal mediation analysis of CRTs, within each of the above groups, we assume (simple) generalized linear models (GLM) for $Y_{ij}$, $M_{ij}$, $D_{ij}$, $\mathbf{X}_{ij}$, and $\mathbf{V}_{i}$.
For example, if $Y_{ij}$ is continuous, we specify $Y_{ij} \mid \mathbb{C}_{ij}^{(y)}; \boldsymbol{\beta}^{(y)}, \sigma^{(y)} \sim \mathrm{N}(\mathbb{C}_{ij}^{(y)}\boldsymbol{\beta}^{(y)}, (\sigma^{(y)})^2)$, where $\boldsymbol{\theta}^{(y)} = (\boldsymbol{\beta}^{(y)}, \sigma^{(y)})$ and $\mathbb{C}_{ij}^{(y)}=(g^{(m)}(\mathbf{M}_{i}), g^{(d)}(\mathbf{D}_{i}), \mathbf{X}_{ij}, \mathbf{V}_i, N_i, A_i)$ with $g^{(m)}(\cdot)$ and $g^{(d)}(\cdot)$ being prespecified summary functions of within-cluster mediators and post-treatment confounders. Since the dimensions of $\mathbf{M}_i$ and $\mathbf{X}_i$ can vary across clusters in
CRTs, we consider adjusting for summary functions with fixed dimensions in the models. Specifically, we consider a bivariate summary function
$g^{(m)}(\mathbf{M}_{i})=\left\{ M_{ij}, \frac{1}{ |\mathbf{M}_{i}|- 1} \sum_{\substack{k=1 \\ k \ne j}}^{|\mathbf{M}_{i}|} M_{ik} \right\}$ and 
$g^{(d)}(\mathbf{D}_{i})=\left\{ D_{ij}, \frac{1}{ |\mathbf{D}_{i}|- 1} \sum_{\substack{k=1 \\ k \ne j}}^{|\mathbf{D}_{i}|} D_{ik} \right\}$ \citep[e.g., ][]{Ogburn2024,cheng2024, Ohnishi2025_DOI}. The models for $Y_{ij}$ incorporate the summaries of the full vectors $ \mathbf{M}_{i}$ and $\mathbf{D}_{i}$ to capture within-cluster peer effects.
Similarly, we assume GLMs for $M_{ij}$ and $D_{ij}$ with designed vectors 
$\mathbb{C}_{ij}^{(m)}=(g^{(d)}(\mathbf{D}_{i}), \mathbf{X}_{ij}, \mathbf{V}_i, N_i, A_i)$ and 
$\mathbb{C}_{ij}^{(d)}=(\mathbf{X}_{ij}, \mathbf{V}_i, N_i, A_i)$. Additionally, we assume $N_i \mid \lambda_{i}^{(n)} \sim \mathrm{Pois}(\lambda_{i}^{(n)})$, where $\boldsymbol{\eta}_{i}^{(n)} = \lambda_{i}^{(n)}$.
For the base measures, we consider the conjugate prior specification. 
In particular, let 
\begin{equation}
\label{eq:base_measure_hyperparameters}
\begin{split}
    G_{\theta} =& \mathrm{MVN}(\boldsymbol{\mu}_{\boldsymbol{\beta}^{(y)}},\boldsymbol{\Sigma}_{\boldsymbol{\beta}^{(y)}})\mathrm{IG}(a_{\sigma^{(y)}},b_{\sigma^{(y)}}) 
    \mathrm{MVN}(\boldsymbol{\mu}_{\boldsymbol{\beta}^{(m)}},\boldsymbol{\Sigma}_{\boldsymbol{\beta}^{(m)}})\mathrm{IG}(a_{\sigma^{(m)}},b_{\sigma^{(m)}}) \\
    & \mathrm{MVN}(\boldsymbol{\mu}_{\boldsymbol{\beta}^{(d)}},\boldsymbol{\Sigma}_{\boldsymbol{\beta}^{(d)}})\mathrm{IG}(a_{\sigma^{(d)}},b_{\sigma^{(d)}}), \text{ and } G_{\phi} = \mathrm{N}(\mu_{x},\boldsymbol{\Sigma}_{x})\mathrm{IG}(a_{\sigma^{x}},b_{\sigma^{x}}).
\end{split}
\end{equation}

\noindent
We discuss the choice of each hyperparameter in simulation studies and the data analysis.

\subsection{Distributional properties}
\label{sec:properties}

This section provides distributional properties for the CA-EDP model. All proofs are deferred to Supplementary Material A.2. 
First, the following properties hold for $i \neq i'$: $\pr(F_{i} = F_{i'})  = \pr(\boldsymbol{\eta}_{i} = \boldsymbol{\eta}_{i'}) = \frac{1}{1+\astar}$. 
This is the probability that clusters $i$ and $i'$ share the same random measure and cluster-level atom. It follows from the top-level stick-breaking representation. The probabilities of a tie between atoms from the  measures for two cluster $i$ and $i'$, $F_{i}, F_{i'}$, are 
\begin{align*}
        q_{\theta} &:= \pr(\boldsymbol\theta_{ij} = \boldsymbol\theta_{i'j'})  = \frac{1}{1+\astar}\qty(\frac{1}{1+\atheta} + \frac{\astar}{1+2\atheta}), \\  
        q_{\phi} &:= \pr(\boldsymbol\phi_{ij} = \boldsymbol\phi_{i'j'})  = \frac{1}{(1+\astar)(1+\atheta)}\qty(\frac{1}{1+\aphi} + \frac{\astar + \atheta +\astar \atheta}{1+2\aphi}),\\  
        q_{\theta, \phi} &:= \pr(\boldsymbol\theta_{ij} = \boldsymbol\theta_{i'j'}, \boldsymbol\phi_{ij} = \boldsymbol\phi_{i'j'}) = \frac{1}{1+\astar}\qty(\frac{1}{(1+\atheta)(1+\aphi)} + \frac{\astar}{(1+2\atheta)(1+2\aphi)}). 
\end{align*}
Compared with $\pr(\boldsymbol\theta_{ij} = \boldsymbol\theta_{i'j'})  = \frac{1}{(1+\astar)(1+\atheta)}$ of the nDP \citep{Rodriguez2008}, the second term of $q_{\theta}$ shows the allowance for the information borrowing across clusters. The tie probabilities $q_{\phi}$ and $q_{\theta, \phi}$ further indicate that the prior clusters $x$-group data and allows for information sharing across clusters. 

Next, we further investigate the correlation between two EDP measures, $F_{i}$ and $F_{i'}$, on different Borel sets $(A_{\theta},A_{\phi}),(B_{\theta},B_{\phi}) \in \Theta \times \Phi$. The covariance and correlation are useful quantities to investigate the dependence across random probability measures and their suitability for practical applications. 
Let 
$\Delta_{\theta}(A_{\theta}, B_{\theta}) = G_{\theta}(A_{\theta} \cap B_{\theta}) - G_{\theta}(A_{\theta})G_{\theta}(B_{\theta})$ 
and 
$\Delta_{\phi}(A_{\phi}, B_{\phi}) = G_{\phi}(A_{\phi} \cap B_{\phi}) - G_{\phi}(A_{\phi})G_{\phi}(B_{\phi})$. The covariance of $F_{i}$ and $F_{i'}$ on different Borel
sets can be characterized as: 
\begin{align*}
    \cov\left( F_{i}\left( A_{\theta}, A_{\phi} \right),F_{i'} \left( B_{\theta}, B_{\phi} \right) \right) 
        &= q_{\theta}  G_{\phi}(A_{\phi})G_{\phi}(B_{\phi}) \Delta_{\theta}(A_{\theta}, B_{\theta})  \\
        + q_{\phi} &  G_{\theta}(A_{\theta})G_{\theta}(B_{\theta}) \Delta_{\phi}(A_{\phi}, B_{\phi}) 
        + q_{\theta, \phi}  \Delta_{\theta}(A_{\theta}, B_{\theta}) \Delta_{\phi}(A_{\phi}, B_{\phi}).
\end{align*}
This covariance structure is composed of three terms, corresponding to the scenarios where heterogeneity arises from the $\boldsymbol\theta$ parameters, the $\boldsymbol\phi$ parameters, or both, reflecting the hierarchical nature of the model. For $A_{\phi}=B_{\phi}=\Phi$, we have $\cov \left(F_{i} \left(A_{\theta}, \Phi \right),F_{i'} \left( B_{\theta}, \Phi  \right) \right) = q_{\theta}  \Delta_{\theta}(A_{\theta}, B_{\theta})$, 
with the correlation on the same set $A_{\theta}$ given by 
\begin{equation*}
    \varrho_{\theta} = \corr \left(F_{i} ( A_{\theta}, \Phi ),F_{i'} ( A_{\theta}, \Phi ) \right) = 1 - \frac{\astar}{1+\astar}\frac{\atheta}{1+2\atheta}.
\end{equation*}
These covariance and correlation expressions reduce to Equation (8) and (9) in \citet{Denti2023}, respectively. Similarly, for $A_{\theta}=B_{\theta}=\Theta$, we have $\cov \left(F_{i} \left( \Theta, A_{\phi} \right),F_{i'} \left( \Theta, B_{\phi} \right) \right) = q_{\phi}  \Delta_{\phi}(A_{\phi}, B_{\phi})$, and 
the correlation on the same set $A_{\phi}$ is given by 
\begin{equation*}
    \varrho_{\phi} = \corr \left(F_{i} ( \Theta, A_{\phi} ),F_{i'} ( \Theta, A_{\phi} ) \right) = 1 - \frac{\astar}{1+\astar}\frac{\aphi}{1+\atheta + \atheta \aphi + 2\aphi}.
\end{equation*}
We can see both $\varrho_{\theta}, \varrho_{\phi} \in (1/2,1)$, due to the sharing of the atoms. 
For the nDP and nDDP, the correlation does not depend on $\atheta$, as they assume independence between atoms in separate distributions.

Finally, we examine the random partition structure induced by our model. The partially exchangeable partition probability function (pEPPF; \citet{Camerlenghi2019}) provides a convenient formalism. 
Let $\widetilde{F}_i := F_i \otimes \delta_{\eta_i}$ be a random product measure on $\Theta \times \Phi \times \mathcal{H}$ induced by the CA-EDP model.
For simplicity, set $\boldsymbol\xi_{ij}=(\boldsymbol\theta_{ij},\boldsymbol\phi_{ij},\boldsymbol\eta_{i})$. 
Let $\boldsymbol\xi_1^{*},\ldots,\boldsymbol\xi_s^{*}$ denote the $s$ distinct values observed in $\{ \boldsymbol\xi_{ij} \}$, and for each group $i$ let $\mathbf{n}_i=(n_{i1},\ldots,n_{is})$ be the vector of their within-group frequencies. Following \citet{Camerlenghi2019}, the pEPPF is the probability of this specific allocation of observations to clusters, defined as $\Pi^{(s)} (\mathbf{n}_1,\ldots,\mathbf{n}_I) = \E \int_{\Xi} \prod_{i=1}^{I}\prod_{j=1}^{s} d\widetilde{F}_{i}^{n_{ij}}(d\boldsymbol\xi_{j}^{*}),$
where $\E$ represents the marginal expectation with respect to the random measure $\widetilde{F}_i$. When $I=1$, this reduces to the usual exchangeable partition probability function (EPPF) for an individual sample \citep{Pitman1995}, denoted here by $\Psi^{(s)}(\cdot)$. 
For our model, the pEPPF can be decomposed based on whether groups share the same
underlying atom-generating distribution $F_{k}^{*}$.
\begin{proposition}
\label{prop:peppf}
For $I=2$, the pEPPF of the proposed model is given by:
    \begin{equation}
    \label{eq:peppf_CA-EDP}
        \Pi^{(s)} (\mathbf{n}_1,\mathbf{n}_2) = q_1 \Psi^{(s)}(\mathbf{n}_1 + \mathbf{n}_2) + (1-q_1) \int_{\Xi} \E \prod_{i=1}^{2}\prod_{j=1}^{s} \widetilde{F}_{i}^{n_{ij}}(d\boldsymbol\xi_{j}^{*}),
    \end{equation}
    where $q_1 = (1+\astar)^{-1}$ is the probability that both groups share the same atom $(F_{k}^{*}, \boldsymbol\eta_{k}^{*})$. The integral in Equation \ref{eq:peppf_CA-EDP} is strictly positive.
\end{proposition}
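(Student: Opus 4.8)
The plan is to establish the decomposition by conditioning on the top-level random measure $Q$ and on the atom assignments of the two groups, and then collapsing each case separately. First I would invoke the tower property to write $\Pi^{(s)}(\mathbf{n}_1,\mathbf{n}_2)$ as an expectation, given the pair of random product measures $(\widetilde{F}_1,\widetilde{F}_2)$, of the conditional allocation probability. Conditional on these measures the within-group parameters $\boldsymbol\xi_{ij}$ are i.i.d.\ draws from $\widetilde{F}_i$, so the probability of the prescribed allocation factorizes as $\prod_{i=1}^{2}\prod_{j=1}^{s}\widetilde{F}_i(\{\boldsymbol\xi_j^{*}\})^{n_{ij}}$, which matches the integrand in \eqref{eq:peppf}.

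Next I would integrate out $(\widetilde{F}_1,\widetilde{F}_2)$. Since $(F_1,\boldsymbol\eta_1)$ and $(F_2,\boldsymbol\eta_2)$ are i.i.d.\ draws from the discrete law $Q$, I would partition the sample space according to whether both groups are assigned to a common top-level atom of $Q$. By the top-level stick-breaking tie probability stated earlier in this section, $\pr(\widetilde{F}_1=\widetilde{F}_2)=q_1=(1+\astar)^{-1}$, while on the complement the two groups use distinct atoms $\widetilde{F}_k^{*}\neq\widetilde{F}_{k'}^{*}$.

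The two events collapse differently, and this is the crux of the decomposition. On $\{\widetilde{F}_1=\widetilde{F}_2=\widetilde{F}_k^{*}\}$, all $N_1+N_2$ parameters are exchangeable i.i.d.\ draws from the single measure $\widetilde{F}_k^{*}$; merging the two groups therefore reduces the allocation probability to the single-sample EPPF evaluated at the pooled frequencies, and taking the expectation over the EDP law of $\widetilde{F}_k^{*}$ (whose distribution does not depend on the index $k$) yields $\Psi^{(s)}(\mathbf{n}_1+\mathbf{n}_2)$, giving the first term $q_1\,\Psi^{(s)}(\mathbf{n}_1+\mathbf{n}_2)$. On the complementary event, whose probability is $1-q_1$, the two groups draw from distinct EDP atoms and the conditional law is exactly the second integral in \eqref{eq:peppf_CA-EDP}.

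The final and most delicate step is showing that this second integral is strictly positive, which is where the CAM construction is essential. Although $F_k^{*}$ and $F_{k'}^{*}$ are distinct as random measures, they are built from the \emph{shared} lower-level atoms $\boldsymbol\theta_l^{*}\sim G_{\theta}$ and $\boldsymbol\phi_m^{*}\sim G_{\phi}$, so both assign positive mass to a common support. Consequently a draw from $F_k^{*}$ and a draw from $F_{k'}^{*}$ coincide with strictly positive probability, and every allocation with the prescribed frequencies—including configurations that require cross-group ties $\boldsymbol\xi_j^{*}$ with $n_{1j}>0$ and $n_{2j}>0$—receives positive mass, so the integral is strictly positive. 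I expect the main obstacle to be the careful bookkeeping of the two-level EDP-within-CAM nesting when writing the conditional allocation law on the distinct-atom event and making the positivity argument fully rigorous; this is precisely the point at which the model departs from the nDP, whose distinct top-level atoms are independent diffuse measures with zero cross-group tie probability and hence force degeneracy.
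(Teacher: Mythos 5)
Your proof is correct and follows essentially the same route as the paper: the paper's proof simply defers to Theorem 1 and Proposition 2 of \citet{Denti2023}, whose argument is exactly your conditioning on the top-level clustering event $\{\widetilde{F}_1=\widetilde{F}_2\}$ (probability $q_1=(1+\astar)^{-1}$), collapsing the tie event to the pooled-sample EPPF, and using the shared CAM atoms to establish strict positivity of the cross-group term. You have in effect written out the details the paper leaves implicit, including the key observation that the top-level atoms are independent of the stick-breaking weights (so conditioning on the tie does not alter the law of the common measure), which is the step that makes the first term equal $q_1\Psi^{(s)}(\mathbf{n}_1+\mathbf{n}_2)$.
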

\noindent
Proposition \ref{prop:peppf} is proved by conditioning on the top-level grouping structure, following \citet{Denti2023}. The first term corresponds to the event that groups 1 and 2 are assigned to the same top-level group, under which all observations are fully exchangeable. The second term, weighted by the probability $1-q_1$ that the groups fall in different groups, is what enforces partial exchangeability. In particular, the strict positivity of the integral term keeps the pEPPF from collapsing to the EPPF, thereby preventing degeneracy to full exchangeability.
The nested EDP-type specification within the CAM makes the partition probabilities depend on group labels, extending the CAM partial-exchangeability result of \citet{Denti2023}. Because the model does not collapse to full exchangeability, it can flexibly capture both within-group and between-group heterogeneity as intended.

\subsection{Posterior inference}
\label{sec:posterior_inference}
To generate posterior samples, we use a blocked Gibbs sampler \citep{Ishwaran2000} based on a three-level truncation approximation for the CA-EDP, following the truncation approximation for an EDP considered by \citet{Burns2023}. The blocked Gibbs sampler offers potential improvement in mixing and simple implementation.
To implement the algorithm, we first choose conservative upper bounds $K$, $L$, and $M$ for the numbers of $v$-groups, $y$-groups, and $x$-groups, respectively.
The algorithm then uses data augmentation, iterating between sampling model parameters given nested group membership and sampling group memberships given the model parameters. 
For each posterior sample, we compute the corresponding posterior sample of the estimands using  g-computation based on Theorem \ref{thm:identification}.
The full details of the algorithm are provided in the Supplementary Material Section D.

The finite approximation allows a tractable Gibbs sampler and posterior samples of the distributions needed for g-computation. However, it is important to quantify the error introduced by approximating the infinite-dimensional process. Let $\widetilde{F}_i^{(K,L,M)}$ be the finite-dimensional approximation of $\widetilde{F}_i$, defined as the random measure corresponding to the truncated generative process with truncation levels $K$,
$L$, and $M$. The following theorem bounds the distance between these two measures in expectation.

\begin{theorem} 
\label{thm:finite_approximation}
The expected total variation distance between the true random measure $\widetilde{F}_i$ and its finite approximation $\widetilde{F}_i^{(K,L,M)}$ is bounded as follows:
    \begin{equation}
        \E\qty[TV(\widetilde{F}_i, \widetilde{F}_i^{(K,L,M)})] \leq \qty(\frac{\astar}{1+\astar})^K + \qty(\frac{\atheta}{1+\atheta})^L + \qty(\frac{\aphi}{1+\aphi})^M.
    \end{equation}
\end{theorem}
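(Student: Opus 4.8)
The plan is to realize $\widetilde{F}_i$ and its truncation $\widetilde{F}_i^{(K,L,M)}$ on a common probability space, built from the \emph{same} stick-breaking variables $\{s_k^*\}$, $\{v_{kl}^{\theta}\}$, $\{v_{klm}^{\phi}\}$ and the same atoms $\{\boldsymbol\theta_l^*\},\{\boldsymbol\phi_m^*\},\{\boldsymbol\eta_k^*\}$, so that $TV(\widetilde{F}_i,\widetilde{F}_i^{(K,L,M)})$ is a genuine random variable whose expectation we can bound. Writing $z_i$ for the top-level index with $(F_i,\boldsymbol\eta_i)=(F^*_{z_i},\boldsymbol\eta^*_{z_i})$, I would first peel off the three truncations one level at a time via the triangle inequality,
\[
TV(\widetilde{F}_i,\widetilde{F}_i^{(K,L,M)}) \le TV(\widetilde{F}_i,\widetilde{F}_i^{(K)}) + TV(\widetilde{F}_i^{(K)},\widetilde{F}_i^{(K,L)}) + TV(\widetilde{F}_i^{(K,L)},\widetilde{F}_i^{(K,L,M)}),
\]
where the intermediate measures truncate only the top level, then additionally the $\boldsymbol\theta$-level, then additionally the $\boldsymbol\phi$-level. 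This reduces everything to a separate single-level truncation bound for each of the three nested stick-breaking processes.

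For the top level, under the absorbing coupling that leaves the assignment unchanged whenever $z_i\le K$ and otherwise reassigns it, $\widetilde{F}_i$ and $\widetilde{F}_i^{(K)}$ coincide on $\{z_i\le K\}$, so $TV(\widetilde{F}_i,\widetilde{F}_i^{(K)})\le \mathbf{1}\{z_i>K\}$. Since the discarded top mass is $\sum_{k>K}\pi^*_k=\prod_{k=1}^{K}(1-s^*_k)$ with $s^*_k$ i.i.d.\ $\mathrm{Beta}(1,\astar)$, taking expectations yields $\E[TV(\widetilde{F}_i,\widetilde{F}_i^{(K)})]\le \pr(z_i>K)=\E\!\left[\prod_{k=1}^{K}(1-s^*_k)\right]=\left(\astar/(1+\astar)\right)^{K}$, using $\E[1-\mathrm{Beta}(1,\astar)]=\astar/(1+\astar)$ and independence across $k$.

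For the remaining two levels, note that on $\{z_i\le K\}$ the $\boldsymbol\eta$-coordinate is shared, so both differences live entirely inside the selected atom $F^*_{z_i}$. A direct computation of the total variation between a stick-breaking measure and its absorbing truncation shows each equals the corresponding discarded tail mass: for the $\boldsymbol\theta$-level it is $\sum_{l>L}w^{\theta}_{z_i l}=\prod_{l=1}^{L}(1-v^{\theta}_{z_i l})$, and for the $\boldsymbol\phi$-level, summed over the retained $\boldsymbol\theta$-components, it is $\sum_{l\le L} w^{\theta,(L)}_{z_i l}\prod_{m=1}^{M}(1-v^{\phi}_{z_i l m})$; crucially, the nested $\boldsymbol\phi$-weights integrate to one within each $\boldsymbol\theta$-component, so the $\boldsymbol\theta$-truncation mass is unaffected by the $\boldsymbol\phi$-structure. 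Taking expectations and using that the three families of Beta sticks are mutually independent and independent of $z_i$, the $\boldsymbol\theta$-term gives $\E\!\left[\prod_{l=1}^{L}(1-v^{\theta}_{1l})\right]=\left(\atheta/(1+\atheta)\right)^{L}$, while in the $\boldsymbol\phi$-term the factorization across independent sticks combined with $\sum_{l\le L}w^{\theta,(L)}_{z_i l}=1$ collapses the weight sum to one, leaving exactly $\left(\aphi/(1+\aphi)\right)^{M}$. Summing the three contributions produces the stated bound.

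The main obstacle I anticipate is setting up the coupling carefully enough that the three truncations decouple additively: verifying that top-level truncation alters $\widetilde{F}_i$ only through the reassignment event $\{z_i>K\}$, and that the absorbing truncation of the nested $\boldsymbol\theta$/$\boldsymbol\phi$ sticks yields \emph{exactly} the tail-mass TV rather than merely up to a constant, while correctly handling the EDP feature that the $\boldsymbol\phi$-partition is nested within the $\boldsymbol\theta$-partition. Once each per-level TV is expressed as a product of independent $1-\mathrm{Beta}(1,\cdot)$ factors, the geometric bounds are immediate from $\E[1-\mathrm{Beta}(1,\alpha)]=\alpha/(1+\alpha)$ and the independence across the three levels.
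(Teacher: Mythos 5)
Your proof is correct, and it takes a genuinely different route from the paper's. The paper first establishes an exact identity (its Lemma 1): conditional on the top-level assignment $S_i=k\le K$, the total variation under the \emph{joint} $(L,M)$-truncation equals $1-\sum_{l\le L}\sum_{m\le M}w^{\theta}_{kl}w^{\phi}_{klm}$, proved via an eight-term case analysis over the index regimes $l<L$, $l=L$, $l>L$ crossed with $m<M$, $m=M$, $m>M$; the theorem then follows by conditioning on $S_i$, bounding the $k>K$ contribution by $1$, factoring expectations using independence of the $\theta$- and $\phi$-sticks, and a final algebraic inequality of the form $1-(1-x)(1-y)\le x+y$. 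You instead telescope through intermediate single-level truncations $\widetilde{F}_i\to\widetilde{F}_i^{(K)}\to\widetilde{F}_i^{(K,L)}\to\widetilde{F}_i^{(K,L,M)}$ on a common probability space and bound each increment separately: the top level by the reassignment probability $\pr(z_i>K)=\E\bigl[\prod_{k\le K}(1-s^*_k)\bigr]=(\astar/(1+\astar))^K$, and the $\theta$- and $\phi$-levels by exact single-level tail-mass computations, using that $\sum_m w^{\phi}_{klm}=1$ within each $\theta$-component and that $\sum_{l\le L}w^{\theta,(L)}_{kl}=1$ after truncation. Your decomposition buys modularity and brevity—each increment is a short absorbing-truncation calculation, and the heavy case analysis of the paper's lemma is avoided entirely—at the cost of an extra coupling construction for the intermediate measures and of losing the exact conditional identity (your intermediate bound is $x+y$ where the paper's exact value is $x+y-xy$, though both collapse to the same stated theorem). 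The independence facts you invoke (the three families of Beta sticks are mutually independent and independent of the assignment $z_i$) are exactly the ones the paper uses, so no new distributional assumptions are needed.
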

\noindent
The proof is provided in Supplementary Material Section A.2.
This result shows that the error in approximating the random measure itself can be made arbitrarily small by choosing sufficiently large truncation levels $K$, $L$, and $M$.  
Although Theorem \ref{thm:finite_approximation} confirms that our approximate prior converges to the true one, it is also helpful to investigate the quality of this approximation to the marginal distribution of the observed data. A small distance between the prior measures is a prerequisite, but does not automatically guarantee that the resulting data distributions will also be close. 

For clarity,  let the pair $(N_i, \mathbf V_i, \mathbf X_{ij} ,Y_{ij} , M_{ij}, D_{ij} )$ represent the cluster size, cluster-level and individual-level covariates, outcomes, mediators and post-treatment confounders. 
For notational simplicity, we use here $y_{ij}$ to denote the joint outcome vector $(Y_{ij} , M_{ij}, D_{ij} )$, $x_{ij}$ to denote the individual covariates $\mathbf X_{ij}$, and $v_i$ to denote the joint cluster covariate vector $(\mathbf V_i,N_i)$ .
The marginal densities under the true and approximate models are defined as:
\begin{align*}
    m(V,X,Y) &= \E\qty[\prod_{i=1}^{I}\prod_{j=1}^{N_i} \int_{\Theta \times \Phi \times \mathcal{H}} f(y_{ij}, x_{ij},  v_i \mid \boldsymbol\theta, \boldsymbol\phi, \boldsymbol\eta) \widetilde{F}_i(d\boldsymbol\theta, d\boldsymbol\phi, d\boldsymbol\eta)] \\
    m^{(K,L,M)}(V,X,Y) &= \E\qty[\prod_{i=1}^{I}\prod_{j=1}^{N_i} \int_{\Theta \times \Phi \times \mathcal{H}} f(y_{ij}, x_{ij},  v_i \mid \boldsymbol\theta, \boldsymbol\phi, \boldsymbol\eta) \widetilde{F}_i^{(K,L,M)}(d\boldsymbol\theta, d\boldsymbol\phi, d\boldsymbol\eta)] 
\end{align*}
Here, $V$, $X$ and $Y$ represent the collections of all cluster sizes and cluster-level covariates, individual-level covariates, and outcome data (outcomes, mediators and post-treatment confounders) across all clusters and indviduals, respectively.

\begin{theorem} 
\label{thm:finite_approximation_marginal}
The total variation distance between the marginal densities of the data under the true and approximate models is bounded by:
    \begin{equation}
        TV(m(V,X,Y), m^{(K,L,M)}(V,X,Y)) \leq N \qty(\qty(\frac{\astar}{1+\astar})^K + \qty(\frac{\atheta}{1+\atheta})^L + \qty(\frac{\aphi}{1+\aphi})^M),
    \end{equation}
    where $N=\sum_{i=1}^{I}N_i$ is the total number of observations.
\end{theorem}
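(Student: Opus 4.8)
The plan is to reduce the marginal bound to the random-measure bound of Theorem \ref{thm:finite_approximation} using the conditional independence of the data given the random measures together with three standard contraction properties of total variation. Throughout I couple $\widetilde{F}_i$ and $\widetilde{F}_i^{(K,L,M)}$ on a common probability space, using the same stick-breaking weights and atoms with the truncated process merely redistributing the discarded tail mass; this is exactly the coupling under which Theorem \ref{thm:finite_approximation} yields $\E[TV(\widetilde{F}_i,\widetilde{F}_i^{(K,L,M)})]\le\varepsilon$, where I abbreviate $\varepsilon:=(\astar/(1+\astar))^{K}+(\atheta/(1+\atheta))^{L}+(\aphi/(1+\aphi))^{M}$. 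Let $\nu$ denote the joint law of the coupled family $(\widetilde{F}_i,\widetilde{F}_i^{(K,L,M)})_{i}$, and write $p_{ij}(\,\cdot\,;F)=\int f(\,\cdot\mid\boldsymbol\theta,\boldsymbol\phi,\boldsymbol\eta)\,F(d\boldsymbol\theta,d\boldsymbol\phi,d\boldsymbol\eta)$ for the single-observation mixture density induced by a random measure $F$.

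First I would express both marginals as mixtures over $\nu$: conditionally on the random measures, the observations are independent, so $m$ and $m^{(K,L,M)}$ are obtained by integrating the \emph{same} law $\nu$ against the conditional product laws built, respectively, from $\widetilde{F}_i$ and from $\widetilde{F}_i^{(K,L,M)}$. Joint convexity of total variation then gives
\begin{equation*}
TV(m, m^{(K,L,M)}) \le \E_{\nu}\qty[ TV\qty( \textstyle\bigotimes_{i=1}^{I}\bigotimes_{j=1}^{N_i} p_{ij}(\,\cdot\,;\widetilde{F}_i),\ \textstyle\bigotimes_{i=1}^{I}\bigotimes_{j=1}^{N_i} p_{ij}(\,\cdot\,;\widetilde{F}_i^{(K,L,M)}) ) ].
\end{equation*}

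Next I would peel off the product structure. Subadditivity of total variation for product measures bounds the inner quantity by $\sum_{i,j} TV(p_{ij}(\,\cdot\,;\widetilde{F}_i),p_{ij}(\,\cdot\,;\widetilde{F}_i^{(K,L,M)}))$, and for each term the data-processing (contraction) inequality—passing two parameter laws through the common likelihood kernel $f$ cannot increase total variation—gives $TV(p_{ij}(\,\cdot\,;\widetilde{F}_i),p_{ij}(\,\cdot\,;\widetilde{F}_i^{(K,L,M)}))\le TV(\widetilde{F}_i,\widetilde{F}_i^{(K,L,M)})$. Taking the expectation inside, summing over the $N_i$ individuals in each cluster, and invoking Theorem \ref{thm:finite_approximation} for each $i$ produces
\begin{equation*}
TV(m, m^{(K,L,M)}) \le \sum_{i=1}^{I}\sum_{j=1}^{N_i} \E\qty[ TV(\widetilde{F}_i,\widetilde{F}_i^{(K,L,M)}) ] = \sum_{i=1}^{I} N_i\, \E\qty[ TV(\widetilde{F}_i,\widetilde{F}_i^{(K,L,M)}) ] \le N\varepsilon,
\end{equation*}
which is the claimed bound.

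The main obstacle is not a hard estimate but the bookkeeping around the conditional factorization. I must verify that, given the coupled random measures, the data genuinely decompose into conditionally independent factors to which subadditivity applies, being careful that the shared cluster-level block $v_i=(\mathbf V_i,N_i)$ is handled consistently with the definition of $m$ so that each cluster contributes exactly $N_i$ units of $TV(\widetilde{F}_i,\widetilde{F}_i^{(K,L,M)})$ and the multiplicities sum to $N$ rather than to $N+I$. The cleanest way to secure this is to treat the cluster-level covariates as conditioned upon within each within-cluster likelihood factor (equivalently, to organize the telescoping at the cluster level first and then expand each cluster into its $N_i$ conditionally i.i.d.\ individual contributions), so that no separate cluster-level term is double-counted. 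Care is also needed to ensure the \emph{same} coupling $\nu$ underlies both the convexity step and the application of Theorem \ref{thm:finite_approximation}, so that the per-cluster bound $\varepsilon$ substitutes in without introducing extra constants.
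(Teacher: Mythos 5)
Your proposal is correct, and it arrives at the paper's key intermediate quantity---$\sum_{i=1}^{I}\sum_{j=1}^{N_i}\E\left[TV(\widetilde{F}_i,\widetilde{F}_i^{(K,L,M)})\right]$, then Theorem \ref{thm:finite_approximation}---but by a differently ordered route. The paper applies data-processing \emph{first and globally}: its Lemma \ref{lemma:TV_ineq}, proved by Fubini, reduces $TV(m,m^{(K,L,M)})$ to the total variation between the two mixed product laws on parameter space, $\E[\prod_{i,j}\widetilde{F}_i]$ versus $\E[\prod_{i,j}\widetilde{F}_i^{(K,L,M)}]$; it then writes this as a supremum, exchanges supremum and expectation, and invokes an elementary product inequality (Lemma \ref{lemma:prod_ineq}, $|\prod_i a_i - \prod_i b_i|\le\sum_i|a_i-b_i|$) to telescope the product of measures into a sum of per-observation discrepancies. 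You instead pull the expectation out \emph{first} by joint convexity of total variation under the common coupling, then telescope via product-subadditivity of TV, and only then apply kernel contraction observation by observation. The ingredients are the same three facts---contraction through the likelihood, product-to-sum telescoping, and the per-cluster bound of Theorem \ref{thm:finite_approximation}---so this is not a fundamentally new argument, but your ordering buys modularity: every step is a standard named property of TV, and you avoid the paper's somewhat informal sup-over-points representation of the total variation between the mixed product measures and its sup/expectation interchange. What the paper's route buys is self-containedness: its two lemmas are proved from scratch and work directly with the marginal densities, without needing the coupling to be articulated as a mixing law $\nu$ (though the paper implicitly uses the same coupling, since a single expectation $\E$ governs both $\widetilde{F}_i$ and its truncation, whose weights are built from the same sticks). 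Your caveat about the cluster-level block $v_i$ is well placed: the paper's definition of $m$ repeats the factor involving $v_i$ across the $N_i$ units of cluster $i$, so on that (formally redundant) product space the factorization your subadditivity step needs holds verbatim and each cluster contributes exactly $N_i$ terms, summing to $N$; your proposed handling is consistent with this and, if anything, more careful than the paper's own treatment.
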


Theorem \ref{thm:finite_approximation_marginal} provides an explicit error bound for the marginal data distribution. The bound depends on the same tail probabilities of the stick-breaking processes as in Theorem \ref{thm:finite_approximation}, but is scaled by the total sample size $N$. Note that in \citet{Burns2023}, the bound consists of the second and third terms (for an EDP model); in \citet{Denti2023}, the bound involves the first and second terms (for a CAM model). Our result naturally combines all three, reflecting our built-in three-level hierarchy of our prior formulation.

\subsection{Sensitivity analysis for cross-world dependence}
\label{sec:sensitivity_analysis_introduction}
To implement the g-computation algorithm, we need to construct the unidentified conditional $\mathbf{D}_i(1-a)\mid \mathbf{D}_i(a),\mathbf{C}_i,N_i$ in Theorem \ref{thm:identification}. Here, we introduce a working multivariate Gaussian copula \citep{Masarotto2012} on the joint $(\mathbf{D}_i(1),\mathbf{D}_i(0))\mid (\mathbf{C}_i,N_i)$. 
In particular, for any $N_i$-dimensional vectors $\mathbf{d}=(d_1,\ldots, d_{N_i})^\top$ and $\mathbf{d}'=(d_1',\ldots, d_{N_i}')^\top$,
    \begin{equation}
    \label{eq:gaussian_copula}
    \begin{split}
        &F_{\mathbf{D}_{i}(1),\mathbf{D}_{i}(0)}(\mathbf{d},\mathbf{d}' \mid \mathbf{C}_{i},N_i) \\
        &= \; \Phi_{2N_i} \Bigl[  \Phi_1^{-1}\qty{F_{D_{i1}(1)}(d_1) \mid \mathbf{C}_{i1},N_i) }, 
        \Phi_1^{-1}\qty{F_{D_{i1}(0)}(d_1') \mid \mathbf{C}_{i1},N_i) }, \ldots,  \\
        &  \quad\quad\quad\quad\quad
        \Phi_1^{-1}\qty{F_{D_{iN_i}(1)}(d_{N_i}) \mid \mathbf{C}_{iN_i},N_i) }, 
        \Phi_1^{-1}\qty{F_{D_{iN_i}(0)}(d_{N_i}') \mid \mathbf{C}_{iN_i},N_i) }, \,  \boldsymbol{\Omega} \Bigr],
    \end{split}
    \end{equation}
    where $\Phi_1$ is the univariate standard normal cumulative distribution function (CDF) and $\Phi_{2N_i}$ is the multivariate normal CDF with a  $(2N_i \times 2N_i)$ correlation matrix $\boldsymbol{\Omega}$. 
    The correlation matrix $\boldsymbol{\Omega}$ is block-partitioned as a $2 \times 2$ block matrix, where each block is of size $N_i \times N_i$, $\boldsymbol{\Omega} = 
    \begin{pmatrix} 
    \mathbf{C}_{11} & \mathbf{C}_{10} \\ 
    \mathbf{C}_{10}^\top & \mathbf{C}_{00} 
    \end{pmatrix},$
    where 
    \begin{equation*}
    \begin{split}
        \mathbf{C}_{11} = 
        \begin{pmatrix} 
            1 & \gamma_1 & \cdots & \gamma_1 \\ 
            \gamma_1 & 1 & \cdots & \gamma_1 \\
            \vdots & \vdots & \ddots & \vdots \\
            \gamma_1 & \cdots & \cdots & 1 \\ 
        \end{pmatrix},
        \mathbf{C}_{10} = 
        \begin{pmatrix} 
            \rho & \rho^{*} & \cdots & \rho^{*} \\ 
            \rho^{*} & \rho & \cdots & \rho^{*} \\
            \vdots & \vdots & \ddots & \vdots \\
            \rho^{*} & \cdots & \cdots & \rho \\ 
        \end{pmatrix},
        \mathbf{C}_{00} = 
        \begin{pmatrix} 
        1 & \gamma_0 & \cdots & \gamma_0 \\ 
        \gamma_0 & 1 & \cdots & \gamma_0 \\
        \vdots & \vdots & \ddots & \vdots \\
        \gamma_0 & \cdots & \cdots & 1 \\ 
        \end{pmatrix}.
    \end{split}
    \end{equation*}

\noindent
Note that we do not assume that this copula represents the true joint law; it is a working model used solely to derive the conditional distribution of $\mathbf{D}_i(1-a)$ given $\mathbf{D}_i(a)$, $\mathbf{C}_i$, and $N_i$.
Equation \eqref{eq:gaussian_copula} specifies the dependence structure among post-treatment confounders. Imposing within-cluster exchangeability, we assume the same dependence structure for any pair of individuals in a cluster, encoded by equicorrelation block matrices.  
The partial identifiability comes from the observed data not identifying the parameters in $\mathbf{C}_{10}$, $\rho$ and $\rho^{*}$, because $D_{ij}(a)$ and $D_{ij}(a')$ are never jointly observed.

We adopt a sensitivity analysis approach inspired by \citet{Roy2024}. The idea is to keep the BNP inference unchanged and introduce copula-based dependence via \eqref{eq:gaussian_copula} in post-processing by estimating the same-world correlation parameters $\gamma_0$ and $\gamma_1$ and sampling the cross-world correlation parameter $\rho$ from its prior.
For the cross-world inter-individual correlation $\rho^{*}$, we adopt parameterizations from prior work \citep[e.g.,][]{Zigler2012,Kim2019,Roy2024} tailored to the inter-individual correlation in CRTs, and set $\rho^{*} = \frac{\gamma_0+\gamma_1}{2}\times \rho$.
This parameterization implies that: (i) for the same individual $(j=j')$, the cross-world correlation equals $\rho$; and (ii) for different individuals $(j\neq j')$, the cross-world correlation equals an attenuated average of the corresponding same-world correlations.

To ensure $\boldsymbol{\Omega}$ is positive definite under this parameterization, the following condition needs to be met.
\begin{equation}
\label{eq:cond_posdef}
    \begin{split}
    -\frac{1}{N-1} < \gamma_a < 1, ~~
    \rho^2 < \min \qty{\frac{4 (1-\gamma_1)(1-\gamma_0)}{(2-\gamma_1-\gamma_0)^2},
    \frac{4\left(1 + (N-1)\gamma_1 \right)\left(1+ (N-1)\gamma_0 \right)}{\left(2+ (N-1)(\gamma_1+\gamma_0) \right)^2}},
    \end{split}
\end{equation}
This condition is derived in Supplementary Material Section A.4. 
The bounds in \eqref{eq:cond_posdef} can inform the prior specification for the correlation parameters, but one may use tighter bounds in practice based on substantive knowledge. For example, one can specify $0<\gamma_a<1$ and  
\begin{equation}
\label{eq:prior_rho}
    \rho \sim \mathrm{Unif}\qty(0, \min \qty{\frac{2 \sqrt{(1-\gamma_1)(1-\gamma_0)}}{(2-\gamma_1-\gamma_0)},
    \frac{2\sqrt{\qty(1 + (N-1)\gamma_1)\qty(1+ (N-1)\gamma_0)}}{\qty(2+ (N-1)(\gamma_1+\gamma_0))}}).
\end{equation}
These are justified in our application, in which within-cluster checkup conditions are positively correlated due to shared environments, providers, and norms, and individuals' underlying health propensities persist across treatment states in a cash-transfer program. 
Note that $\gamma_a$ is estimated from the observed marginals in the post-processing step before simulating counterfactual outcomes (i.e., g-computation), because the observed data are informative about same-world correlation. By contrast, $\rho$ is treated as a sensitivity parameter because it is not identifiable from observed data. During g-computation, $\rho$ is drawn from \eqref{eq:prior_rho}, whose upper bound depends on $N$, the synthetic cluster size generated as part of the g-computation procedure. Finally, by comparing with a fixed $\rho=0$, we can observe how sensitive the causal effects are to the unidentifiable cross-world component. The detailed sensitivity analysis procedure is given in Supplementary Material Section D.2.

\begin{remark}
    \emph{The copula model for $(\mathbf D_i(1),\mathbf D_i(0))$ is introduced to construct the unidentified conditional $\mathbf{D}_i(1-a)\mid \mathbf{D}_i(a),\mathbf{C}_i,N_i$ needed for g-computation.  The CA-EDP in Section \ref{sec:CA-EDP} provides the marginals $F_{\mathbf{D}_i(a)}$. Copula parameters (e.g., same-world equicorrelation in $\mathbf{C}_{00}$ and $\mathbf{C}_{11}$,  or cross-world parameters in  $\mathbf{C}_{10}$) are estimated or treated as sensitivity parameters, and positive-definiteness of $\boldsymbol{\Omega}$ is enforced by the  constraints given above.}
\end{remark}

\section{Simulation Studies}
\label{sec:simulation}

We assess the frequentist performance of the CA-EDP for the mediation estimands, NIE and SME, against two benchmarks: the nDDP of \citet{ohnishi2025} and a parametric Bayesian model, the latter of which is common practice.  
For the data-generating process in Scenario S1-S6, we use a hierarchical mixture mechanism at both the cluster and individual levels for the outcome and mediator. Each mixture component contains non-linear and interaction terms with non-Gaussian error terms. Scenario S7 uses a simple linear data-generating process, for which the paramteric model is correctly specified. See Supplementary Material Section E for full details. We let the individual level covariate distributions and the cluster-size distributions vary across seven scenarios below:
\begin{itemize}[nosep]
    \item \textbf{Scenario S1 (baseline).}  
    The cluster size is discrete-uniform, $N_i\sim\mathrm{Unif}\{20,\ldots, 40\}$.  
    The covariates are generated as $(X_{1},X_{2},X_{3})^\top \sim\mathrm{MVN}\qty( (0.5,0.0,-0.5)^\top
    , \qty(\begin{smallmatrix} 
    1.0 &0.2&0.2  \\ 
    0.2 &1.0&0.2  \\ 
    0.2 &0.2&1.0  \\ 
    \end{smallmatrix}))$. The number of clusters is $I=40$.

    \item \textbf{Scenario S2 (mixture of clusters and covariates).}  
    With probability $0.8$, $N_i\sim\mathrm{Pois}(15)$, otherwise $N_i\sim\mathrm{Pois}(30)$.  Additionally, with probability $0.8$, generate $(X_{1},X_{2},X_{3})^\top \sim\mathrm{MVN}\qty( (-1.0,-1.5,-0.5)^\top
    , \qty(\begin{smallmatrix} 
    1.0 &0.2&0.2  \\ 
    0.2 &1.0&0.2  \\ 
    0.2 &0.2&1.0  \\ 
    \end{smallmatrix}))$, and otherwise, generate $(X_{1},X_{2},X_{3})^\top \sim\mathrm{MVN}\qty( (1.5,1.0,0.5)^\top
    , \qty(\begin{smallmatrix} 
    1.0 &0.4&0.4  \\ 
    0.4 &1.0&0.4  \\ 
    0.4 &0.4&1.0  \\ 
    \end{smallmatrix}))$,
    giving a bimodal, cluster-level latent-group structure. $I=40$.

    \item \textbf{Scenario S3 (moderate dimension, some irrelevant covariates).}  
    We return to the baseline cluster size and the original three covariates of S1.  
    Five additional noise covariates are added: $X_{p}\sim\mathrm{N}(0,1.0^{2})$ for $p=4,\dots,8$.  
    These extra variables have no effects on the outcome and mediators, allowing us to probe each model's robustness to irrelevant predictors. $I=40$.

    \item \textbf{Scenario S4 (high dimension, many irrelevant covariates).}    
    Similar to Scenario S3, more noise covariates are added: $X_{p}\sim\mathrm{N}(0,1.0^{2})$ for $p=4,\dots,15$.  $I=40$.

    \item \textbf{Scenario S5 (fewer clusters).}  The same distributions as in S1 with $I=20$.

    \item \textbf{Scenario S6 (more clusters).}   The same distributions as in S1 with $I=80$.

    \item \textbf{Scenario S7 (correctly specified parametric models).}  Parametetric models are correctly specified. The covariate distributions are the same as in S1.
\end{itemize}
\noindent
These scenarios separate the challenges of (i) unimodal versus multimodal covariate structures, (ii) low- versus high-dimensional covariates, and (iii) varying number of clusters.

We use  a $g$-prior \citep{Zellner1986} with the Empirical Bayes choice of $g$ \citep{Liang2008} for each prior specification of \eqref{eq:base_measure_hyperparameters}.
Let $\mathbb{C}^{(y)}$ denote the design matrix obtained by vertically stacking the unit-level design vectors $\mathbb{C}^{(y)}_{ij}$ (see Section \ref{sec:CA-EDP}) across clusters $i=1,\ldots,I$ and units $j=1,\ldots,N_i$. In particular,
$\mathbb{C}^{(y)} =\big(\mathbb{C}^{(y)}_{11},\ldots,\mathbb{C}^{(y)}_{IN_I}\big)^\top.
$
For $Y$, we regress $Y$ on $\mathbb{C}^{(y)}$ and set $\boldsymbol{\mu}_{\boldsymbol{\beta}^{(y)}} = \widehat{\boldsymbol{\beta}}_{\mathrm{OLS}}$, and $\boldsymbol{\Sigma}_{\boldsymbol{\beta}^{(y)}} = g_y\,\hat{\sigma}_{(y)}^{2}\,
\bigl(\mathbb{C}^{(y)\top}\mathbb{C}^{(y)}\bigr)^{-1}$.
We take $\hat{\sigma}_{(y)}^{2}=\mathrm{SSE}/(N-d_y)$ from that regression and estimate $g_y$ by the maximum likelihood under the $g$-prior, which yields $g_y = \max\left\{0, \frac{R_y^{2}}{1-R_y^{2}}\cdot\frac{N-d_y-1}{d_y}\right\},$ where $R_y^{2}$ is the OLS coefficient of determination (see \citet{George2000,Liang2008} for details). 
We specify $M$ and $D$ analogously, using their respective design matrices $\mathbb{C}^{(m)}$ and $\mathbb{C}^{(d)}$.

The initial parameter values were randomly drawn from the prior distributions, and the posterior samples were obtained by running a chain for $5000$ MCMC iterations after an initial $10000$ burn-in iteration. 
We simulate $100$ datasets and evaluate the bias and root mean square error (RMSE) of the point estimator of causal estimands, as well as  coverage probability and interval length of the credible interval estimator.
We compute the true estimands using a Monte Carlo approximation by averaging them from $500,000$ clusters.

\begin{table*}
  \centering
  \caption{Bias and root mean squared error (RMSE) of point estimates and average length (AL) and coverage probability (CP) of $95\%$ credible intervals of causal mediation estimands, NIE and SME. S1 is the baseline scenario. S2 assumes multimodal distributions for $X_{ij}$ and $N_i$. S3 and S4 consider moderate to high dimensional covariates, where most of the coordinates are irrelevant to the outcome and mediator generation.  S5 and S6 consider varying number of clusters. In S7, the parametric model is correctly specified.}
  \begin{adjustbox}{width=14.2cm}
  \begin{tabular}{c c rrrr rrrr rrrr}
    \toprule
    & & \multicolumn{4}{c}{CA-EDP} & \multicolumn{4}{c}{nDDP} & \multicolumn{4}{c}{Parametric} \\
    \cmidrule(lr){3-6} \cmidrule(lr){7-10} \cmidrule(lr){11-14}
    \textbf{Scenario} & \textbf{Estimand} & Bias & RMSE & AL & CP & Bias & RMSE & AL & CP & Bias & RMSE & AL & CP \\
    \midrule
    \multirow{2}{*}{S1}  & SME & -0.09 & 1.24 & 5.99 & 0.98 & -0.22 & 1.27 & 4.55 & 0.95 & 0.16 & 3.41 & 11.99 & 0.93  \\
    & NIE & -0.49 & 1.38 & 6.21 & 0.98 & -0.79 & 1.53 & 4.63 & 0.91 & -0.14 & 3.57 & 11.86 & 0.93  \\
   \midrule
       \multirow{2}{*}{S2}   & SME & 0.18 & 1.73 & 6.50 & 0.93 & 0.03 & 1.68 & 5.62 & 0.91 & 0.32 & 3.20 & 10.59 & 0.92  \\
 & NIE & -0.17 & 1.91 & 6.81 & 0.93 & -0.37 & 1.81 & 5.87 & 0.88 & 0.10 & 3.63 & 10.71 & 0.90  \\
   \midrule
   \multirow{2}{*}{S3}  & SME & -0.28 & 1.78 & 6.77 & 0.92 & -0.21 & 1.43 & 5.10 & 0.93 & -0.41 & 3.07 & 10.84 & 0.93  \\
 & NIE & -0.41 & 1.81 & 6.89 & 0.95 & -0.46 & 1.57 & 5.14 & 0.91 & -0.48 & 3.07 & 10.73 & 0.93  \\
    \midrule
           \multirow{2}{*}{S4}  & SME & -0.12 & 1.84 & 7.05 & 0.91 & -0.24 & 1.80 & 5.67 & 0.88 & -0.31 & 4.16 & 11.27 & 0.90  \\
 & NIE & -0.18 & 1.84 & 7.13 & 0.91 & -0.48 & 1.90 & 5.64 & 0.88 & -0.47 & 4.08 & 11.22 & 0.91  \\
 \midrule
          \multirow{2}{*}{S5}  & SME & -0.37 & 2.80 & 12.00 & 0.96 & -0.53 & 2.69 & 10.24 & 0.97 & -0.44 & 5.60 & 17.39 & 0.88  \\
& NIE & -0.39 & 2.89 & 12.19 & 0.97 & -0.73 & 2.87 & 10.43 & 0.94 & -0.75 & 5.52 & 17.46 & 0.91  \\
 \midrule
\multirow{2}{*}{S6}  & SME & -0.11 & 0.86 & 4.06 & 0.99 & -0.24 & 0.73 & 2.84 & 0.96 & 0.27 & 2.75 & 8.87 & 0.91  \\
& NIE & -0.42 & 0.95 & 4.33 & 0.97 & -0.72 & 1.01 & 2.94 & 0.80 & 0.01 & 2.81 & 8.75 & 0.93  \\
\midrule
\multirow{2}{*}{S7}  & SME & 0.00 & 0.39 & 1.96 & 0.97 & 0.01 & 0.39 & 1.32 & 0.92 & 0.03 & 0.35 & 1.24 & 0.95  \\
& NIE & 0.00 & 0.49 & 2.10 & 0.95 & 0.02 & 0.45 & 1.37 & 0.88 & 0.04 & 0.40 & 1.28 & 0.94  \\
    \bottomrule
  \end{tabular}
  \end{adjustbox}
  \label{tab:simulation1}
\end{table*}

Table \ref{tab:simulation1} reports evaluation metrics for both causal mediation estimands across the six scenarios. Overall, the two BNP priors behave as expected, showing minimal bias and substantially outperforming the fully parametric alternative in bias and RMSE. The parametric model fails to capture the complex structure in data-generating processes, leading to large RMSEs and wider interval lengths. The relative performance of two BNP models, however, varies with the data-generating mechanism.

In S1 and S2, the two BNP priors deliver nearly identical point-estimate accuracy, but their interval widths differ. The CA-EDP accounts for uncertainty in $(N,X)$ and thus produces consistently wider posterior intervals than the nDDP (which uses the empirical distribution for covariates), yielding higher coverage. This is more evident in S2, where the covariate and cluster-size distributions follow mixtures that are difficult to capture with the fixed empirical distributions used by the nDDP. The CA-EDP attains better (closer to $95\%$) coverage by propagating uncertainty in both covariate and cluster-size distributions.

When irrelevant predictors inflate dimensionality in S3 and S4, the nDDP attains smaller RMSE than CA-EDP due to the  cost of learning the full joint law outweighs any shrinkage benefit from modeling the covariate and cluster size distributions. Specifically, in S3, nDDP achieves the lowest RMSE and narrower intervals, but the  CA-EDP still maintains near-nominal coverage. Scenarios S5 and S6 examine sensitivity to sample size. With more clusters in S6, CA-EDP achieves low RMSE and bias while maintaining nominal coverage, whereas the nDDP approach for NIE exhibits undercoverage. Finally, in Scenario S7, the correct parametric model is expected to perform well. Even in this simplest scenario, the two BNP models perform comparably to the parametric model, with no notable loss of efficiency (but nDDP still shows undercoverage for NIE).

Taken together, these results suggest the two competing BNP methods are complementary. CA-EDP generally delivers well-calibrated coverage across scenarios through flexible modeling of covariate and cluster-size distributions, especially when the empirical distribution struggles to capture complex structure (e.g., the mixture setting in S2). By contrast, nDDP is a strong choice when covariate cardinality is low or when $X$ is high-dimensional, cases in which a fixed empirical distribution keeps intervals tight and additional covariate modeling is unlikely to help.
The two BNP models safeguard performance by flexibly encompassing the parametric case, matching its results without efficiency loss.

\section{Application of the Bayesian Nonparametric Causal Mediation Method to the RPS Cluster-Randomized Trial}
\label{sec:empirical_analysis}

We analyze the RPS CRT introduced in Section \ref{sec:motivating_example} by applying our BNP mediation framework, fitting both the CA-EDP and nDDP models.
Our analysis focuses on household dietary diversity as the mediator of substantive interest.  By adjusting for a post‐treatment health‐service confounder (along with baseline covariates), we estimate (i) the direct effect of RPS transfers not explained by dietary changes (NDE) and (ii) the indirect effect transmitted through enhanced dietary diversity (NIE), which is further decomposed into (ii.a) the individual portion of each household (IME) and (ii.b) the spillover portion from other households (SME). Our analysis also includes a principled sensitivity analysis to examine the unidentified dependence structure among counterfactual post-treatment confounders, which has not been previously addressed \citep{cheng2024, ohnishi2025}.

Table \ref{tab:posterior_estimates} presents the posterior estimates of the causal mediation estimands with the sensitivity parameter $\rho$ generated from the prior in \eqref{eq:prior_rho}.
The results are generally consistent in direction with prior analyses, but the level of certainty varies across the two BNP priors.
For both BNP priors, the NIE is positive; under the CA-EDP the posterior mean is $0.165$ with a $95\%$ credible interval (CI) $(-0.044,0.380)$ and $96.2\%$ posterior probability (PP) that the estimand is greater than zero, whereas the nDDP yields a slightly smaller mean, $0.162$ with $95\%$ CI $(0.020,0.358)$ and $\mathrm{PP}=98.0\%$.
These estimates suggest that household dietary diversity is a plausible mediator of the total intervention effect.
For the decomposition of the NIE into the SME and IME both have CIs that cross zero under both BNP priors, indicating greater uncertainty about their standalone impact for CA-EDP (SME/IME: $/89.2\%/84.2\%$) and for nDDP (SME/IME: $/91.4\%/87.4\%$).
Overall, the two BNP priors lead to similar qualitative conclusions.  The mediation through household dietary diversity appears positive but modest. The wider intervals for CA-EDP relative to nDDP reflect propagation of uncertainty through the covariate and cluster size distributions, consistent with the simulations.
For comparison, a parametric Bayesian model (same specification as in Section \ref{sec:simulation}) yielded estimates that also support a positive mediated pathway and total effect, while its NDE and SME/IME remain positive. 

Finally, we evaluate the predictive performance of the conditional models using the log pseudo marginal likelihood (LPML; \citealp{Geisser1979}). 
For the conditional model of CA-EDP, the LPML is computed from the likelihood of the observed data $(D,M,Y)$ given the cluster- and individual-level covariates $(N,V,X)$ and model parameters in each MCMC iteration.
The LPML is a Bayesian model-fit criterion derived from leave-one-out (LOO) predictive assessments of the data. 
For the Bayesian models considered in the simulations (Parametric, nDDP, and CA-EDP), the LPML values are 18.83, 29.60, and 33.34, respectively. Because a higher LPML indicates better LOO predictive fit, CA-EDP is the best-performing model among the three. We focus on CA-EDP and nDDP in the subsequent comparisons.
\begin{table}[ht]
    \centering
    \caption{Posterior estimates of causal estimands. ``Est'', ``$95\%$ CI'', and ``PP'' represent the posterior mean, $95\%$ credible interval, and the posterior probability that the estimand is greater than zero, respectively.}
    \label{tab:posterior_estimates}
    \begin{adjustbox}{width=14.2cm}
    \begin{tabular}{lcccccccccc}
        \toprule
        & \multicolumn{3}{c}{CA-EDP} & \multicolumn{3}{c}{nDDP} & \multicolumn{3}{c}{Parametric} \\
        \cmidrule(lr){2-4}\cmidrule(lr){5-7}\cmidrule(lr){8-10}
        Estimand & Est & $95\%$ CI & PP (\%) & Est & $95\%$ CI & PP (\%) & Est & $95\%$ CI & PP (\%) \\
        \midrule
        $\mathrm{TE}$   & $0.330$ & $(0.052, 0.590)$ & $98.6$ & $0.319$ & $(0.065, 0.554)$ & $99.0$ & $0.259$ & $(0.031, 0.481)$ & $99.2$ \\
        NIE             & $0.165$ & $(-0.044, 0.380)$ & $96.2$ & $0.162$ & $(0.020, 0.358)$ & $98.0$ & $0.143$ & $(0.031, 0.296)$ & $99.4$ \\
        NDE             & $0.165$ & $(-0.127, 0.418)$ & $89.4$ & $0.157$ & $(-0.136, 0.420)$ & $88.4$ & $0.115$ & $(-0.124, 0.366)$ & $82.0$ \\
        $\mathrm{SME}$  & $0.111$ & $(-0.078, 0.313)$ & $89.2$ & $0.110$ & $(-0.046, 0.298)$ & $91.4$ & $0.091$ & $(-0.024, 0.233)$ & $93.2$ \\
        $\mathrm{IME}$  & $0.054$ & $(-0.062, 0.176)$ & $84.2$ & $0.052$ & $(-0.058, 0.161)$ & $87.4$ & $0.053$ & $(-0.018, 0.133)$ & $93.0$ \\
        \bottomrule
    \end{tabular}
    \end{adjustbox}
\end{table}

\subsection{Sensitivity analysis on cross-world dependence}
\label{sec:sensitivity_analysis}
Previous literature \citep{ohnishi2025} assumed conditional independence between cross-world, inter-individual mediators (post-treatment confounder in our context), which cannot be verified with the observed data. Our sensitivity analysis framework utilizes copula modeling to capture dependence between cross-world post-treatment confounders within clusters, as described in Section \ref{sec:sensitivity_analysis_introduction}. Figure \ref{fig:analysis_result_rho} presents the sensitivity analysis results for both CA-EDP and nDDP. We compare posterior estimates under a fixed sensitivity parameter $\rho=0$ (no cross-world dependence) and under a prior on $\rho$ as given in \eqref{eq:prior_rho}. In both cases, the causal effect estimates are stable. Figure \ref{fig:analysis_result_various_rho} presents additional sensitivity analyses examining how posterior estimates of the causal estimands vary over a range of plausible fixed values, $\rho \in \{0.1, 0.3, 0.5, 0.7, 0.9\}$, rather than placing a prior on $\rho$.  We find that the causal conclusions remain stable across these fixed values of $\rho$, suggesting that the conclusions are robust to this sensitivity parameter.

In the RPS application, the limited sensitivity to the cross-world dependence assumption appears to be data-specific rather than a generic feature of CRT mediation problems. In particular, the post-treatment confounder, routine health check-up status, is binary and highly prevalent in both study arms (Table \ref{tab:baseline_cov}), so the arm-specific marginal distributions of $\mathbf{D}(0)$ and $\mathbf{D}(1)$ are already tightly informed by the observed data, leaving relatively limited room for alternative cross-world couplings to materially alter the g-computation. Moreover, once baseline covariates and the observed post-treatment confounder are accounted for, the remaining fitted mediator and outcome regressions do not suggest that modest changes in the latent cross-world association of $\mathbf{D}(0)$ and $\mathbf{D}(1)$ would induce large shifts in the mediation estimands. Thus, in this dataset, the sensitivity parameter $\rho$ mainly perturbs an unidentifiable dependence structure among counterfactual check-up statuses without substantially changing the observed-data features that drive posterior inference.
Overall, the qualitative conclusions are robust to plausible departures from the identifying assumptions encoded by $\rho$.

\begin{figure*}
    \centering
    \includegraphics[width=14.2cm]{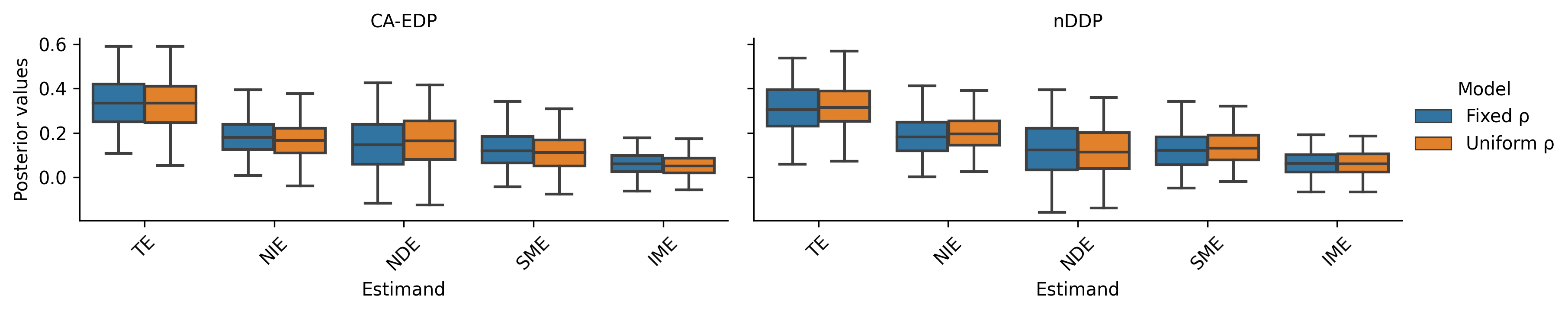}
    \caption{Boxplots of posterior samples for all estimands under the two BNP priors (CA-EDP, nDDP), with a fixed $\rho=0$ (no cross-world dependence) and a prior on $\rho$ in \eqref{eq:prior_rho}.}  
    \label{fig:analysis_result_rho}
\end{figure*}

\begin{figure*}
    \centering
    \includegraphics[width=14.2cm]{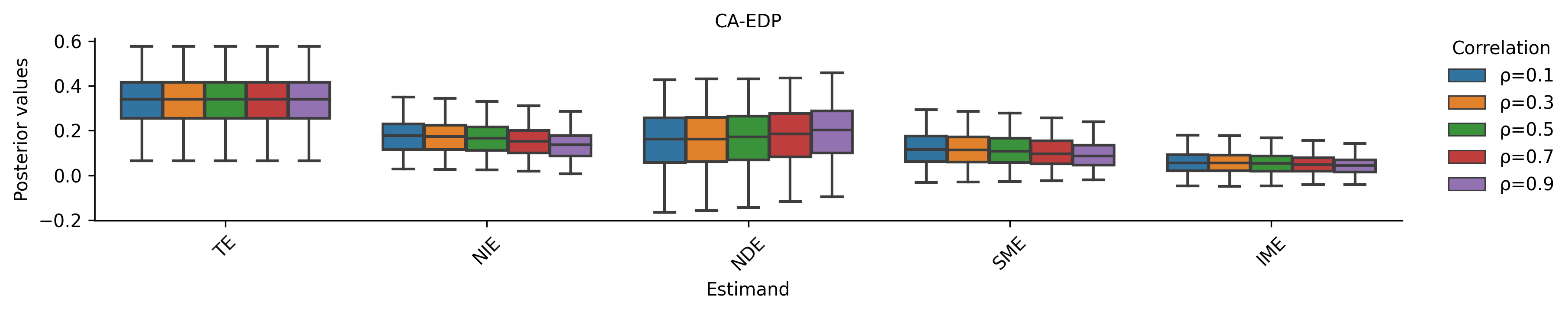}
    \caption{Boxplots of posterior samples for all estimands under the CA-EDP, with various values of $\rho \in \{0.1,0.3,0.5,0.7,0.9\}$.}  
    \label{fig:analysis_result_various_rho}
\end{figure*}

\subsection{Effects of post-treatment confounders}
\label{sec:post_confouders_effect}

\citet{cheng2024} assumed no post-treatment confounding in the RPS CRT.  However, household health check-up status is plausibly affected by treatment and related to both the mediator and the outcome, violating that assumption. Figure \ref{fig:analysis_result_PC} compares posterior distributions of all estimands under CA-EDP and nDDP, with and without adjustment for this post-treatment confounder.

The TE is stable across the two BNP priors, but the adjustment reduces the mediated pathway. In nDDP, NIE decreases and SME/IME move toward zero under both BNP priors. This is consistent with positive mediator-outcome confounding by health check-ups; without adjustment, some of the effect attributable to the post-treatment confounder is misattributed to the mediator.
Consequently, the direct pathway strengthens with adjustment. NDE increases correspondingly, aligning with the decomposition TE = NDE + NIE and implying that, once the confounder is controlled, a larger share of the effect operates through direct pathways not captured by the mediator.

Overall, adjusting for the post-treatment confounder reduces NIE (and SME/IME) and increases NDE while leaving TE essentially unchanged. Mediation remains positive but more modest, highlighting the importance of explicitly modeling post-treatment confounding in CRTs when interpreting pathway-specific effects.

\begin{figure*}
    \centering
    \includegraphics[width=14.2cm]{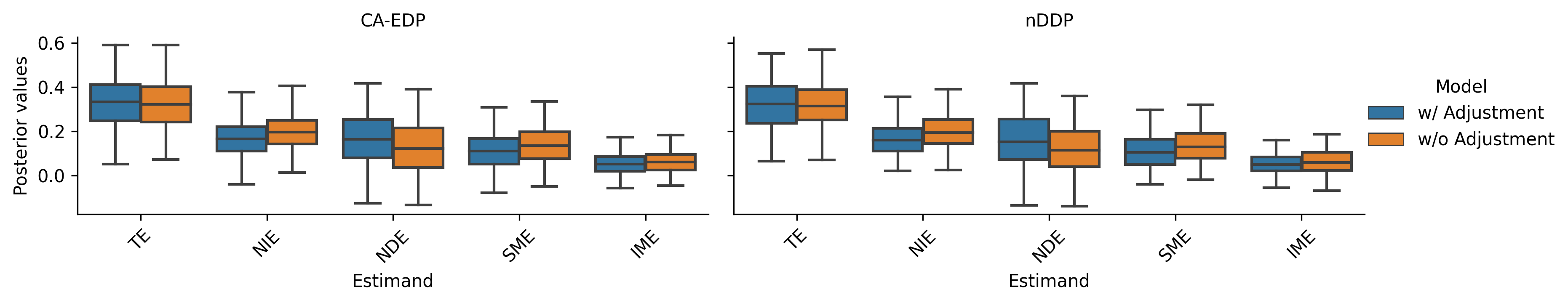}
    \caption{Boxplots of posterior samples for all estimands under two BNP priors (CA-EDP, nDDP), with and without adjustment for the post-treatment confounder.}
    \label{fig:analysis_result_PC}
\end{figure*}

\section{Concluding Remarks}
\label{sec:conlusion}
We proposed a new BNP prior, CA-EDP, for flexible modeling of joint distributions in multilevel (e.g., cluster/individual) data structures.  Although our empirical focus was causal mediation analysis in CRTs with a post-treatment confounder, the proposed prior is broadly applicable to other settings involving hierarchically structured data.  The CA-EDP prior complements the nDDP of \citet{ohnishi2025}.  Simulation results reveal preferred scenarios for each prior: the nDDP performs well when covariates are high‑dimensional and the study comprises fewer clusters with smaller cluster sizes, whereas the CA-EDP outperforms the nDDP in uncertainty quantification and is particularly advantageous when the sample is large enough to properly inform the underlying covariate distribution. 
Overall, the simulations provide practical guidance on model choice for specific CRT scenarios. 
Another key feature of CA-EDP is its joint modeling of covariates at both the cluster and individual levels, which yields principled uncertainty quantification by propagating uncertainty in the covariate distribution. Although bootstrap-based methods can be used to propagate uncertainty in the covariate distribution, they are typically implemented as plug-in or post hoc corrections rather than through a fully joint Bayesian model that yields coherent posterior uncertainty quantification. This joint modeling framework also naturally accommodates ignorable covariate missingness. Additional discussion comparing the CA-EDP and nDDP is provided in Supplementary Material C.

Another contribution of this work is the development of a three-level truncation approximation for the proposed CA-EDP prior, which both facilitates posterior computation through a blocked Gibbs sampler and broadens the practical usability of the proposed model. An interesting direction for future work is to replace the Dirichlet process component with other tractable stick-breaking priors, such as the Pitman-Yor process \citep{Ishwaran2001}, 
which may be incorporated with little additional theoretical or computational burden while providing greater flexibility in the induced grouping behavior.

We also propose a copula‑based sensitivity analysis framework to examine the dependence structure of counterfactual post‑treatment confounders in CRTs. This copula formulation enables a principled sensitivity analysis: by indexing unidentifiable cross-world dependence with a sensitivity parameter $\rho$, investigators can assess how posterior inferences vary over scientifically plausible ranges and thereby gauge the robustness of their conclusions. 
In principle, plausible values of $\rho$ could also be benchmarked by calibrating the sensitivity parameter to observed-data patterns, following the general strategy for sensitivity-parameter calibration discussed by \citet{Daniels_Linero_Roy_2023}. Adapting such a calibration strategy to the present multilevel mediation setting is left for future research.

\section*{Acknowledgement}
Research in this article was supported by the Patient-Centered Outcomes Research Institute\textsuperscript{\textregistered} (PCORI\textsuperscript{\textregistered} Award ME-2023C1-31350) and National Institutes of Health (R01 HL166324).

\bibliographystyle{Chicago}
\bibliography{literature}

\clearpage

\appendix
\setcounter{page}{1}

\section{Technical proofs}
\subsection{Proof of Theorem \ref{thm:identification}}
\label{sec:proof_identification}
\begin{proof}
    Let $\theta_{\mathrm{C}}(a,a') = \E\qty[\frac{1}{N_i} \sum_{j=1}^{N_i}  Y_{ij}(a, \mathbf{M}_i(a')) ]$.
    For simplicity, we suppress the cluster subscript $i$ throughout the proof.
    \begin{equation}
    \label{eq:identification_proof_1}
    \begin{split}
        \E\qty[\frac{1}{N} \sum_{j=1}^{(n)}  Y_{\cdot j}(a, \mathbf{M}(a')) ] &= \E_{\mathbf{C},N} \qty [ \E\qty[\frac{1}{N} \sum_{j=1}^{(n)}  Y_{\cdot j}(a, \mathbf{M}(a')) \mid \mathbf{C},N ] ] \\
        &= \E_{\mathbf{C},N} \qty [\frac{1}{N} \sum_{j=1}^{(n)}  \int_{\mathbb{R}^{(n)}}  \E\qty[ Y_{\cdot j}(a, \mathbf{M}(a')) \mid \mathbf{D}(a)=\mathbf{d}, \mathbf{C},N ] dF_{\mathbf{D}(a) \mid \mathbf{C},N}(\mathbf{d})] \\
        &= \E_{\mathbf{C},N} \qty [\frac{1}{N} \sum_{j=1}^{(n)}  \int_{\mathbb{R}^{(n)}}  \E\qty[ Y_{\cdot j}(a, \mathbf{M}(a')) \mid \mathbf{D}(a)=\mathbf{d}, \mathbf{C},N ] dF_{\mathbf{D} \mid A=a, \mathbf{C},N}(\mathbf{d})],
    \end{split}
    \end{equation}
    where the first and second equality are by the law of iterated expectations (LIE), and the third equality is by Assumptions \ref{asmp:sutva} and \ref{asmp:randomization}. 
    Now consider the identification of the inner expectation of \eqref{eq:identification_proof_1}. Note that
    \begin{align}
    \label{eq:identification_proof_2}
        &\E\qty[ Y_{\cdot j}(a, \mathbf{M}(a')) \mid \mathbf{D}(a)=\mathbf{d}, \mathbf{C},N ] \nonumber\\
        &= \int_{\mathbb{R}^{(n)}}  \E\qty[ Y_{\cdot j}(a, \mathbf{m}) \mid \mathbf{M}(a')=\mathbf{m}, \mathbf{D}(a)=\mathbf{d}, \mathbf{C},N ] dF_{\mathbf{M}(a') \mid \mathbf{D}(a) = \mathbf{d}, \mathbf{C},N}(\mathbf{m}) \nonumber\\
        &= \int_{\mathbb{R}^{(n)}}  \E\qty[ Y_{\cdot j}(a, \mathbf{m}) \mid \mathbf{M}(a')=\mathbf{m}, \mathbf{D}(a)=\mathbf{d}, A=a, \mathbf{C},N ] dF_{\mathbf{M}(a') \mid \mathbf{D}(a) = \mathbf{d}, A=a, \mathbf{C},N}(\mathbf{m}) \nonumber\\
        &= \int_{\mathbb{R}^{(n)}}  \E\qty[ Y_{\cdot j}(a, \mathbf{m}) \mid \mathbf{M}(a)=\mathbf{m}, \mathbf{D}(a)=\mathbf{d}, A=a, \mathbf{C},N ] dF_{\mathbf{M}(a') \mid \mathbf{D}(a) = \mathbf{d}, A=a, \mathbf{C},N}(\mathbf{m}) \nonumber\\
        &= \int_{\mathbb{R}^{(n)}}  \E\qty[ Y_{\cdot j} \mid \mathbf{M}=\mathbf{m}, \mathbf{D}=\mathbf{d}, A=a, \mathbf{C},N ] dF_{\mathbf{M}(a') \mid \mathbf{D}(a) = \mathbf{d}, A=a, \mathbf{C},N}(\mathbf{m}),
    \end{align}
    where the first equality follows from the LIE, the second equality follows from Assumption \ref{asmp:randomization}, the third equality follows from Assumption \ref{asmp:si}, and the fourth equality follows from Assumption \ref{asmp:sutva}. Now, the remaining task is to identify the conditional distribution of $\mathbf{M}(a')$. Note that we have
    \begin{align}
    \label{eq:identification_proof_3}
        & \pr\qty(\mathbf{M}(a') = \mathbf{m} \mid \mathbf{D}(a) = \mathbf{d}, A=a, \mathbf{C},N) \nonumber\\
        &= \int_{\mathbb{R}^{(n)}} \pr\qty(\mathbf{M}(a') = \mathbf{m}, \mathbf{D}(a') = \mathbf{z}' \mid \mathbf{D}(a) = \mathbf{d}, A=a, \mathbf{C},N) d\mathbf{z}' \nonumber\\
        &= \int_{\mathbb{R}^{(n)}} \pr\qty(\mathbf{M}(a') = \mathbf{m} \mid \mathbf{D}(a') = \mathbf{z}',  \mathbf{D}(a) = \mathbf{d}, A=a, \mathbf{C},N) \pr\qty( \mathbf{D}(a') = \mathbf{z}' \mid  \mathbf{D}(a) = \mathbf{d}, A=a, \mathbf{C},N) d\mathbf{z}' \nonumber\\
        &= \int_{\mathbb{R}^{(n)}} \pr\qty(\mathbf{M}(a') = \mathbf{m} \mid \mathbf{D}(a') = \mathbf{z}',  \mathbf{D}(a) = \mathbf{d}, A=a', \mathbf{C},N) \pr\qty( \mathbf{D}(a') = \mathbf{z}' \mid  \mathbf{D}(a) = \mathbf{d}, A=a, \mathbf{C},N) d\mathbf{z}' \nonumber\\
        &= \int_{\mathbb{R}^{(n)}} \pr\qty(\mathbf{M}(a') = \mathbf{m} \mid \mathbf{D}(a') = \mathbf{z}',  A=a', \mathbf{C},N) \pr\qty( \mathbf{D}(a') = \mathbf{z}' \mid  \mathbf{D}(a) = \mathbf{d}, A=a, \mathbf{C},N) d\mathbf{z}' \nonumber\\
        &= \int_{\mathbb{R}^{(n)}} \pr\qty(\mathbf{M} = \mathbf{m} \mid \mathbf{D} = \mathbf{z}',  A=a', \mathbf{C},N) \pr\qty( \mathbf{D}(a') = \mathbf{z}' \mid  \mathbf{D}(a) = \mathbf{d}, A=a, \mathbf{C},N) d\mathbf{z}',
    \end{align}
    where the third equality follows from Assumption \ref{asmp:randomization}, the fourth equality follows from Assumption \ref{asmp:si}, and the fifth equality follows from Assumption \ref{asmp:sutva}. Note that the conditional distribution, $\mathbf{D}(a') \mid  \mathbf{D}(a) = \mathbf{d}, A=a, \mathbf{C},N$, is partially identified by the Equation \eqref{eq:gaussian_copula}.
    Now, by Equations \eqref{eq:identification_proof_1} -- \eqref{eq:identification_proof_3}, we have:
    \begin{align*}
        &\E\left[\frac{1}{N} \sum_{j=1}^{(n)}  Y_{\cdot j}(a, \mathbf{M}(a')) \right] \\
        &= \E_{\mathbf{C},N} \left[\frac{1}{N} \sum_{j=1}^{(n)}  \int_{\mathbb{R}^{(n)}}  \int_{\mathbb{R}^{(n)}}  \E\left[ Y_{\cdot j} \mid \mathbf{M}=\mathbf{m}, \mathbf{D}=\mathbf{d}, A=a, \mathbf{C},N \right] \right.\\
        &\left. \quad\quad\quad\quad\quad\quad dF_{\mathbf{M}(a') \mid \mathbf{D}(a) = \mathbf{d}, A=a, \mathbf{C},N}(\mathbf{m}) dF_{\mathbf{D} \mid A=a, \mathbf{C},N}(\mathbf{d}) \right]\\
        &= \E_{\mathbf{C},N} \left[\frac{1}{N} \sum_{j=1}^{(n)}  \int_{\mathbb{R}^{(n)}}  \int_{\mathbb{R}^{(n)}}  \E\left[ Y_{\cdot j} \mid \mathbf{M}=\mathbf{m}, \mathbf{D}=\mathbf{d}, A=a, \mathbf{C},N \right] \right.\\
        &\left. \quad\quad\quad\quad\quad\quad \int_{\mathbb{R}^{(n)}} dF_{\mathbf{M} \mid \mathbf{D} = \mathbf{z}',  A=a', \mathbf{C},N}(\mathbf{m}) dF_{ \mathbf{D}(a') \mid  \mathbf{D}(a) = \mathbf{d}, A=a, \mathbf{C},N}(\mathbf{z}') dF_{\mathbf{D} \mid A=a, \mathbf{C},N}(\mathbf{d}) \right] \\
        &= \E_{\mathbf{C},N} \left[\frac{1}{N} \sum_{j=1}^{(n)}  \int_{\mathbb{R}^{(n)}}  \int_{\mathbb{R}^{(n)}} \int_{\mathbb{R}^{(n)}} \E\left[ Y_{\cdot j} \mid \mathbf{M}=\mathbf{m}, \mathbf{D}=\mathbf{d}, A=a, \mathbf{C},N \right] \right.\\
        &\left. \quad\quad\quad\quad\quad\quad  dF_{\mathbf{M} \mid \mathbf{D} = \mathbf{z}',  A=a', \mathbf{C},N}(\mathbf{m}) dF_{ \mathbf{D}(a') \mid  \mathbf{D}(a) = \mathbf{d}, A=a, \mathbf{C},N}(\mathbf{z}') dF_{\mathbf{D} \mid A=a, \mathbf{C},N}(\mathbf{d}) \right].
    \end{align*}

    Next, we consider the identification of $\E \left[ \frac{1}{N}\sum_{j=1}^{(n)}  Y_{\cdot j}(1, M_{\cdot j}(1), \mathbf{M}_{(-j)}(0)) \right] $.
    \begin{align*}
        &\E \left[ \frac{1}{N}\sum_{j=1}^{(n)}  Y_{\cdot j}(1, M_{\cdot j}(1), \mathbf{M}_{(-j)}(0)) \right] \\
        &= \E_{\mathbf{C},N} \qty [\frac{1}{N} \sum_{j=1}^{(n)}  \int_{\mathbb{R}^{(n)}}  \E\qty[ Y_{\cdot j}(1, M_{\cdot j}(1), \mathbf{M}_{(-j)}(0)) \mid \mathbf{D}(1)=\mathbf{d}, \mathbf{C},N ] dF_{\mathbf{D}(1) \mid \mathbf{C},N}(\mathbf{d})] \\
        &= \E_{\mathbf{C},N} \left [\frac{1}{N} \sum_{j=1}^{(n)}  \int_{\mathbb{R}^{(n)}}\int_{\mathbb{R}^{(n)}}  \E\left[ Y_{\cdot j}(1, m, \mathbf{m}_{N-1}  )  \mid  M_{\cdot j}(1)=m, \mathbf{M}_{(-j)}(0)=\mathbf{m}_{N-1}, \mathbf{D}(1)=\mathbf{d}, \mathbf{C},N \right] \right. \\
        &\left. \quad\quad\quad\quad\quad\quad dF_{M_{\cdot j}(1)\mathbf{M}_{(-j)}(0) \mid \mathbf{D}(1)=\mathbf{d}, \mathbf{C},N}(m, \mathbf{m}_{N-1})dF_{\mathbf{D}(1) \mid \mathbf{C},N}(\mathbf{d}) \right] \\
        &= \E_{\mathbf{C},N} \left [\frac{1}{N} \sum_{j=1}^{(n)}  \int_{\mathbb{R}^{(n)}}\int_{\mathbb{R}^{(n)}}  \E\left[ Y_{\cdot j}(1, m, \mathbf{m}_{N-1}  )  \mid  M_{\cdot j}(1)=m, \mathbf{M}_{(-j)}(0)=\mathbf{m}_{N-1}, \mathbf{D}(1)=\mathbf{d}, \mathbf{C},N \right] \right. \\
        &\left. \quad\quad\quad\quad\quad\quad dF_{M_{\cdot j}(1) \mid \mathbf{D}(1)=\mathbf{d}, \mathbf{C},N}(m)dF_{\mathbf{M}_{(-j)}(0) \mid \mathbf{D}(1)=\mathbf{d}, \mathbf{C},N}(\mathbf{m}_{N-1})dF_{\mathbf{D}(1) \mid \mathbf{C},N}(\mathbf{d}) \right] \\
        &= \E_{\mathbf{C},N} \left [\frac{1}{N} \sum_{j=1}^{(n)}  \int_{\mathbb{R}^{(n)}}\int_{\mathbb{R}^{(n)}}  \E\left[ Y_{\cdot j}(1, m, \mathbf{m}_{N-1}  )  \mid A=1, M_{\cdot j}(1)=m, \mathbf{M}_{(-j)}(0)=\mathbf{m}_{N-1}, \mathbf{D}(1)=\mathbf{d}, \mathbf{C},N \right] \right. \\
        &\left. \quad\quad\quad\quad\quad\quad dF_{M_{\cdot j}(1) \mid A=1, \mathbf{D}(1)=\mathbf{d}, \mathbf{C},N}(m)dF_{\mathbf{M}_{(-j)}(0) \mid A=1, \mathbf{D}(1)=\mathbf{d}, \mathbf{C},N}(\mathbf{m}_{N-1})dF_{\mathbf{D}(1) \mid A=1, \mathbf{C},N}(\mathbf{d}) \right] \\
        &= \E_{\mathbf{C},N} \left [\frac{1}{N} \sum_{j=1}^{(n)}  \int_{\mathbb{R}^{(n)}}\int_{\mathbb{R}^{(n)}}  \E\left[ Y_{\cdot j}(1, m, \mathbf{m}_{N-1}  )  \mid A=1, M_{\cdot j}(1)=m, \mathbf{M}_{(-j)}(1)=\mathbf{m}_{N-1}, \mathbf{D}(1)=\mathbf{d}, \mathbf{C},N \right] \right. \\
        &\left. \quad\quad\quad\quad\quad\quad dF_{M_{\cdot j}(1) \mid A=1, \mathbf{D}(1)=\mathbf{d}, \mathbf{C},N}(m)dF_{\mathbf{M}_{(-j)}(0) \mid A=1, \mathbf{D}(1)=\mathbf{d}, \mathbf{C},N}(\mathbf{m}_{N-1})dF_{\mathbf{D}(1) \mid A=1, \mathbf{C},N}(\mathbf{d}) \right] \\
        &= \E_{\mathbf{C},N} \left [\frac{1}{N} \sum_{j=1}^{(n)}  \int_{\mathbb{R}^{(n)}}\int_{\mathbb{R}^{(n)}}  \E\left[ Y_{\cdot j}  \mid A=1, M_{\cdot j}=m, \mathbf{M}_{(-j)}=\mathbf{m}_{N-1}, \mathbf{D}=\mathbf{d}, \mathbf{C},N \right] \right. \\
        &\left. \quad\quad\quad\quad\quad\quad dF_{M_{\cdot j} \mid A=1, \mathbf{D}=\mathbf{d}, \mathbf{C},N}(m)dF_{\mathbf{M}_{(-j)}(0) \mid A=1, \mathbf{D}(1)=\mathbf{d}, \mathbf{C},N}(\mathbf{m}_{N-1})dF_{\mathbf{D} \mid A=1, \mathbf{C},N}(\mathbf{d}) \right],
    \end{align*}
    where the third equality follows from Assumption \ref{asmp:cond_crossworld_indep_mediators}, the fourth equality follows from Assumption \ref{asmp:randomization}, the fifth equality follows from Assumption \ref{asmp:si}, and the sixth equality follows from Assumption \ref{asmp:sutva}. It remains necessary to identify the conditional distribution, $\mathbf{M}_{(-j)}(0) \mid A=1, \mathbf{D}(1)=\mathbf{d}, \mathbf{C},N$, which can be partially identified under Equation \eqref{eq:gaussian_copula} as shown in \eqref{eq:identification_proof_3}. Therefore, we have 
    \begin{align*}
        &\E \left[ \frac{1}{N}\sum_{j=1}^{(n)}  Y_{\cdot j}(1, M_{\cdot j}(1), \mathbf{M}_{(-j)}(0)) \right] \\
        &= \E_{\mathbf{C},N} \left [\frac{1}{N} \sum_{j=1}^{(n)}  \int_{\mathbb{R}^{(n)}}\int_{\mathbb{R}^{(n)}}  \E\left[ Y_{\cdot j}  \mid A=1, M_{\cdot j}=m, \mathbf{M}_{(-j)}=\mathbf{m}_{N-1}, \mathbf{D}=\mathbf{d}, \mathbf{C},N \right] \right. \\
        &\left. \quad\quad\quad dF_{M_{\cdot j} \mid A=1, \mathbf{D}=\mathbf{d}, \mathbf{C},N}(m)
        \int_{\mathbb{R}^{(n)}} dF_{\mathbf{M}_{(-j)} \mid \mathbf{D} = \mathbf{z}',  A=a', \mathbf{C},N}(\mathbf{m}_{N-1}) dF_{ \mathbf{D}(a') \mid  \mathbf{D}(a) = \mathbf{d}, A=a, \mathbf{C},N}(\mathbf{z}')
        dF_{\mathbf{D} \mid A=1, \mathbf{C},N}(\mathbf{d}) \right],\\
        &= \E_{\mathbf{C},N} \left [\frac{1}{N} \sum_{j=1}^{(n)}  \int_{\mathbb{R}^{(n)}}\int_{\mathbb{R}^{(n)}}\int_{\mathbb{R}^{(n)}}   \E\left[ Y_{\cdot j}  \mid A=1, M_{\cdot j}=m, \mathbf{M}_{(-j)}=\mathbf{m}_{N-1}, \mathbf{D}=\mathbf{d}, \mathbf{C},N \right] \right. \\
        &\left. \quad\quad\quad dF_{M_{\cdot j} \mid A=1, \mathbf{D}=\mathbf{d}, \mathbf{C},N}(m)
        dF_{\mathbf{M}_{(-j)} \mid \mathbf{D} = \mathbf{z}',  A=a', \mathbf{C},N}(\mathbf{m}_{N-1}) dF_{ \mathbf{D}(a') \mid  \mathbf{D}(a) = \mathbf{d}, A=a, \mathbf{C},N}(\mathbf{z}')
        dF_{\mathbf{D} \mid A=1, \mathbf{C},N}(\mathbf{d}) \right],
    \end{align*}
    which proves the desired result.
\end{proof}

\subsection{Proofs of distributional properties}
\label{sec:proofs_dist_prop}

In what follows, let GEM$(\cdot)$ denote the Griffiths-Engen-McCloskey (stick-breaking) prior distribution \citep{Sethuraman1994}. Additionally, we repeatedly use the following stick-breaking facts \citep[see e.g., ][]{Denti2023, ohnishi2025}: if $\{\pi_k\}$ follow a  GEM$(\alpha)$ prior, then for $k \neq k'$,
\[
\E\qty[ \sum_{k=1}^\infty \pi_k^2 ]=\frac{1}{1+\alpha},
\qquad
\E\qty[ \sum_{k=1}^\infty \pi_k \pi_{k'} ] = \sum_{k=1}^\infty \E[\pi_k]^2=\frac{1}{1+2\alpha}.
\]
The first identity follows from the recursion
$\mu=\E[\sum_k\pi_k^2]=\E[V_1^2]+\E[(1-V_1)^2]\mu$ with $V_1\sim\mathrm{Beta}(1,\alpha)$.

\subsubsection{Proof of $\pr(F_{i} = F_{i'})  = \pr(\boldsymbol{\eta}_{i} = \boldsymbol{\eta}_{i'})   = \frac{1}{1+\astar}$}
Let $Q=\sum_{k=1}^\infty \pi_k^{*}\delta_{(F_k^{*},\boldsymbol{\eta}_k^{*})}$ be the top-level GEM$(\astar)$ random measure on group-specific base distributions.
Assuming the marginal base laws for $F_k^{*}$ and $\boldsymbol{\eta}_k^{*}$ are atomic (there are no duplicate atoms $(F^{*}_{k}, \boldsymbol{\eta}_k^*)$ across different $k$) so that $\pr(F_{i}^{*}=F_{i'}^{*})=0$  and $\pr(\boldsymbol{\eta}_i^{*}=\boldsymbol{\eta}_{i'}^{*})=0$ for $i\neq i'$,
\[
\pr(F_i=F_{i'})= \E\qty[\pr(F_i=F_{i'} \mid Q) ]  = \E\qty[\sum_{k=1}^\infty (\pi_k^*)^2]=\frac{1}{1+\astar},
\]
which proves this equation.

\subsubsection{Proof of tie probabilities}
We omit superscripts on atom probabilities for notational simplicity (i.e., $w_{kl}=w^\theta_{kl}$). For $j\neq j'$,
\begin{align*}
    \pr(\boldsymbol{\theta}_{ij}=\boldsymbol{\theta}_{i'j'}) &= \E\{\pr(\boldsymbol{\theta}_{ij}=\boldsymbol{\theta}_{i'j'}\mid F_i,F_{i'})\} \\
    &= \E \qty[ \frac{1}{1+\astar}\pr(\boldsymbol{\theta}_{ij}=\boldsymbol{\theta}_{i'j'}\mid F_i=F_{i'})
    + \frac{\astar}{1+\astar}\pr(\boldsymbol{\theta}_{ij}=\boldsymbol{\theta}_{i'j'}\mid F_i \neq F_{i'})] \\
    &= \frac{1}{1+\astar}\E\qty(\sum_l w_{kl}^2)
    + \frac{\astar}{1+\astar}\E\qty(\sum_l w_{kl}w_{k'l}).
\end{align*}

For the cross term, common atom structure, independence across groups ($j\neq j'$) and identical distribution of the stick-breaking weights, $w_{kl} \sim \mathrm{GEM}(\atheta)$,  imply
\[
\mathbb{E}\Big(\sum_{k=1}^\infty w_{kl}w_{k'l}\Big)
= \sum_{l=1}^\infty \mathbb{E}(w_{kl})^2
= \sum_{l=1}^\infty \left[\frac{1}{1+\atheta}\left(\frac{\atheta}{1+\atheta}\right)^{l-1}\right]^2
= \frac{1}{1+2\atheta}. 
\]
Threfore,
\begin{equation}
    \pr(\boldsymbol{\theta}_{ij}=\boldsymbol{\theta}_{i'j'})
    = \frac{1}{1+\astar}\left(\frac{1}{1+\atheta}
    + \frac{\astar}{1+2\atheta}\right). 
\end{equation}

We omit superscripts on atom probabilities for simplicity (i.e., $w_{klm}=w^{\phi}_{klm}$). For $i\neq i'$,
\begin{align*}
    \pr(\boldsymbol{\phi}_{ij}=\boldsymbol{\phi}_{i'j'})&=\E\big[\pr(\boldsymbol{\phi}_{ij}=\boldsymbol{\phi}_{i'j'}\mid F_i,F_{i'})\big] \\
    & =\frac{1}{1+\astar}\pr(\boldsymbol{\phi}_{ij}=\boldsymbol{\phi}_{i'j'}\mid F_i=F_{i'}) +\frac{\astar}{1+\astar}\pr(\boldsymbol{\phi}_{ij}=\boldsymbol{\phi}_{i'j'}\mid F_i\neq F_{i'}).
\end{align*}
When $F_i=F_{i'}$, we can decompose based on whether the $\theta$ atoms coincide:
\begin{align*}
    \pr(\boldsymbol{\phi}_{ij}=\boldsymbol{\phi}_{i'j'}\mid F_i=F_{i'})
    &=\pr(\boldsymbol{\phi}_{ij}=\boldsymbol{\phi}_{i'j'}\mid F_i=F_{i'},\boldsymbol{\theta}_{ij}=\boldsymbol{\theta}_{i'j'})\pr(\boldsymbol{\theta}_{ij}=\boldsymbol{\theta}_{i'j'}\mid F_i=F_{i'}) \\
    &+\pr(\boldsymbol{\phi}_{ij}=\boldsymbol{\phi}_{i'j'}\mid F_i=F_{i'},\boldsymbol{\theta}_{ij}\neq\boldsymbol{\theta}_{i'j'})\pr(\boldsymbol{\theta}_{ij}\neq\boldsymbol{\theta}_{i'j'}\mid F_i=F_{i'}).
\end{align*}
Using the stick-breaking expectations at the $\theta$- and $\phi$-levels, we have
\begin{align*}
    &\pr(\boldsymbol{\theta}_{ij}=\boldsymbol{\theta}_{i'j'}\mid F_i=F_{i'})=\E\Big[\sum_{l} w_{kl}^2\Big]=\frac{1}{1+\atheta},\\
    &\pr(\boldsymbol{\theta}_{ij}\neq\boldsymbol{\theta}_{i'j'}\mid F_i=F_{i'})=\frac{\atheta}{1+\atheta},  \\
    &\pr(\boldsymbol{\phi}_{ij}=\boldsymbol{\phi}_{i'j'}\mid F_i=F_{i'},\boldsymbol{\theta}_{ij}=\boldsymbol{\theta}_{i'j'})=\E\Big[\sum_{m} w_{klm}^2\Big]=\frac{1}{1+\aphi}, \\
    &\pr(\boldsymbol{\phi}_{ij}=\boldsymbol{\phi}_{i'j'}\mid F_i=F_{i'},\boldsymbol{\theta}_{ij}\neq\boldsymbol{\theta}_{i'j'})=\E\Big[\sum_{m} w_{klm}w_{kl'm}\Big]=\frac{1}{1+2\aphi}.
\end{align*}
When $F_i\neq F_{i'}$,
\begin{align*}
    \pr(\boldsymbol{\phi}_{ij}=\boldsymbol{\phi}_{i'j'}\mid F_i\neq F_{i'})=\E\Big[\sum_{m} w_{klm}w_{k'l'm}\Big]=\frac{1}{1+2\aphi}.
\end{align*}
Putting the pieces together,
\begin{align*}
    \pr(\boldsymbol{\phi}_{ij}=\boldsymbol{\phi}_{i'j'})
    &=\frac{1}{1+\astar}\Big(\frac{1}{1+\atheta}\cdot\frac{1}{1+\aphi}+\frac{\atheta}{1+\atheta}\cdot\frac{1}{1+2\aphi}\Big)+\frac{\astar}{1+\astar}\cdot\frac{1}{1+2\aphi} \\
    &=\frac{1}{(1+\astar)(1+\atheta)}\cdot\frac{1}{1+\aphi}
    +\Big(1-\frac{1}{(1+\astar)(1+\atheta)}\Big)\cdot\frac{1}{1+2\aphi},
\end{align*}
which reduces to $q_{\phi}$.

For a joint tie, we decompose on the top level:
\begin{align*}
    \pr(\boldsymbol{\theta}_{ij}=\boldsymbol{\theta}_{i'j'},\boldsymbol{\phi}_{ij}=\boldsymbol{\phi}_{i'j'}) 
    &= \pr(F_i=F_{i'})\pr(\boldsymbol{\theta}_{ij}=\boldsymbol{\theta}_{i'j'},\boldsymbol{\phi}_{ij}=\boldsymbol{\phi}_{i'j'} \mid F_i=F_{i'}) \\
    &+ \pr(F_i \neq F_{i'})\pr(\boldsymbol{\theta}_{ij}=\boldsymbol{\theta}_{i'j'},\boldsymbol{\phi}_{ij}=\boldsymbol{\phi}_{i'j'} \mid F_i \neq F_{i'}) \\
    &= \pr(F_i=F_{i'})\frac{1}{(1+\atheta)(1+\aphi)} +\pr(F_i\neq F_{i'})\frac{1}{(1+2\atheta)(1+2\aphi)}.
\end{align*}
Using $\pr(F_i=F_{i'})=1/(1+\astar)$ gives
\[
q_{\theta,\phi}
=\frac{1}{1+\astar}\Big(\frac{1}{(1+\atheta)(1+\aphi)}+\frac{\astar}{(1+2\atheta)(1+2\aphi)}\Big),
\]
which is $q_{\theta,\phi}$.

\subsubsection{Proof of the identity on $\cov\left( F_{i}\left( A_{\theta}, A_{\phi} \right),F_{i'} \left( B_{\theta}, B_{\phi} \right) \right)$}
Our goal is to prove
\begin{equation}
    \label{eq:corr_AB_new}
    \begin{split}
        \cov\left( F_{i}\left( A_{\theta}, A_{\phi} \right),F_{i'} \left( B_{\theta}, B_{\phi} \right) \right) 
        &= q_{\theta}  G_{\phi}(A_{\phi})G_{\phi}(B_{\phi}) \Delta_{\theta}(A_{\theta}, B_{\theta})  \\
        &+ q_{\phi}   G_{\theta}(A_{\theta})G_{\theta}(B_{\theta}) \Delta_{\phi}(A_{\phi}, B_{\phi}) \\
        &+ q_{\theta, \phi}  \Delta_{\theta}(A_{\theta}, B_{\theta}) \Delta_{\phi}(A_{\phi}, B_{\phi}).
    \end{split}
\end{equation}
Let $A_\theta,B_\theta\subset\Theta$, $A_\phi,B_\phi\subset\Phi$ be Borel sets and define
$\Delta_\theta(A_\theta,B_\theta)=G_\theta(A_\theta\cap B_\theta)-G_\theta(A_\theta)G_\theta(B_\theta)$, $\Delta_\phi(A_\phi,B_\phi)=G_\phi(A_\phi\cap B_\phi)-G_\phi(A_\phi)G_\phi(B_\phi)$.
Write $A=A_\theta\times A_\phi$ and $B=B_\theta\times B_\phi$.
Without loss of generality, assume $i=1$ and $i'=2$. For simplicity, we write $\pi^{*}_k$ as $\pi_k$. Then
\begin{align*}
    \E\qty[F_1(A)F_2(B)] 
    &= \E\qty[ \sum_i \pi_{i}^2 F^{*}_i(A)F^{*}_i(B) + \sum_{i \neq i'} \pi_{i}\pi_{i'} F^{*}_i(A)F^{*}_{i'}(B)]\\
    &= \frac{1}{1+\astar} \E\qty[F^{*}_1(A)F^{*}_1(B)]
    + \frac{\astar}{1+\astar} \E\qty[F^{*}_1(A)F^{*}_2(B)].
\end{align*}

\paragraph{Calculation of $\mathbb{E}[F^{*}_1(A)F^{*}_1(B)]$}
\begin{align}
    \E\qty[F^{*}_1(A)F^{*}_1(B)]  
    &=\E\qty[ \qty(\sum_l \sum_m w_{1l}w_{1lm}\delta_{(\theta_l,\phi_m)}(A))\qty(\sum_l \sum_m w_{1l}w_{1lm}\delta_{(\theta_l,\phi_m)}(B))] \nonumber\\
    &= \E\qty[\sum_l \sum_m w_{1l}^2w_{1lm}^2\delta_{(\theta_l,\phi_m)}(A\cap B)] \label{eq:term1}\\
    &+ \E\qty[\sum_l \sum_{m_1\neq m_2} w_{1l}^2w_{1lm_1}w_{1lm_2}\delta_{(\theta_l,\phi_{m_1})}(A)\delta_{(\theta_l,\phi_{m_2})}(B)] \label{eq:term2}\\
    &+ \E\qty[\sum_{l_1\neq l_2}\ \sum_m w_{1l_1}w_{1l_2}w_{1 l_1 m}w_{1 l_2 m} \delta_{(\theta_{l_1},\phi_m)}(A)\delta_{(\theta_{l_2},\phi_m)}(B) ] \label{eq:term3}\\
    &+ \E\qty[ \sum_{l_1\neq l_2}\ \sum_{m_1\neq m_2} w_{1l_1}w_{1l_2}w_{1 l_1 m_1}w_{1 l_2 m_2} \delta_{(\theta_{l_1},\phi_{m_1})} (A)\delta_{(\theta_{l_2},\phi_{m_2})}(B)]. \label{eq:term4}
\end{align}

\noindent
The first term \eqref{eq:term1} simplifies to:
\begin{align*}
    \E\qty[\sum_l \sum_m w_{1l}^2w_{1lm}^2\delta_{(\theta_l,\phi_m)}(A\cap B)] 
    &= \E\qty[\sum_l \sum_m w_{1l}^2w_{1lm}^2] G_\theta(A_\theta\cap B_\theta) G_\phi(A_\phi\cap B_\phi)\\
    &= \frac{1}{1+\atheta}\frac{1}{1+\aphi}G_\theta(A_\theta\cap B_\theta) G_\phi(A_\phi\cap B_\phi)
\end{align*}
\noindent
The second term \eqref{eq:term2} simplifies to:
\begin{align*}
    &\E\qty[\sum_l \sum_{m_1\neq m_2} w_{1l}^2w_{1lm_1}w_{1lm_2}\delta_{(\theta_l,\phi_{m_1})}(A)\delta_{(\theta_l,\phi_{m_2})}(B)] \\
    =& \E\qty[\sum_l \sum_{m_1\neq m_2} w_{1l}^2w_{1lm_1}w_{1lm_2}] G_\theta(A_\theta\cap B_\theta) G_\phi(A_\phi)G_\phi(B_\phi) \\
    =& \E\qty[\sum_l  w_{1l}^2] 
    \E\qty[\sum_{m_1\neq m_2} w_{1lm_1}w_{1lm_2}] 
    G_\theta(A_\theta\cap B_\theta) G_\phi(A_\phi)G_\phi(B_\phi) \\
    =& \E\qty[\sum_l  w_{1l}^2] 
    \E\qty[1 - \sum_{m} w_{1lm}^2] 
    G_\theta(A_\theta\cap B_\theta) G_\phi(A_\phi)G_\phi(B_\phi) \\
    =& \frac{1}{1+\atheta}\frac{\aphi}{1+\aphi} G_\theta(A_\theta\cap B_\theta) G_\phi(A_\phi)G_\phi(B_\phi).
\end{align*}
\noindent
The third term \eqref{eq:term3} simplifies to:
\begin{align*}
    &\E\qty[\sum_{l_1\neq l_2}\ \sum_m w_{1l_1}w_{1l_2}w_{1 l_1 m}w_{1 l_2 m} \delta_{(\theta_{l_1},\phi_m)}(A)\delta_{(\theta_{l_2},\phi_m)}(B) ] \\
    =& \E\qty[\sum_{l_1\neq l_2}\ \sum_m w_{1l_1}w_{1l_2}w_{1 l_1 m}w_{1 l_2 m}] G_\theta(A_\theta) G_\theta(B_\theta) G_\phi(A_\phi \cap B_\phi) \\
    =& \E\qty[\sum_{l_1\neq l_2} w_{1l_1} w_{1l_2} ] \E \qty[ \sum_m w_{1 l_1 m}w_{1 l_2 m}] G_\theta(A_\theta) G_\theta(B_\theta) G_\phi(A_\phi \cap B_\phi) \\
    =& \E\qty[1 - \sum_{l} w_{1l}^2 ] \E \qty[ \sum_m w_{1 l_1 m}w_{1 l_2 m}] G_\theta(A_\theta) G_\theta(B_\theta) G_\phi(A_\phi \cap B_\phi) \\
    =& \E\qty[1 - \sum_{l} w_{1l}^2 ]  \sum_m \qty{ \E\qty[w_{1 l_1 m}]}^2 G_\theta(A_\theta) G_\theta(B_\theta) G_\phi(A_\phi \cap B_\phi) \\
    =& \frac{\atheta}{1+\atheta}\cdot \frac{1}{1+2\aphi} G_\theta(A_\theta) G_\theta(B_\theta) G_\phi(A_\phi \cap B_\phi).
\end{align*}
\noindent
The fourth term \eqref{eq:term4} simplifies to:
\begin{align*}
    &\E\qty[ \sum_{l_1\neq l_2}\ \sum_{m_1\neq m_2} w_{1l_1}w_{1l_2}w_{1 l_1 m_1}w_{1 l_2 m_2} \delta_{(\theta_{l_1},\phi_{m_1})} (A)\delta_{(\theta_{l_2},\phi_{m_2})}(B)] \\
    =& \E\qty[ \sum_{l_1\neq l_2}\ \sum_{m_1\neq m_2} w_{1l_1}w_{1l_2}w_{1 l_1 m_1}w_{1 l_2 m_2}] G_\theta(A_\theta) G_\theta(B_\theta) G_\phi(A_\phi)G_\phi(B_\phi) \\
    =& \E\qty[1 - \sum_{l} w_{1l}^2 ] \E\qty[1 - \sum_{m} w_{1lm}^2] G_\theta(A_\theta) G_\theta(B_\theta) G_\phi(A_\phi)G_\phi(B_\phi) \\
    =& \frac{\atheta}{1+\atheta}\cdot \frac{2\aphi}{1+2\aphi}G_\theta(A_\theta) G_\theta(B_\theta) G_\phi(A_\phi)G_\phi(B_\phi).
\end{align*}

\paragraph{Calculation of $\mathbb{E}[F^{*}_1(A)F^{*}_2(B)]$}

\begin{align}
    \E\qty[F^{*}_1(A)F^{*}_2(B)]  
    &=\E\qty[ \qty(\sum_l \sum_mw_{1l}w_{1lm}\delta_{(\theta_l,\phi_m)}(A))\qty(\sum_l \sum_m w_{2l}w_{2lm}\delta_{(\theta_l,\phi_m)}(B))] \nonumber\\
    &= \E\qty[\sum_l \sum_m w_{1l}w_{2l}w_{1 l m}w_{2 l m}\delta_{(\theta_l,\phi_m)}(A\cap B)] \label{eq:term1_cross}\\
    &+ \E\qty[\sum_l \sum_{m_1\neq m_2} w_{1l}w_{2l}w_{1lm_1}w_{2lm_2}\delta_{(\theta_l,\phi_{m_1})}(A)\delta_{(\theta_l,\phi_{m_2})}(B)] \label{eq:term2_cross}\\
    &+ \E\qty[\sum_{l_1\neq l_2}\ \sum_m w_{1l_1}w_{2l_2}w_{1 l_1 m}w_{2 l_2 m} \delta_{(\theta_{l_1},\phi_m)}(A)\delta_{(\theta_{l_2},\phi_m)}(B) ] \label{eq:term3_cross}\\
    &+ \E\qty[ \sum_{l_1\neq l_2}\ \sum_{m_1\neq m_2} w_{1l_1}w_{2l_2}w_{1 l_1 m_1}w_{2 l_2 m_2} \delta_{(\theta_{l_1},\phi_{m_1})} (A)\delta_{(\theta_{l_2},\phi_{m_2})}(B)]. \label{eq:term4_cross}
\end{align}

\noindent
The first term \eqref{eq:term1_cross} simplifies to:
\begin{align*}
    &\E\qty[\sum_l \sum_m w_{1l}w_{2l}w_{1 l m}w_{2 l m}\delta_{(\theta_l,\phi_m)}(A\cap B)]\\
    =& \E\qty[\sum_l w_{1l}w_{2l}] \E\qty[ \sum_m w_{1 l m}w_{2 l m}] G_\theta(A_\theta\cap B_\theta) G_\phi(A_\phi\cap B_\phi)\\
    =& \frac{1}{1+2\atheta}\frac{1}{1+2\aphi}G_\theta(A_\theta\cap B_\theta) G_\phi(A_\phi\cap B_\phi)
\end{align*}
\noindent
The second term \eqref{eq:term2_cross} simplifies to:
\begin{align*}
    &\E\qty[\sum_l \sum_{m_1\neq m_2} w_{1l}w_{2l}w_{1lm_1}w_{2lm_2}\delta_{(\theta_l,\phi_{m_1})}(A)\delta_{(\theta_l,\phi_{m_2})}(B)] \\
    =& \E\qty[\sum_l \sum_{m_1\neq m_2} w_{1l}w_{2l}w_{1lm_1}w_{2lm_2}] G_\theta(A_\theta\cap B_\theta) G_\phi(A_\phi)G_\phi(B_\phi) \\
    =& \E\qty[\sum_l  w_{1l}w_{2l}] 
    \E\qty[\sum_{m_1\neq m_2} w_{1lm_1}w_{2lm_2}] 
    G_\theta(A_\theta\cap B_\theta) G_\phi(A_\phi)G_\phi(B_\phi) \\
    =& \E\qty[\sum_l  w_{1l}w_{2l}] 
    \E\qty[1 - \sum_{m} w_{1lm}w_{2lm}] 
    G_\theta(A_\theta\cap B_\theta) G_\phi(A_\phi)G_\phi(B_\phi) \\
    =& \frac{1}{1+2\atheta}\frac{2\aphi}{1+2\aphi} G_\theta(A_\theta\cap B_\theta) G_\phi(A_\phi)G_\phi(B_\phi).
\end{align*}
\noindent
The third term \eqref{eq:term3_cross} simplifies to:
\begin{align*}
    &\E\qty[\sum_{l_1\neq l_2}\ \sum_m w_{1l_1}w_{2 l_2}w_{1 l_1 m}w_{2 l_2 m} \delta_{(\theta_{l_1},\phi_m)}(A)\delta_{(\theta_{l_2},\phi_m)}(B) ] \\
    =& \E\qty[\sum_{l_1 \neq l_2}\ \sum_m w_{1l_1}w_{2l_2}w_{1 l_1 m}w_{2 l_2 m}] G_\theta(A_\theta) G_\theta(B_\theta) G_\phi(A_\phi \cap B_\phi) \\
    =& \E\qty[\sum_{l_1\neq l_2} w_{1l_1} w_{2l_2} ] \E \qty[ \sum_m w_{1 l_1 m}w_{2 l_2 m}] G_\theta(A_\theta) G_\theta(B_\theta) G_\phi(A_\phi \cap B_\phi) \\
    =& \E\qty[1 - \sum_{l}  w_{1l_1} w_{2l_2} ] \E \qty[ \sum_m w_{1 l_1 m}w_{2 l_2 m}] G_\theta(A_\theta) G_\theta(B_\theta) G_\phi(A_\phi \cap B_\phi) \\
    =& \E\qty[1 - \sum_{l}  w_{1l} w_{2l} ]  \E \qty[ \sum_m w_{1 l_1 m}w_{2 l_2 m}] G_\theta(A_\theta) G_\theta(B_\theta) G_\phi(A_\phi \cap B_\phi) \\
    =& \frac{2\atheta}{1+2\atheta}\cdot \frac{1}{1+2\aphi} G_\theta(A_\theta) G_\theta(B_\theta) G_\phi(A_\phi \cap B_\phi).
\end{align*}
\noindent
The fourth term \eqref{eq:term4_cross} simplifies to:
\begin{align*}
    &\E\qty[ \sum_{l_1\neq l_2}\ \sum_{m_1\neq m_2} w_{1l_1}w_{2l_2}w_{1 l_1 m_1}w_{2 l_2 m_2} \delta_{(\theta_{l_1},\phi_{m_1})} (A)\delta_{(\theta_{l_2},\phi_{m_2})}(B)] \\
    =& \E\qty[ \sum_{l_1\neq l_2}\ \sum_{m_1\neq m_2} w_{1l_1}w_{2l_2}w_{1 l_1 m_1}w_{2 l_2 m_2}] G_\theta(A_\theta) G_\theta(B_\theta) G_\phi(A_\phi)G_\phi(B_\phi) \\
    =& \E\qty[1 - \sum_{l} w_{1l}w_{2l} ] \E\qty[1 - \sum_{m} w_{1 l_1 m}w_{2 l_2 m}] G_\theta(A_\theta) G_\theta(B_\theta) G_\phi(A_\phi)G_\phi(B_\phi) \\
    =& \frac{2\atheta}{1+2\atheta}\cdot \frac{2\aphi}{1+2\aphi}G_\theta(A_\theta) G_\theta(B_\theta) G_\phi(A_\phi)G_\phi(B_\phi).
\end{align*}

Finally, since $\E\qty[F_1(A)]=\E\qty[\E\qty[F_1(A)\mid Q]]=\E\qty[F^*_1(A)]=H_\theta(A_\theta)H_\phi(A_\phi)$, we have
\begin{align*}
    \E\qty[F_1(A)]\E\qty[F_1(B)] = G_\theta(A_\theta) G_\theta(B_\theta) G_\phi(A_\phi)G_\phi(B_\phi).
\end{align*}
Putting all these together, Equation \eqref{eq:corr_AB_new} follows.

\subsubsection{Proof of additional covariance and correlation identities in Section 3.2}
Note that $\Var\qty[F_1(A)] = \E\qty[F_1(A)^2] + \qty( \E\qty[F_1(A)] )^2.$
From the proof of \eqref{eq:corr_AB_new}, $\E\qty[F_1(A)^2]$ is easy to derive by letting $B=A$.
\begin{align*}
    \Var\qty{F_1(A)} &= \frac{G_\theta(A_\theta)G_\phi(A_\phi)}{(1+\atheta)(1+\aphi)}+
\frac{\aphi G_\theta(A_\theta)G_\phi(A_\phi)^2}{(1+\atheta)(1+\aphi)} \\
&+ \frac{\atheta G_\theta(A_\theta)^2G_\phi(A_\phi)}{(1+\atheta)(1+2\aphi)} +
\qty{G_\theta(A_\theta)G_\phi(A_\phi)}^2\qty( \frac{2\atheta\aphi}{(1+\atheta)(1+2\aphi)} - 1 ).
\end{align*}
\noindent
Let $A_\phi=B_\phi=\Phi$, then $G_\phi(A_\phi)=G_\phi(A_\phi)=G_\phi(\Phi)=1$. 
\begin{align*}
    \Var\qty{F_1(A_{\theta}, \Phi)} = \frac{1}{1+\atheta} G_\theta(A_\theta) (1-G_\theta(A_\theta))
\end{align*}
Additionally, from Equation \eqref{eq:corr_AB_new}, we have
\begin{align*}
    \cov \left(F_{i} \left(A_{\theta}, \Phi \right),F_{i'} \left( A_{\theta}, \Phi  \right) \right) = q_{\theta} G_\theta(A_\theta) (1-G_\theta(A_\theta)).
\end{align*}
Putting these together, we have
\begin{align*}
    \varrho_\theta = \corr \left(F_{i} \left(A_{\theta}, \Phi \right),F_{i'} \left( A_{\theta}, \Phi  \right) \right) = 1-\frac{\astar}{1+\astar}\frac{\atheta}{1+2\atheta}.
\end{align*}
Similarly, taking $A_\theta=B_\theta=\Theta$, we have
\begin{equation*}
    \cov \left(F_{i} \left( \Theta, A_{\phi} \right),F_{i'} \left( \Theta, B_{\phi} \right) \right) = q_{\phi}  \Delta_{\phi}(A_{\phi}, B_{\phi}).
\end{equation*}and the correlation on the same set $A_{\phi}$ is given by:
\begin{equation*}
    \varrho_{\phi} = \corr \left(F_{i} ( \Theta, A_{\phi} ),F_{i'} ( \Theta, A_{\phi} ) \right) = 1 - \frac{\astar}{1+\astar}\frac{\aphi}{1+\atheta + \atheta \aphi + 2\aphi}.
\end{equation*}

\subsection{Proof of Proposition \ref{prop:peppf}}
The result directly follows from the proof of Theorem 1 and Proposition 2 in \citet{Denti2023}.

\subsection{Proof of \eqref{eq:cond_posdef}}
\label{sec:proof_condition}

Recall that the cross-world dependence structure is characterized by the multivariate Gaussian copula \eqref{eq:gaussian_copula}.
This structure allows $\boldsymbol{\Omega}$ to be written as a $2 \times 2$ block matrix, where each block is of size $N_i \times N_i$:
\begin{align*}
    \boldsymbol{\Omega} = 
    \begin{pmatrix} 
    \mathbf{C}_{11} & \mathbf{C}_{10} \\ 
    \mathbf{C}_{10}^\top & \mathbf{C}_{00} 
    \end{pmatrix},
\end{align*}
where 
\begin{align*}
    \mathbf{C}_{11} = 
    \begin{pmatrix} 
    1 & \gamma_1 & \cdots & \gamma_1 \\ 
    \gamma_1 & 1 & \cdots & \gamma_1 \\
    \vdots & \vdots & \ddots & \vdots \\
    \gamma_1 & \cdots & \cdots & 1 \\ 
    \end{pmatrix},
    \mathbf{C}_{10} = 
    \begin{pmatrix} 
    \rho & \rho^{*} & \cdots & \rho^{*} \\ 
    \rho^{*} & \rho & \cdots & \rho^{*} \\
    \vdots & \vdots & \ddots & \vdots \\
    \rho^{*} & \cdots & \cdots & \rho \\ 
    \end{pmatrix},
    \mathbf{C}_{00} = 
    \begin{pmatrix} 
    1 & \gamma_0 & \cdots & \gamma_0 \\ 
    \gamma_0 & 1 & \cdots & \gamma_0 \\
    \vdots & \vdots & \ddots & \vdots \\
    \gamma_0 & \cdots & \cdots & 1 \\ 
    \end{pmatrix} \\
\end{align*}

The blocks have a compound symmetric structure:
\begin{equation*}
\begin{split}
    \mathbf{C}_{11} &= (1-\gamma_1)\mathbf{I}_{N_i} + \gamma_1 \mathbf{J}_{N_i} \\
    \mathbf{C}_{00} &= (1-\gamma_0)\mathbf{I}_{N_i} + \gamma_0 \mathbf{J}_{N_i} \\
    \mathbf{C}_{10} &= (\rho-\rho^{*})\mathbf{I}_{N_i} + \rho^{*} \mathbf{J}_{N_i},
\end{split}
\end{equation*}
where $\mathbf{I}_{N_i}$ is the identity matrix and $\mathbf{J}_{N_i}$ is the matrix of all ones.

For $\boldsymbol{\Omega}$ to be positive definite, all its eigenvalues must be positive. The blocks $\mathbf{C}_{11}, \mathbf{C}_{00}, \mathbf{C}_{10}$ share a common set of eigenvectors. This property simplifies the analysis. The eigenvectors of any matrix of the form $a\mathbf{I} + b\mathbf{J}$ are:
\begin{enumerate}[nosep]
    \item The vector of all ones, $\mathbf{1} \in \mathbb{R}^{N_i}$.
    \item Any vector $\mathbf{v} \in \mathbb{R}^{N_i}$ such that $\mathbf{v}$ is orthogonal to $\mathbf{1}$ (i.e., $\mathbf{v}^\top\mathbf{1} = 0$). There are $N_i-1$ such linearly independent vectors.
\end{enumerate}
This allows us to find the eigenvalues of $\boldsymbol{\Omega}$ by solving two separate $2 \times 2$ eigenvalue problems.

\paragraph{Case 1: Eigenvectors based on \texorpdfstring{$\mathbf{v} \perp \mathbf{1}$}{v orthogonal to 1}}
Let the eigenvector of $\boldsymbol{\Omega}$ be of the form $(\mathbf{v}, c\mathbf{v})^\top$ where $\mathbf{v}^\top\mathbf{1}=0$.
For such a vector $\mathbf{v}$, we have $\mathbf{J}\mathbf{v} = \mathbf{0}$. The multiplicatoin of the matrix blocks on $\mathbf{v}$ is as follows:
\begin{align*}
    \mathbf{C}_{11}\mathbf{v} &= ((1-\gamma_1)\mathbf{I} + \gamma_1 \mathbf{J})\mathbf{v} = (1-\gamma_1)\mathbf{v} \\
    \mathbf{C}_{00}\mathbf{v} &= ((1-\gamma_0)\mathbf{I} + \gamma_0 \mathbf{J})\mathbf{v} = (1-\gamma_0)\mathbf{v} \\
    \mathbf{C}_{10}\mathbf{v} &= ((\rho-\rho^{*})\mathbf{I} + \rho^{*} \mathbf{J})\mathbf{v} = (\rho-\rho^{*})\mathbf{v}.
\end{align*}
The eigenvalue problem $\boldsymbol{\Omega}(\mathbf{v}, c\mathbf{v})^\top = \lambda(\mathbf{v}, c\mathbf{v})^\top$ reduces to a $2 \times 2$ problem for the eigenvalues $\lambda$:
\begin{align*}
    \mathbf{M}_1 \begin{pmatrix} 1 \\ c \end{pmatrix} = 
    \begin{pmatrix} 
    1-\gamma_1 & \rho-\rho^{*} \\ 
    \rho-\rho^{*} & 1-\gamma_0 
    \end{pmatrix}
    \begin{pmatrix} 1 \\ c \end{pmatrix} = \lambda 
    \begin{pmatrix} 1 \\ c \end{pmatrix}.    
\end{align*}
The eigenvalues of $\boldsymbol{\Omega}$ associated with this case are the eigenvalues of $\mathbf{M}_1$. For $\boldsymbol{\Omega}$ to be positive definite, $\mathbf{M}_1$ must be positive definite. This yields the conditions:
\begin{align*}
    &1-\gamma_1 > 0 \implies \gamma_1 < 1, \\
    &1-\gamma_0 > 0 \implies \gamma_0 < 1, \\
    &\det(\mathbf{M}_1) = (1-\gamma_1)(1-\gamma_0) - (\rho-\rho^{*})^2 > 0.
\end{align*}

\paragraph{Case 2: Eigenvector based on the vector \texorpdfstring{$\mathbf{1}$}{1}}
Let the eigenvector of $\boldsymbol{\Omega}$ be of the form $(\mathbf{1}, c\mathbf{1})^\top$.
For the vector $\mathbf{1}$, we have $\mathbf{J}\mathbf{1} = N_i\mathbf{1}$. The multiplication of the matrix blocks on $\mathbf{1}$ is as follows:
\begin{align*}
    \mathbf{C}_{11}\mathbf{1} &= ((1-\gamma_1)\mathbf{I} + \gamma_1 \mathbf{J})\mathbf{1} = (1-\gamma_1 + N_i\gamma_1)\mathbf{1} = (1+(N_i-1)\gamma_1)\mathbf{1} \\
    \mathbf{C}_{00}\mathbf{1} &= ((1-\gamma_0)\mathbf{I} + \gamma_0 \mathbf{J})\mathbf{1} = (1+(N_i-1)\gamma_0)\mathbf{1} \\
    \mathbf{C}_{10}\mathbf{1} &= ((\rho-\rho^{*})\mathbf{I} + \rho^{*} \mathbf{J})\mathbf{1} = (\rho-\rho^{*}+N_i\rho^{*})\mathbf{1} = (\rho+(N_i-1)\rho^{*})\mathbf{1}.
\end{align*}
The eigenvalue problem reduces to:
\begin{align*}
    \mathbf{M}_2 \begin{pmatrix} 1 \\ c \end{pmatrix} = 
    \begin{pmatrix} 
    1+(N_i-1)\gamma_1 & \rho+(N_i-1)\rho^{*} \\ 
    \rho+(N_i-1)\rho^{*} & 1+(N_i-1)\gamma_0 
    \end{pmatrix}
    \begin{pmatrix} 1 \\ c \end{pmatrix} = \lambda 
    \begin{pmatrix} 1 \\ c \end{pmatrix}
\end{align*}
The eigenvalues of $\boldsymbol{\Omega}$ associated with this case are the eigenvalues of $\mathbf{M}_2$. For $\boldsymbol{\Omega}$ to be positive definite, $\mathbf{M}_2$ must be positive definite. This yields the conditions:
\begin{align*}
    &1+(N_i-1)\gamma_1 > 0 \\
    &1+(N_i-1)\gamma_0 > 0 \\
    &\det(\mathbf{M}_2) = \big(1+(N_i-1)\gamma_1\big)\big(1+(N_i-1)\gamma_0\big) - \big(\rho+(N_i-1)\rho^{*}\big)^2 > 0.
\end{align*}

Combining the conditions from both cases, the necessary and sufficient conditions for $\boldsymbol{\Omega}$ to be positive definite are:
\begin{enumerate}[nosep]
    \item $- \frac{1}{N_i-1} < \gamma_a < 1$ for $a \in \{0,1\}$
    \item $(1-\gamma_1)(1-\gamma_0) > (\rho-\rho^{*})^2$
    \item $\big(1+(N_i-1)\gamma_1\big)\big(1+(N_i-1)\gamma_0\big) > \big(\rho+(N_i-1)\rho^{*}\big)^2$.
\end{enumerate}
Setting $\rho^{*} = \frac{\gamma_1 + \gamma_0}{2}\times \rho$, we obtain the desired condition.


\subsection{Proofs of truncation error bounds}
In this section, we prove Theorem \ref{thm:finite_approximation} and \ref{thm:finite_approximation_marginal}.
Fix a top-level index $k \le K$, a $\theta$-index $ l \le L$, and a $\phi$-index $m \le M$, with corresponding (untruncated) stick-breaking weights $\pi_k$, $w_{kl}$, and $w_{klm}$. We omit the superscripts of the weights throughout. Let $S_i$ be the grouping process at group-level of the CA-EDP model, i.e., $S_i \sim \sum_k \pi_k \delta_k$. Conditional on $S_i=k$ (the top-level draw used by group $i$), the total variation distance between $F_i$ and $F_i^{(K,L,M)}$ satisfies the following identity.
\begin{lemma}
\label{lemma:F_indentity}
     When $S_i = k \leq K$, we have
    \begin{align*}
        TV(\widetilde{F}_i, \widetilde{F}_i^{(K,L,M)}) = \frac{1}{2}\qty(\sum_l \sum_m |w_{kl}^{K,L}w_{klm}^{K,L,M} - w_{kl}w_{klm}| )  = 1 - \sum_{l \leq L} \sum_{m \leq M}w_{kl}w_{klm}
    \end{align*}
\end{lemma}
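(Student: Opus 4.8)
The plan is to reduce the total variation between the product measures $\widetilde F_i=F_i\otimes\delta_{\eta_i}$ and $\widetilde F_i^{(K,L,M)}$ to a distance between two atomic measures on a common grid of atoms, and then to exploit the defining properties of the stick-breaking truncation. First I would work on the event $\{S_i=k\}$ with $k\le K$. On this event both measures carry the same cluster-level atom $\boldsymbol\eta_k^{*}$, so the point mass $\delta_{\boldsymbol\eta_k^{*}}$ is a common factor of $\widetilde F_i$ and $\widetilde F_i^{(K,L,M)}$; since forming a product with a common probability measure leaves total variation unchanged, the problem collapses to $TV(F_k^{*},F_k^{*(L,M)})$, where $F_k^{*}=\sum_{l,m}w_{kl}w_{klm}\,\delta_{(\boldsymbol\theta_l^{*},\boldsymbol\phi_m^{*})}$ and $F_k^{*(L,M)}$ is its $(L,M)$-truncation (retained weights relabelled $w_{kl}^{K,L}$, $w_{klm}^{K,L,M}$, with zero mass for $l>L$ or $m>M$).

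Next I would note that, because the common base measures $G_\theta$ and $G_\phi$ are non-atomic, the locations $(\boldsymbol\theta_l^{*},\boldsymbol\phi_m^{*})$ are almost surely distinct across $(l,m)$. Hence $F_k^{*}$ and $F_k^{*(L,M)}$ are discrete measures on the same countable grid, and the elementary formula $TV(\mu,\nu)=\tfrac12\sum_i|p_i-q_i|$ for atomic measures produces the middle expression $\tfrac12\sum_{l,m}\bigl|w_{kl}^{K,L}w_{klm}^{K,L,M}-w_{kl}w_{klm}\bigr|$ with essentially no work.

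The substance is in the second equality, where the truncation structure enters through two properties of the retained weights: (i) they are genuine probabilities, i.e.\ $\sum_{l\le L}w_{kl}^{K,L}=1$ and $\sum_{m\le M}w_{klm}^{K,L,M}=1$ for each $k,l$; and (ii) they pointwise dominate their untruncated counterparts on the truncation region, $w_{kl}^{K,L}w_{klm}^{K,L,M}\ge w_{kl}w_{klm}$ for $l\le L,\ m\le M$ (the discarded tail mass being redeposited at the truncation boundary). Given (i)--(ii), the cleanest route is the companion identity $TV(\mu,\nu)=1-\sum_i\min(p_i,q_i)$: on the truncation region the minimum equals the untruncated product $w_{kl}w_{klm}$ by (ii), while off the region the truncated product vanishes so the minimum is $0$. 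Summing, $\sum_i\min(p_i,q_i)=\sum_{l\le L}\sum_{m\le M}w_{kl}w_{klm}$, which yields $TV=1-\sum_{l\le L}\sum_{m\le M}w_{kl}w_{klm}$. Equivalently, one can split the absolute-value sum into interior and exterior blocks: the exterior contributes $1-\sum_{l\le L,m\le M}w_{kl}w_{klm}$, and by (i) the interior contributes the same amount, the two combining through the factor $\tfrac12$.

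I expect the only genuine obstacle to be the bookkeeping at the truncation boundary. The mass stripped from the tail is redeposited on atoms indexed by $l=L$ or $m=M$, which are distinct from the discarded tail atoms; it is precisely this non-cancellation of the positive discrepancy at the boundary against the negative discrepancies in the tail that cancels the leading factor $\tfrac12$ and leaves the clean tail-mass expression. The remaining points---the product-measure reduction and the almost-sure distinctness of the atoms under non-atomic base measures---are routine but worth stating explicitly.
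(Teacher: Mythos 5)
Your proof is correct, but it takes a genuinely different route from the paper's. The paper proves the second equality by brute force: it partitions the index set into nine blocks according to $l\le L-1$, $l=L$, $l\ge L+1$ and $m\le M-1$, $m=M$, $m\ge M+1$, evaluates the absolute-value sum on each block (their terms (A)--(H)), and recombines them until the factor $\tfrac12$ cancels. You instead invoke the companion identity $TV(\mu,\nu)=1-\sum_i\min(p_i,q_i)$ and reduce everything to two structural facts about the truncation: the retained weights are themselves probability vectors ($\sum_{l\le L}w_{kl}^{K,L}=1$, $\sum_{m\le M}w_{klm}^{K,L,M}=1$), and they pointwise dominate the untruncated products on the retained grid ($w_{kl}^{K,L}w_{klm}^{K,L,M}\ge w_{kl}w_{klm}$ for $l\le L$, $m\le M$), both of which hold because the truncation sets the last stick to one and so deposits all tail mass at the boundary indices. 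With those two facts, $\min(p_{lm},q_{lm})=w_{kl}w_{klm}$ on the retained grid and $0$ off it, and the result is immediate; your equivalent interior/exterior split of the $\ell_1$ sum makes the cancellation of the $\tfrac12$ transparent rather than emergent from arithmetic. Your approach buys brevity, robustness against bookkeeping errors, and an explanation of \emph{why} the clean tail-mass formula appears (domination plus normalization); the paper's approach buys a fully self-contained computation that never appeals to the $\min$ identity and exhibits exactly which boundary terms absorb which tail terms. Two points you flagged as routine are indeed worth stating explicitly, and are handled only implicitly in the paper as well: the product-measure reduction ($\delta_{\boldsymbol\eta_k^{*}}$ is a common factor, so it drops out of the total variation), and the almost-sure distinctness of the atoms $(\boldsymbol\theta_l^{*},\boldsymbol\phi_m^{*})$ under non-atomic base measures $G_\theta$, $G_\phi$, which is what licenses identifying the total variation with the half $\ell_1$ distance between weight arrays in the first equality.
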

\begin{proof}
We first have the following:
\begin{align*}
TV\qty( \widetilde{F}_{i}, \widetilde{F}^{(K,L,M)}_{i})
&= \frac{1}{2}\left(\sum_{l}\sum_{m}
\big|w^{(K,L)}_{kl}w^{(K,L,M)}_{klm}-w_{kl}w_{klm}\big|\right)\\[2pt]
&=\frac{1}{2}\left(\left(\sum_{l\le L-1}+\sum_{l=L}+\sum_{l\ge L+1}\right)
\left(\sum_{m\le M-1}+\sum_{m=M}+\sum_{m\ge M+1}\right)
\big|w^{(K,L)}_{kl}w^{(K,L,M)}_{klm}-w_{kl}w_{klm}\big|\right).
\end{align*}
We split these terms to derive the result.

\[
\sum_{l\le L-1}\sum_{m\le M-1}
\big|w^{(K,L)}_{kl}w^{(K,L,M)}_{klm}-w_{kl}w_{klm}\big|=0.
\]

\begin{align*}
    \sum_{l\le L-1}\sum_{m=M}
    \big|w^{(K,L)}_{kl}w^{(K,L,M)}_{klm}-w_{kl}w_{klm}\big|
    &=\sum_{l\le L-1}\sum_{m=M}w_{kl}\big|w^{(K,L,M)}_{klm}-w_{klm}\big|\\
    &=\sum_{l\le L-1}w_{kl}\big|w^{(K,L,M)}_{klM}-w_{klM}\big|\\
    &=\sum_{l\le L-1}w_{kl}\big|1-w_{kl1}-\cdots-w_{kl,M-1}-w_{klM}\big|\\
    &=\sum_{l\le L-1}\sum_{m\ge M+1}w_{kl}w_{klm} \tag{A}.\label{eq:A}
\end{align*}

Using the fact that $w^{(K,L,M)}_{klm}=0$ for $m\ge M+1$,
\[
\sum_{l\le L-1}\sum_{m\ge M+1}
\big|w^{(K,L)}_{kl}w^{(K,L,M)}_{klm}-w_{kl}w_{klm}\big|
=\sum_{l\le L-1}\sum_{m\ge M+1}w_{kl}w_{klm}. \tag{B}\label{eq:B}
\]

\begin{align*}
\sum_{l=L}\sum_{m\le M-1}
\big|w^{(K,L)}_{kl}w^{(K,L,M)}_{klm}-w_{kl}w_{klm}\big|
&= \sum_{m\le M-1}\big|w^{(K,L)}_{kL}-w_{kL}\big|w_{kLm}\\
&=\sum_{m\le M-1} \big|1-w_{k1}-\cdots-w_{k,L-1}-w_{kL}\big| w_{kLm}\\
&=\sum_{l\ge L+1}\sum_{m\le M-1}w_{kl}w_{kLm} \tag{C}\label{eq:C}
\end{align*}

\begin{align*}
    &\sum_{l=L}\sum_{m=M} \big|w^{(K,L)}_{kl}w^{(K,L,M)}_{klm}-w_{kl}w_{klm}\big| \\
    =& \big| (1 - \sum_{l\leq L-1}  w_{kl}) (1 - \sum_{m\leq M-1}  w_{kLm}) - w_{kL}w_{kLM}  \big| \\
    =& \big| 1 - \sum_{l\leq L-1}  w_{kl} - \sum_{m\leq M-1}  w_{kLm} + \sum_{l\leq L-1}\sum_{m\leq M-1} w_{kl}w_{kLm} - w_{kL}w_{kLM}  \big| \\
    =& \big| 1 - \sum_{l\leq L}  w_{kl} - \sum_{m\leq M}  w_{kLm}  + w_{kL} + w_{kLM} + \sum_{l\leq L}\sum_{m\leq M} w_{kl}w_{kLm} - \sum_{l\leq L}w_{kl}w_{kLM} - \sum_{m\leq M} w_{kL}w_{kLm}  \big| \\
    =& \big| w_{kLM} \qty( 1 - \sum_{l\leq L}w_{kl} ) +  w_{kL} \qty( 1 -  \sum_{m\leq M}  w_{kLm}) +  \sum_{l\leq L} (1 - w_{kl})  \sum_{m\leq M} (1 - w_{kLm})\big| \\
    =& \underbrace{w_{kLM} \sum_{l \geq L+1} w_{kl}}_{(D)} + \underbrace{w_{kL} \sum_{m \geq M+1} w_{kLm}}_{(E)} + \underbrace{\sum_{l \geq L+1} \sum_{m \geq M+1} w_{kl} w_{kLm} }_{(F)}.
\end{align*}

\begin{align*}
    \sum_{l=L}\sum_{m\geq M+1}
\big|w^{(K,L)}_{kl}w^{(K,L,M)}_{klm}-w_{kl}w_{klm}\big|
=\sum_{m\geq M+1}w_{kL}w_{kLm}. \tag{G}\label{eq:G}  
\end{align*}

\begin{align*}
    \sum_{l \geq L+1}\sum_{m=1}^{\infty} \big|w^{(K,L)}_{kl}w^{(K,L,M)}_{klm}-w_{kl}w_{klm}\big|
    = \sum_{l \geq L+1}\sum_{m=1}^{\infty} w_{kl}w_{klm} 
    = \sum_{l \geq L+1} w_{kl}.  \tag{H}\label{eq:H}
\end{align*}

Note that 
\begin{align*}
    TV\qty( \widetilde{F}_{i}, \widetilde{F}^{(K,L,M)}_{i}) = \frac{1}{2} ((A) + \ldots + (H)).
\end{align*}
\noindent
Additionally, 
\begin{align*}
    (A) + (E) &= \sum_{l \leq L} \sum_{m \geq M+1} w_{kl}w_{klm} \tag{AE} \\
    (C) + (D) &= \sum_{l \geq L+1} \sum_{m \leq M} w_{kl}w_{kLm} \tag{CD} \\
    (F) + (G) &= \sum_{l \le L} \sum_{m \ge M+1} w_{kl}w_{kLm} \tag{FG} \\
    (AE) + (B) &= \sum_{l \leq L-1} \sum_{m \geq M+1} w_{kl}w_{klm} + \sum_{l \leq L} \sum_{m \geq M+1} w_{kl}w_{klm}  \\
    &= 2 \sum_{l \leq L-1} \sum_{m \geq M+1} w_{kl}w_{klm} + \sum_{m \geq M+1} w_{kL}w_{kLm} \tag{AEB} \\
    (CD) + (FG) &= \sum_{l \geq L+1} \sum_{m =1}^{\infty} w_{kl}w_{kLm} + w_{kL} \sum_{m \ge M+1} w_{kLm}  \\
    &=\sum_{l \geq L+1}  w_{kl} + w_{kL} \sum_{m \ge M+1} w_{kLm}  \tag{CDFG}
\end{align*}

Finally, by adding up the above results, we have
\begin{align*}
    TV\qty( \widetilde{F}_{i}, \widetilde{F}^{(K,L,M)}_{i}) &= \frac{1}{2} ((AEB) + (CDFG) + (H)) \\
    &= \sum_{l \geq L+1} w_{kl} + w_{kL} \sum_{m \ge M+1} w_{kLm} + \sum_{l \leq L-1} \sum_{m \geq M+1} w_{kl}w_{klm}\\
    &= \sum_{l \geq L+1} w_{kl} +  \sum_{l \leq L} \sum_{m \geq M+1} w_{kl}w_{klm}\\
    &= \sum_{l \geq L+1} \sum_{m=1}^{\infty} w_{kl}w_{klm} +  \sum_{l \leq L} \sum_{m=1}^{\infty} w_{kl}w_{klm} -   \sum_{l \leq L} \sum_{m \leq M} w_{kl}w_{klm}\\
    &= 1-\sum_{l\le L}\sum_{m\le M}w_{kl}w_{klm}.
\end{align*}
\end{proof}

\subsubsection{Proof of Theorem \ref{thm:finite_approximation}}

\begin{proof}
Conditional on the top-level index $S_i=k$, we have:
\begin{align*}
    \E\qty[TV(\widetilde{F}_i^{(K,L,M)},\widetilde{F}_i)]
    &=\E\qty[\sum_{k = 1}^{K}\pi_k\E\qty[TV(\widetilde{F}_i^{(K,L,M)},\widetilde{F}_i)\mid S_i=k,Q]]\\
    &+\E\qty[\sum_{k = K+1}^{\infty}\pi_k\E\qty[TV(\widetilde{F}_i^{(K,L,M)},\widetilde{F}_i)\mid S_i=k,Q]].
\end{align*}
Applying Lemma \ref{lemma:F_indentity} to the first term and using the fact that $\E\qty[TV(\widetilde{F}_i^{(K,L,M)},\widetilde{F}_i)\mid S_i=k,Q] \leq 1$ for the second term, we have 
\begin{align*}
    & \E\qty[TV(\widetilde{F}_i^{(K,L,M)},\widetilde{F}_i)] \\
    &\leq \E\qty[\sum_{k \leq K}\pi_k\E\qty[ 1-\sum_{l\le L}\sum_{m\le M}w_{kl}w_{klm}]]  + \E\qty[\sum_{k \geq K+1}\pi_k] \\
    &= \E\qty[\sum_{k \leq K}\pi_k]\qty(1-\E\qty[ \sum_{l\le L}\sum_{m\le M}w_{kl}w_{klm} ] )]  + \E\qty[\sum_{k \geq K+1}\pi_k]\\
    &= \E\qty[\sum_{k \leq K}\pi_k]\qty(1-\E\qty[ \sum_{l\le L} w_{kl} \E\qty[\sum_{m\le M}w_{klm}] ] )]  + \E\qty[\sum_{k \geq K+1}\pi_k]\\
    &= \E\qty[\sum_{k \leq K}\pi_k]\qty(1-\E\qty[ \sum_{l\le L} w_{kl} ] \E\qty[\sum_{m\le M}w_{klm}] )]  + \E\qty[\sum_{k \geq K+1}\pi_k] \tag{$\because w_{kl} \perp w_{klm}$ }\\
    &= \qty(1 - \qty(\frac{\astar}{1+\astar})^K ) \qty[1 - \qty{ 1 - \qty(\frac{\atheta}{1+\atheta})^L}  \qty{ 1 - \qty(\frac{\aphi}{1+\aphi})^M}] + \qty(\frac{\astar}{1+\astar})^K \\
    &\leq \qty[1 - \qty{ 1 - \qty(\frac{\atheta}{1+\atheta})^L}  \qty{ 1 - \qty(\frac{\aphi}{1+\aphi})^M}] + \qty(\frac{\astar}{1+\astar})^K \\
    &\leq \qty(\frac{\astar}{1+\astar})^K + \qty(\frac{\atheta}{1+\atheta})^L + \qty(\frac{\aphi}{1+\aphi})^M.
\end{align*}
\end{proof}

\subsubsection{Proof of Theorem \ref{thm:finite_approximation_marginal}}
We use the following two lemma to prove Theorem \ref{thm:finite_approximation_marginal}.

\begin{lemma}
\label{lemma:TV_ineq}
    Given distribution $f(y_{ij}\mid\boldsymbol{\theta}_{ij})$, $F_{i}^{(1)}(\boldsymbol{\theta}_{ij})\mid Q$, $F_{i}^{(2)}(\boldsymbol{\theta}_{ij})\mid Q$ and Q, then the total variation of joint marginal distribution of $\{y_{ij}\}_{i,j}$ between $F^{(1)}$ and $F^{\prime(2)}$ is smaller than the total variation of joint distribution of $\{\boldsymbol{\theta}_{ij}\}_{i,j}$, i.e.,
    \begin{align*}
        TV\left(\E_{Q}\left[\prod_{i,j}\int_{\Theta}f(\cdot\mid\boldsymbol{\theta}_{ij})F_{i}^{(1)}(d\boldsymbol{\theta}_{ij})\right],\E_{Q}\left[\prod_{i,j}\int_{\Theta}f(\cdot\mid\boldsymbol{\theta}_{ij})F_{i}^{(2)}(d\boldsymbol{\theta}_{ij})\right]\right) \\    
        \le TV\left(\E_{Q}\left[\prod_{i,j}F_{i}^{(1)}(d\boldsymbol{\theta}_{ij})\right],\E_{Q}\left[\prod_{i,j}F_{i}^{(2)}(d\boldsymbol{\theta}_{ij})\right]\right).
    \end{align*}    
\end{lemma}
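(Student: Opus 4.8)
The plan is to recognize the claimed inequality as an instance of the data-processing inequality for total variation: pushing two laws on the parameter space $\{\boldsymbol\theta_{ij}\}$ through a common Markov kernel (the product likelihood) can only contract their total variation distance. The crucial observation is that the sampling kernel $f(\cdot\mid\boldsymbol\theta_{ij})$ depends neither on $Q$ nor on whether $F_i^{(1)}$ or $F_i^{(2)}$ is used, so it is a fixed map applied identically to both models.

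First I would rewrite each $\mathbf{y}$-marginal as the image of the corresponding $\boldsymbol\theta$-marginal under this kernel. Define the product kernel $K(d\mathbf{y}\mid\boldsymbol\theta)=\prod_{i,j}f(dy_{ij}\mid\boldsymbol\theta_{ij})$ and the two parameter laws $P_\theta^{(r)}(d\boldsymbol\theta)=\E_{Q}[\prod_{i,j}F_i^{(r)}(d\boldsymbol\theta_{ij})]$ for $r\in\{1,2\}$. Because each factor $\int f(dy_{ij}\mid\boldsymbol\theta_{ij})F_i^{(r)}(d\boldsymbol\theta_{ij})$ integrates a distinct variable $\boldsymbol\theta_{ij}$, the product of these integrals equals the integral of the product against the product measure $\prod_{i,j}F_i^{(r)}(d\boldsymbol\theta_{ij})$, holding $Q$ fixed, i.e. $\prod_{i,j}\int f(dy_{ij}\mid\boldsymbol\theta_{ij})F_i^{(r)}(d\boldsymbol\theta_{ij})=\int K(d\mathbf{y}\mid\boldsymbol\theta)\prod_{i,j}F_i^{(r)}(d\boldsymbol\theta_{ij})$ by Tonelli. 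Taking $\E_{Q}$ and applying Tonelli again (all integrands are nonnegative) to interchange $\E_{Q}$ with the $\boldsymbol\theta$-integration yields $P_{\mathbf y}^{(r)}(d\mathbf{y})=\int K(d\mathbf{y}\mid\boldsymbol\theta)\,P_\theta^{(r)}(d\boldsymbol\theta)$, i.e. $P_{\mathbf y}^{(r)}=KP_\theta^{(r)}$ with the same $K$ for both $r$.

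Next I would establish the contraction $TV(KP_\theta^{(1)},KP_\theta^{(2)})\le TV(P_\theta^{(1)},P_\theta^{(2)})$. For any measurable set $A$ in the $\mathbf{y}$-space, write $h(\boldsymbol\theta)=K(A\mid\boldsymbol\theta)\in[0,1]$, so that $(KP_\theta^{(1)})(A)-(KP_\theta^{(2)})(A)=\int h\,d(P_\theta^{(1)}-P_\theta^{(2)})$. Using the variational characterization $TV(\mu,\nu)=\sup_{0\le g\le1}|\int g\,d(\mu-\nu)|$ valid for probability measures (consistent with the $\tfrac12$-convention of Lemma \ref{lemma:F_indentity}) together with $0\le h\le1$, this difference is bounded in absolute value by $TV(P_\theta^{(1)},P_\theta^{(2)})$; taking the supremum over $A$ gives $TV(KP_\theta^{(1)},KP_\theta^{(2)})=\sup_A|(KP_\theta^{(1)})(A)-(KP_\theta^{(2)})(A)|\le TV(P_\theta^{(1)},P_\theta^{(2)})$. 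Substituting $P_{\mathbf y}^{(r)}=KP_\theta^{(r)}$ then delivers the stated bound.

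The main obstacle — really the only nontrivial point — is the bookkeeping in the first step: justifying the interchange of $\E_{Q}$ with the $\boldsymbol\theta$-integration and the passage from a product of per-coordinate integrals to a single integral against the joint kernel, all while keeping $K$ manifestly independent of $Q$ and of the model index $r$. Once both $\mathbf{y}$-laws are expressed as $KP_\theta^{(1)}$ and $KP_\theta^{(2)}$ through a common kernel, the data-processing inequality is standard and the conclusion is immediate.
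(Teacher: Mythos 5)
Your proposal is correct, and its first half coincides exactly with the paper's own proof: the Tonelli bookkeeping that turns the product of per-coordinate integrals into a single integral of the product kernel $K(d\mathbf{y}\mid\boldsymbol\theta)=\prod_{i,j}f(dy_{ij}\mid\boldsymbol\theta_{ij})$ against $\prod_{i,j}F_i^{(r)}(d\boldsymbol\theta_{ij})$, followed by the interchange of $\E_{Q}$ with the $\boldsymbol\theta$-integration, is precisely the paper's step $(*)$ and the equality after it. Where you diverge is in how the contraction is finished. The paper proves it by an explicit density-level computation: it writes the total variation as $\tfrac12$ of an $L_1$ distance, applies the triangle inequality under the $\mathbf{y}$-integral, swaps the order of integration once more, and uses $\int \prod_{i,j} f(y_{ij}\mid\boldsymbol\theta_{ij})\prod_{i,j}dy_{ij}=1$ to integrate out $\mathbf{y}$. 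You instead invoke the data-processing inequality through the variational characterization $TV(\mu,\nu)=\sup_{0\le g\le 1}\left|\int g\,d(\mu-\nu)\right|$, testing against $h(\boldsymbol\theta)=K(A\mid\boldsymbol\theta)\in[0,1]$ and taking a supremum over sets $A$. The two finishes are equivalent in substance, since the paper's argument is exactly the standard proof of that inequality specialized to densities, but yours is marginally more general: it never requires the mixed measures $\E_{Q}\left[\prod_{i,j}F_i^{(r)}(d\boldsymbol\theta_{ij})\right]$ to be written in the density-style notation the paper uses, and it makes the conceptual content (a fixed Markov kernel, common to both models, can only contract total variation) explicit. What the paper's version buys in exchange is self-containedness, needing no characterization of total variation beyond its $L_1$ definition.
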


\begin{proof}
    Using Fubini's theorem,
    \begin{align*}
        & TV\left(\E_{Q}\left[\prod_{i,j}\int_{\Theta}f(y_{ij}\mid\boldsymbol{\theta}_{ij})F_{i}^{(1)}(d\boldsymbol{\theta}_{ij})\right],\E_{Q}\left[\prod_{i,j}\int_{\Theta}f(y_{ij}\mid\boldsymbol{\theta}_{ij})F_{i}^{(2)}(d\boldsymbol{\theta}_{ij})\right]\right) \\
        &= \frac{1}{2}\int_{y^{N}}\left\lvert\E_{Q}\left[\prod_{i,j}\int_{\Theta}f(y_{ij}\mid\boldsymbol{\theta}_{ij})F_{i}^{(1)}(d\boldsymbol{\theta}_{ij})-\prod_{i,j}\int_{\Theta}f(y_{ij}\mid\boldsymbol{\theta}_{ij})F_{i}^{(2)}(d\boldsymbol{\theta}_{ij})\right]\right\rvert\prod_{i,j}dy_{ij} \\
        &\stackrel{(*)}{=} \frac{1}{2}\int_{y^{N}}\left\lvert\E_{Q}\left[\int_{\Theta^{N}}\prod_{i,j}f(y_{ij}\mid\boldsymbol{\theta}_{ij})\prod_{i,j}F_{i}^{(1)}(d\boldsymbol{\theta}_{ij})-\int_{\Theta^{N}}\prod_{i,j}f(y_{ij}\mid\boldsymbol{\theta}_{ij})\prod_{i,j}F_{i}^{(2)}(d\boldsymbol{\theta}_{ij})\right]\right\rvert\prod_{i,j}dy_{ij} \\
        &= \frac{1}{2}\int_{y^{N}}\left\lvert\int_{\Theta^{N}}\prod_{i,j}f(y_{ij}\mid\boldsymbol{\theta}_{ij})\E_{Q}\left[\prod_{i,j}F_{i}^{(1)}(d\boldsymbol{\theta}_{ij})-\prod_{i,j}F_{i}^{(2)}(d\boldsymbol{\theta}_{ij})\right]\right\rvert\prod_{i,j}dy_{ij} \\
        &\le \frac{1}{2}\int_{y^{N}}\int_{\Theta^{N}}\prod_{i,j}f(y_{ij}\mid\boldsymbol{\theta}_{ij})\left\lvert\E_{Q}\left[\prod_{i,j}F_{i}^{(1)}(d\boldsymbol{\theta}_{ij})-\prod_{i,j}F_{i}^{(2)}(d\boldsymbol{\theta}_{ij})\right]\right\rvert\prod_{i,j}dy_{ij} \\
        &= \frac{1}{2}\int_{\Theta^{N}}\int_{Y^{N}}\prod_{i,j}f(y_{ij}\mid\boldsymbol{\theta}_{ij})\prod_{i,j}dy_{ij}\left\lvert\E_{Q}\left[\prod_{i,j}F_{i}^{(1)}(d\boldsymbol{\theta}_{ij})-\prod_{i,j}F_{i}^{(2)}(d\boldsymbol{\theta}_{ij})\right]\right\rvert \\
        &= \frac{1}{2}\int_{\Theta^{N}}\left\lvert\E_{Q}\left[\prod_{i,j}F_{i}^{(1)}(d\boldsymbol{\theta}_{ij})-\prod_{i,j}F_{i}^{(2)}(d\boldsymbol{\theta}_{ij})\right]\right\rvert \\
        &= TV\left(\E_{Q}\left[\prod_{i,j}F_{i}^{(1)}(d\boldsymbol{\theta}_{ij})\right],\E_{Q}\left[\prod_{i,j}F_{i}^{(2)}(d\boldsymbol{\theta}_{ij})\right]\right)
    \end{align*}
    Equation (*) follows from Fubini's theorem as below,
    \begin{align*}
        \prod_{i=1}^{N}\int_{\Theta}g(\boldsymbol{\theta}_{i})F(d\boldsymbol{\theta}_{i}) &=\prod_{i=1}^{N}\int_{\Theta}g(\boldsymbol{\theta}_{i})f(\boldsymbol{\theta}_{i})d\boldsymbol{\theta}_{i} \\
        &= \int_{\Theta^{N}}\prod_{i=1}^{N}[g(\boldsymbol{\theta}_{i})f(\boldsymbol{\theta}_{i})]d(\boldsymbol{\theta}_{1},...,\boldsymbol{\theta}_{N}) \\
        &= \int_{\Theta^{N}}\prod_{i=1}^{N}[g(\boldsymbol{\theta}_{i})]\prod_{i=1}^{N}[f(\boldsymbol{\theta}_{i})]d(\boldsymbol{\theta}_{1},...,\boldsymbol{\theta}_{N}) \\
        &= \int_{\Theta^{N}}\prod_{i=1}^{N}f(\boldsymbol{\theta}_{i})\prod_{i=1}^{N}[F(d\boldsymbol{\theta}_{i})]
    \end{align*}
\end{proof}

\begin{lemma}
\label{lemma:prod_ineq}
    Let $a_{1},...,a_{n}$, $b_1,..., b_n \in [-1,1]$, then
    $$
    \left\lvert\prod_{i=1}^{n}a_{i}-\prod_{i=1}^{n}b_{i}\right\rvert\le\sum_{i=1}^{n}\lvert a_{i}-b_{i}\rvert
    $$
\end{lemma}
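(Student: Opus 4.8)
The plan is to prove this elementary inequality by induction on $n$, with the only substantive observation being that the hypothesis $a_i,b_i\in[-1,1]$ forces every partial product to lie in $[-1,1]$ as well, so that any such product has absolute value at most $1$. The base case $n=1$ is the trivial identity $\lvert a_1-b_1\rvert\le\lvert a_1-b_1\rvert$.

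For the inductive step, suppose the claim holds for $n-1$ factors. The key manipulation is a single ``add and subtract'' that peels off the last factor:
\[
\prod_{i=1}^{n} a_i-\prod_{i=1}^{n} b_i
=\Big(\prod_{i=1}^{n-1} a_i\Big)(a_n-b_n)
+b_n\Big(\prod_{i=1}^{n-1} a_i-\prod_{i=1}^{n-1} b_i\Big),
\]
which one verifies by expanding the right-hand side. Taking absolute values and applying the triangle inequality produces two terms. In the first, $\big\lvert\prod_{i=1}^{n-1} a_i\big\rvert\le1$ because it is a product of numbers in $[-1,1]$; in the second, $\lvert b_n\rvert\le1$ by hypothesis. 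Hence the right-hand side is at most $\lvert a_n-b_n\rvert+\big\lvert\prod_{i=1}^{n-1} a_i-\prod_{i=1}^{n-1} b_i\big\rvert$, and the inductive hypothesis bounds the remaining difference by $\sum_{i=1}^{n-1}\lvert a_i-b_i\rvert$, completing the step and hence the proof.

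An equivalent non-inductive route is a telescoping decomposition. Setting $P_k=\big(\prod_{i\le k} a_i\big)\big(\prod_{i>k} b_i\big)$ for $k=0,\dots,n$, so that $P_0=\prod_{i} b_i$ and $P_n=\prod_{i} a_i$, one writes $\prod_{i} a_i-\prod_{i} b_i=\sum_{k=1}^{n}(P_k-P_{k-1})$ and checks that each summand equals $\big(\prod_{i<k} a_i\big)\big(\prod_{i>k} b_i\big)(a_k-b_k)$; the $[-1,1]$ bound again controls the two flanking products by $1$, and the triangle inequality yields the stated sum.

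There is no genuine obstacle here: the only point requiring care is invoking the boundedness hypothesis at exactly the right place, since it is precisely what prevents the ``carried'' partial products from inflating the coefficients beyond $1$. I would present the induction as the main argument, as it avoids any delicate index bookkeeping.
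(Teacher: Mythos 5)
Your proof is correct and is essentially the paper's own argument: both use a single add-and-subtract step to peel off one factor, apply the triangle inequality, invoke the $[-1,1]$ hypothesis to bound the leftover partial product and the single carried factor by $1$, and finish by induction (the paper peels off the first factor while you peel off the last, a purely cosmetic difference). Your telescoping alternative is just a non-inductive repackaging of the same decomposition, so nothing further is needed.
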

\begin{proof}
    \begin{align*}
    \left\lvert\prod_{i=1}^{n}a_{i}-\prod_{i=1}^{n}b_{i}\right\rvert &= \left\lvert\prod_{i=1}^{n}a_{i}-b_{1}\prod_{i=2}^{n}a_{i}+b_{1}\prod_{i=2}^{n}a_{i}-\prod_{i=1}^{n}b_{i}\right\rvert \\
    &\le \lvert a_{1}-b_{1}\rvert\left\lvert\prod_{i=2}^{n}a_{i}\right\rvert+\lvert b_{1}\rvert\left\lvert\prod_{i=2}^{n}a_{i}-\prod_{i=2}^{n}b_{i}\right\rvert \\
    &\le \lvert a_{1}-b_{1}\rvert+\left\lvert\prod_{i=2}^{n}a_{i}-\prod_{i=2}^{n}b_{i}\right\rvert
    \end{align*}
    Then the result follows by induction.    
\end{proof}

Recall: 
\begin{align*}
    m(V,X,Y) &= \E\qty[\prod_{i=1}^{I}\prod_{j=1}^{N_i} \int_{\Theta \times \Phi \times \mathcal{H}} f(y_{ij}, x_{ij},  v_i \mid \boldsymbol{\theta}, \boldsymbol{\phi}, \boldsymbol{\eta}) \widetilde{F}_i(d\boldsymbol{\theta}, d\boldsymbol{\phi}, d\boldsymbol{\eta})] \\
    m^{(K,L,M)}(V,X,Y) &= \E\qty[\prod_{i=1}^{I}\prod_{j=1}^{N_i} \int_{\Theta \times \Phi \times \mathcal{H}} f(y_{ij}, x_{ij},  v_i \mid \boldsymbol{\theta}, \boldsymbol{\phi}, \boldsymbol{\eta}) \widetilde{F}_i^{(K,L,M)}(d\boldsymbol{\theta}, d\boldsymbol{\phi}, d\boldsymbol{\eta})]
\end{align*}
Then using Lemma \ref{lemma:TV_ineq} and \ref{lemma:prod_ineq}, we have
\begin{align*}
    &TV\left(m^{(K,L,M)}(V,X,Y),m(V,X,Y)\right) \\
    &\le TV\left(\E\left[\prod_{i=1}^{I}\prod_{j=1}^{N_{i}}\widetilde{F}_{i}(\boldsymbol{\theta}_{ij}, \boldsymbol{\phi}_{ij}, \boldsymbol{\eta}_i) \right],\E\left[\prod_{i=1}^{I}\prod_{j=1}^{N_{i}}\widetilde{F}_{i}^{(K,L,M)}(\boldsymbol{\theta}_{ij}, \boldsymbol{\phi}_{ij}, \boldsymbol{\eta}_i)\right]\right) \tag{$\because$ Lemma \ref{lemma:TV_ineq}}\\
    &= \sup_{(\boldsymbol{\theta}_{ij}, \boldsymbol{\phi}_{ij}, \boldsymbol{\eta}_i)\in \Theta \times \Phi \times \mathcal{H}}\left\lvert\E\left[\prod_{i=1}^{I}\prod_{j=1}^{N_{i}}\widetilde{F}_{i}(\boldsymbol{\theta}_{ij}, \boldsymbol{\phi}_{ij}, \boldsymbol{\eta}_i)-\prod_{i=1}^{I}\prod_{j=1}^{N_{i}}\widetilde{F}_{i}^{(K,L,M)}(\boldsymbol{\theta}_{ij}, \boldsymbol{\phi}_{ij}, \boldsymbol{\eta}_i)\right]\right\rvert \\
    &\le \sup_{(\boldsymbol{\theta}_{ij}, \boldsymbol{\phi}_{ij}, \boldsymbol{\eta}_i)\in \Theta \times \Phi \times \mathcal{H}}\E\left[\left\lvert\prod_{i=1}^{I}\prod_{j=1}^{N_{i}}\widetilde{F}_{i}(\boldsymbol{\theta}_{ij}, \boldsymbol{\phi}_{ij}, \boldsymbol{\eta}_i)-\prod_{i=1}^{I}\prod_{j=1}^{N_{i}}\widetilde{F}_{i}^{(K,L,M)}(\boldsymbol{\theta}_{ij}, \boldsymbol{\phi}_{ij}, \boldsymbol{\eta}_i)\right\rvert\right] \\
    &\le \E\left[\sup_{(\boldsymbol{\theta}_{ij}, \boldsymbol{\phi}_{ij}, \boldsymbol{\eta}_i)\in \Theta \times \Phi \times \mathcal{H}}\left\lvert\prod_{i=1}^{I}\prod_{j=1}^{N_{i}}\widetilde{F}_{i}(\boldsymbol{\theta}_{ij}, \boldsymbol{\phi}_{ij}, \boldsymbol{\eta}_i)-\prod_{i=1}^{I}\prod_{j=1}^{N_{i}}\widetilde{F}_{i}^{(K,L,M)}(\boldsymbol{\theta}_{ij}, \boldsymbol{\phi}_{ij}, \boldsymbol{\eta}_i)\right\rvert\right] \\
    &\le \E\left[\sup_{(\boldsymbol{\theta}_{ij}, \boldsymbol{\phi}_{ij}, \boldsymbol{\eta}_i)\in \Theta \times \Phi \times \mathcal{H}}\sum_{i=1}^{I}\sum_{j=1}^{N_{i}}\left\lvert \widetilde{F}_{i}(\boldsymbol{\theta}_{ij}, \boldsymbol{\phi}_{ij}, \boldsymbol{\eta}_i)-\widetilde{F}_{i}^{(K,L,M)}(\boldsymbol{\theta}_{ij}, \boldsymbol{\phi}_{ij}, \boldsymbol{\eta}_i)\right\rvert\right] \tag{$\because$ Lemma \ref{lemma:prod_ineq}}\\
    &\le \E\left[\sum_{i=1}^{I}\sum_{j=1}^{N_{i}}\sup_{(\boldsymbol{\theta}_{ij}, \boldsymbol{\phi}_{ij}, \boldsymbol{\eta}_i)\in \Theta \times \Phi \times \mathcal{H}}\left\lvert \widetilde{F}_{i}(\boldsymbol{\theta}_{ij}, \boldsymbol{\phi}_{ij}, \boldsymbol{\eta}_i)-\widetilde{F}_{i}^{(K,L,M)}(\boldsymbol{\theta}_{ij}, \boldsymbol{\phi}_{ij}, \boldsymbol{\eta}_i)\right\rvert\right] \\
    &=\E\left[\sum_{i=1}^{I}\sum_{j=1}^{N_{i}} TV\left(\widetilde{F}_{i},\widetilde{F}_{i}^{(K,L,M)}\right)\right] \\
    &\le N\left(\left(\frac{\astar}{1+\astar}\right)^{M}+\left(\frac{\atheta}{1+\atheta}\right)^{K}+\left(\frac{\aphi}{1+\aphi}\right)^{L}\right) \tag{$\because$ Theorem \ref{thm:finite_approximation}} 
\end{align*}
Here $N=\sum_{i=1}^{I}\sum_{j=1}^{N_{i}}1$ denotes the entire sample size.

\section{Binary post-treatment confounders}
When the post-treatment confounder $D_{ij}(a)$ is binary, we assume there exists a continuous latent random variable $Z_{ij}(a)$ such that $D_{ij}(a)=1$ if $Z_{ij}(a) \geq 0$ and the joint distribution of continuous latent variables follows a multivariate Gaussian copula model,
\begin{equation*}
    \begin{split}
        F_{\mathbf{Z}_{i}(1),\mathbf{Z}_{i}(0)}(\mathbf{z},\mathbf{z}' \mid \mathbf{C}_{i},N_i) =& \; \Phi_{2N_i} \left[  \Phi_1^{-1}\qty{F_{Z_{i1}(1)}(z_1) \mid \mathbf{C}_{i1},N_i) }, 
        \Phi_1^{-1}\qty{F_{Z_{i1}(0)}(z_1') \mid \mathbf{C}_{i1},N_i) }, \ldots, \right. \\
        & \left. \ \ \ \ \ \
        \Phi_1^{-1}\qty{F_{Z_{iN_i}(1)}(z_{N_i}) \mid \mathbf{C}_{iN_i},N_i) }, 
        \Phi_1^{-1}\qty{F_{Z_{iN_i}(0)}(z_{N_i}') \mid \mathbf{C}_{iN_i},N_i) }, \,  \boldsymbol{\Omega} \right],
    \end{split}
    \end{equation*}
    for any $\mathbf{z},\mathbf{z}'$.
This formulation induces the counterfactual dependence structure between $D_{ij}(a)$ and $D_{ij'}(a')$, while enabling the data-augmentation approach when the post-treatment confounders are binary.

\section{Comparison with the nested dependent Dirichlet process}
\label{sec:comparison_nddp}
\citet{ohnishi2025} proposed the nested dependent Dirichlet process (nDDP) prior, which is defined using the atom processes indexed by cluster- and individual-level covariates and a copula that defines the dependence structure between two arbitrary atom processes. Note that the copula here is not the one that we used to model the cross-world dependence of the post-treatment confounders in the main manuscript. Their method is specifically designed to model the conditional distribution of multilevel outcomes in CRTs. 
One advantage of the nDDP over the CA-EDP is its capacity to allow dependence structures between atom processes \citep{ohnishi2025}, such as those induced by a Gaussian copula, thereby the nDDP may capture complex data‐generating mechanisms of outcomes that the CA-EDP cannot. Additionally, the nDDP allows the stick-breaking weights to depend on cluster-level covariates, which could also enhance the flexibility in a context where the cluster-level unobserved heterogeneity is well explained by cluster-level covariates.

Additionally, the nDDP takes a conditional approach. Consequently, when computing the marginal expectation of the outcome, it forgoes an explicit probabilistic model for the covariates and instead approximates their joint distribution with the empirical distribution observed in the data. 
This is often natural in Bayesian analyses, since modeling covariates can demand substantial effort and may not yield commensurate benefits.
However, it is important to note that their estimator targets the \emph{mixed average treatment effect} (MATE) \citep{FanLi2023} rather than the desired population average treatment effect (PATE), that is, $\E[Y] =\int_{\mathrm{N}}
          \int_{\mathcal{X}}
            \E\left[Y \mid N=n,X=x\right]
          dF_{X \mid N}(x)  dF_{N}(n)
        \approx
        \frac{1}{I}\sum_{i=1}^{I}\frac{1}{N_i}\sum_{j=1}^{N_i}
          \E\left[Y \mid N_i,X_{ij}\right]$. 
The MATE is a good approximation to the PATE when the number of clusters and individuals is sufficiently large and the empirical distribution is `close' to the truth; however, in CRTs with only a small number of clusters, the data may not support reliable estimation of the between-cluster heterogeneity needed to recover the PATE.

A key advantage of the CA-EDP is its ability to model covariate distributions at both the cluster and individual levels.
The CA-EDP specifies a hierarchical prior for the full data-generating mechanism, $p(N,X,Y)=p(Y\mid N,X)p(N,X)$, and marginalizes $E[Y]$ over the posterior of $(N,X)$ rather than replacing $p(N,X)$ with an empirical point mass. 
Modeling covariates offers several  benefits. First, finite-sample efficiency can be better when the underlying model for $(N,X)$ is complex with a modest number of cluster such that the sample used in the empirical distribution does not approximate it well. Also, the posterior for the expectation of $Y$ automatically propagates uncertainty in the covariate distribution and the cluster size.
Additionally, our estimands accommodate informative cluster size by taking the expectation of cluster-averaged potential outcomes with respect to the cluster-size distribution $F_N$. The nDDP approximates $F_N$ by its empirical distribution and therefore does not capture the nuanced influence of informative cluster size, whereas the CA-EDP explicitly models $F_N$ and thereby captures these nuances.
Finally, the CA-EDP naturally accommodates ignorable covariate missingness.

Taken together, these approaches involve a familiar trade-off. With few clusters, the nDDP's use of empirical covariate and cluster-size distributions can make the MATE a poor proxy for the PATE and can under-propagate uncertainty; in this setting, the CA-EDP's joint modeling provides more complete uncertainty quantification, including informative cluster size and missingness, though it introduces additional modeling work and may lose efficiency when the data are weakly informative. When the number of clusters is sufficiently large, the nDDP typically yields an accurate approximation to the PATE, benefiting from its flexible dependence structure, yet uncertainty may still be somewhat under-propagated; correspondingly, the incremental efficiency gains from CA-EDP may be limited. The choice should reflect study size, the complexity of $p(N,X)$, and the desired balance between modeling burden and robustness in uncertainty quantification. 

\section{Details of the blocked Gibbs sampler}
\label{sec:gibbs_details}
\subsection{Blocked Gibbs sampler for nEPDM}

The posterior inference is carried out with a three-level blocked Gibbs sampler
Let $\zeta^{(n)}_i \in \{1,\ldots,K\}$, $\zeta^{(y)}_{ij} \in \{1,\ldots,L\}$, and $\zeta^{(x)}_{ij} \in \{1,\ldots,M\}$ denote the latent group indicators; that is, the cluster-level indicator and the individual-level indicators, respectively.
These follow multinomial distributions 
$\zeta^{(n)}_i \sim \mathrm{MN}(\boldsymbol{\pi}^{*})$, 
$\zeta^{(y)}_{ij} \sim \mathrm{MN}(\mathbf{w}^{\theta}_{\zeta^{(n)}_{i}})$, and 
$\zeta^{(x)}_{ij} \sim \mathrm{MN}(\mathbf{w}^{\phi}_{\zeta^{(n)}_{i}\zeta^{(y)}_{ij}})$,
where 
$\boldsymbol{\pi}^{*} = (\pi^{*}_{1},\ldots,\pi^{*}_{K})^\top$,  
$\mathbf{w}^{\theta}_{\zeta^{(n)}_{i}} = (w^{\theta}_{\zeta^{(n)}_{i}1},\ldots,w^{\theta}_{\zeta^{(n)}_{i}L})^\top$, and 
$\mathbf{w}^{\phi}_{\zeta^{(n)}_{i}\zeta^{(y)}_{ij}} = (w^{\phi}_{\zeta^{(n)}_{i}\zeta^{(y)}_{ij}1},\ldots,w^{\phi}_{\zeta^{(n)}_{i}\zeta^{(y)}_{ij}M})^\top$  are the vectors of  weights from the CA-EDP. 
This section presents the algorithm for continuous outcomes, mediators, and post-treatment confounders.  In what follows, we assume the observed data are conditioned on unless noted otherwise.

\subsubsection{Sample $\boldsymbol\zeta^{(n)}$ (cluster level)}

Conditional on all the other parameters and observed data, the full conditional probabilities of the group indicator $\zeta^{(n)}_i$ for cluster $i=1,\ldots,I$ are proportional to
\begin{align*}
P\left(\zeta^{(n)}_i=k\mid\cdot\right)
  &\propto
    \pi^{*}_k
    p\left(N_i \mid \lambda^{(n)}_{k}\right)
    \prod_{j=1}^{N_i}
      \left[
        \sum_{l=1}^{L}
        w^{\theta}_{kl}
        p\left(D_{ij},M_{ij},Y_{ij}\mid \boldsymbol{\theta}_{l}\right)
        \left\{\sum_{m=1}^{M}
        w^{\phi}_{klm}
        p\left( \mathbf{X}_{ij}\mid \phi_{m} \right)
        \right\}
      \right],
\end{align*}
where $p\left(N_i \mid \lambda^{(n)}_{k}\right)$, $p\left(D_{ij},M_{ij},Y_{ij}\mid \boldsymbol{\theta}_{l}\right)$ and $p\left( \mathbf{X}_{ij}\mid \phi_{m} \right)$ are the densities in the CA-EDP.

\subsubsection{Sample $\boldsymbol\zeta^{(y)}$ (D/M/Y level)}
For each individual $j$ in cluster $i$ with current $\zeta^{(n)}_{i}$, draw from the conditional probabilities for $l=1,\ldots,L$:
\begin{align*}
    P\left(\zeta^{(y)}_{ij}= l\mid\cdot\right)
  \propto
  w^{\theta}_{\zeta^{(n)}_{i}l}
        p\left(D_{ij},M_{ij},Y_{ij}\mid \boldsymbol{\theta}_{l}\right)
        \left\{\sum_{m=1}^{M}
        w^{\phi}_{\zeta^{(n)}_{i}lm}
        p\left( \mathbf{X}_{ij}\mid \phi_{m} \right)
        \right\}.
\end{align*}

\subsubsection{Sample $\boldsymbol\zeta^{(x)}$ (X level)}

For each coordinate $t=1,\ldots,p$ of individual $j$ in cluster $i$ with current values $\zeta^{(n)}_{i}$ and $\zeta^{(y)}_{ij}$, draw from the conditional probabilities over $m=1,\ldots,M$: 
\begin{align*}
    P\left(\zeta^{(x)}_{ij}= m\mid\cdot\right)
  \propto
        w^{\phi}_{\zeta^{(n)}_{i}\zeta^{(y)}_{ij}m}
        p\left( \mathbf{X}_{ij}\mid \phi_{m} \right)
\end{align*}

\subsubsection{Sample stick-breaking weights
$(s^{*}_k,v^{\theta}_{kl},v^{\phi}_{klm})$ and
$(\pi^{*}_k,w^{\theta}_{kl},w^{\phi}_{klm})$}

Let $s^{*}_{K}=1$. Given $\astar$ and $\zeta^{(n)}_i$, draw $s^{*}_k$ for $k=1,\ldots,K-1$ from
\begin{align*} 
    s^{*}_k \sim \text{Be}\left(1 + \sum_{i=1}^{I} \mathbbm{1}(\zeta^{(n)}_i = k), \astar + \sum_{i=1}^{I} \mathbbm{1}(\zeta^{(n)}_i > k)\right).
\end{align*}
Then update $\pi^{*}_k = s^{*}_k \prod_{j=1}^{k-1} (1 - s^{*}_j)$. Then,
for each group $k$, let $v^{\theta}_{kL}=1$. 
Given $\atheta$ and $\zeta^{(n)}_i$, draw  $v^{\theta}_{kl}$ for $l=1,\ldots,L-1$ from
\begin{align*} 
    v^{\theta}_{kl} &\sim \text{Be}\left(1 + \sum_{i=1}^{I}\sum_{j=1}^{N_i} \mathbbm{1}(\zeta^{(y)}_{ij} = l, \zeta^{(n)}_i = k), \atheta + \sum_{i=1}^{I}\sum_{j=1}^{N_i} \mathbbm{1}(\zeta^{(y)}_{ij} > l, \zeta^{(n)}_i = k) \right). 
\end{align*}
Then update $w^{\theta}_{kl} = v^{\theta}_{kl} \prod_{j=1}^{l-1} (1 - v^{\theta}_{kj})$ for $k=1,\ldots,K$. 
Finally, for each $k$ and $l$, let $v^{\phi}_{klM}=1$
Given $\aphi$, $\zeta^{(n)}_i$ and $\zeta^{(y)}_{ij}$, draw  $v^{\phi}_{klm}$ for $m=1,\ldots,M-1$ from
\begin{align*} 
    v^{\phi}_{klm} \sim \text{Be} & \left( 1 + \sum_{i=1}^{I}\sum_{j=1}^{N_i} \mathbbm{1}(\zeta^{(x)}_{ij} = m, \zeta^{(y)}_{ij} = l, \zeta^{(n)}_i = k) \right.\\
    &\left. , \aphi + \sum_{i=1}^{I}\sum_{j=1}^{N_i} \mathbbm{1}(\zeta^{(x)}_{ij} > m, \zeta^{(y)}_{ij} = l, \zeta^{(n)}_i = k) \right). 
\end{align*}
Then update $w^{\phi}_{klm} = v^{\phi}_{klm} \prod_{t=1}^{m-1} (1 - v^{\phi}_{klt})$ for $k=1,\ldots,K$ and $l=1,\ldots,L$.

\subsubsection{Update concentration parameters $\astar$, $\atheta$ and $\aphi$}
Assuming the conjugate priors 
$\astar \sim \mathrm{Ga}(a_{\astar}, b_{\astar})$, 
$\atheta \sim \mathrm{Ga}(a_{\atheta}, b_{\atheta})$ and 
$\aphi \sim \mathrm{Ga}(a_{\alpha_{\phi}}, b_{\alpha_{\phi}})$, 
update the concentration parameters $\astar$, $\atheta$ and $\aphi$ as follows:
\begin{align*} 
    \astar &\sim \mathrm{Ga}\left( a_{\astar} + K - 1, b_{\astar} - \sum_{k=1}^{K - 1} \ln(1 - s^{*}_k)  \right), \\
    \atheta &\sim \mathrm{Ga}\left( a_{\atheta} + K(L - 1), b_{\atheta} - \sum_{k=1}^{K}\sum_{l=1}^{L - 1} \ln(1 - v^{\theta}_{kl}) \right), \\
    \alpha_{\phi} &\sim \mathrm{Ga}\left( a_{\alpha_{\phi}} + KL(M - 1), b_{\alpha_{\phi}} - \sum_{k=1}^{K}\sum_{l=1}^{L}\sum_{m=1}^{M-1} \ln(1 - v^{\phi}_{klm}) \right)
\end{align*}

\subsubsection{Sample $\lambda^{(n)}_{k}$}

For $k=1,\ldots,K$, assuming a Gamma base measure $\lambda^{(n)}_{k}\sim\mathrm{Ga}(a_{0N},b_{0N})$, 
\begin{align*}
    \lambda^{(n)}_{k} \sim
\mathrm{Ga}\left(a_{0N}+\sum_{i:\zeta^{(n)}_i=k}N_i,
                   b_{0N}+n_k\right).
\end{align*}

\subsubsection{Update atoms $\boldsymbol{\theta}_{l}$}

For $l=1,\ldots,L$, let
$\mathbf C^{(d)}_{l}$,
$\mathbf C^{(m)}_{l}$ and
$\mathbf C^{(y)}_{l}$ be the design matrices (including cluster means) and
$\mathbf D_{l}$, $\mathbf M_{l}$, $\mathbf Y_{l}$ the response
vectors for the observations currently assigned to $\zeta^{(y)}_{ij}=l$.
For $\bullet \in \{d,m,y\}$,
$(\sigma^{\bullet}_{l})^{2}\sim\mathrm{IG}(a_{0,\bullet},b_{0,\bullet})$ and
$\boldsymbol\beta^{\bullet}_{l}
     \sim\mathrm N(\boldsymbol\mu^{\bullet}_{0},\boldsymbol\Sigma^{\bullet}_{0})$,
the posterior updates are
\begin{align*}
(\sigma^{\bullet}_{l})^{2}\sim
  \mathrm{IG}\left(a^{\bullet}_{0}+{\frac12}n_{l},
                     b^{\bullet}_{0}+{\frac12}\|{\mathbf r^{\bullet}}\|^2\right), ~~~ \boldsymbol\beta^{\bullet}_{l}\sim
  \mathrm N\left(\boldsymbol\mu_{l}^{\bullet},
                    \boldsymbol\Sigma_{l}^{\bullet}\right),
\end{align*}
with $\mathbf r^{\bullet}$ the residuals at the current draw and
$
\boldsymbol\Sigma_{l}^{\bullet}=
  (\boldsymbol\Sigma_{0,\bullet}^{-1}
   +\frac{1}{(\sigma^{\bullet}_{l})^{2}}  \mathbf C^{\bullet\top}_{l}\mathbf C^{\bullet}_{l})^{-1},
\boldsymbol\mu_{l}^{\bullet}=
  \boldsymbol\Sigma_{l}^{\bullet}
  ((\boldsymbol\Sigma^{\bullet}_{0})^{-1}\boldsymbol\mu^{\bullet}_{0}
   +\frac{1}{(\sigma^{\bullet}_{l})^{2}} \mathbf R_{l}^\bullet),
$
where $\mathbf R_{l}^\bullet$ is the least square estimate for $\beta^{\bullet}_{l}$, e.g., $\mathbf R_{l}^{(d)} = (\mathbf C^{d\top}_{l}\mathbf C^{(d)}_{l})^{-1}\mathbf C^{d\top}_{l}\mathbf D_{l}$.

\subsubsection{Update atoms $\boldsymbol{\phi}_{m}$}
For $m=1,\ldots,M$, $\phi_{m}$ is sampled from the following conditinal.
\begin{align*}
    \phi_{m} \sim \qty(\prod_{i,j:\zeta^{(x)}_{ij}=m}p\qty(\mathbf{X}_{ij} \mid \phi_{m})) g_{\phi}(\phi_{m}),
\end{align*}
where $g_{\phi}$ is the prior distribution of $\phi_{m}$ induced by the base measure $G_{\phi}$ and $p\qty(\mathbf{X}_{ij} \mid \phi_{m}))$ is the application-specific distribution of $\mathbf X_{ij}$.

\subsection{G-computation and post-processing}
\label{sec:gcomp_postprocessing}

For each retained MCMC draw we compute the causal estimands using $T=100$ synthetic clusters. The number of synthetic cluster is chosen such that the posterior estimates are stable. For $t=1,\ldots,T$, we first estimate the same-world correlation parameters $\gamma_a$ given the observed marginals  using the Metropolis-Hastings (MH) alrorithm in step \ref{step:estimate_gamma}. Given the estimated $\gamma_a$, we move on to the g-computation steps, where we iterate steps\ref{step:begin}-\ref{step:last}, and then aggregate the sample in step \ref{step:agg}

\begin{enumerate}[label=(\alph*)]
\item \emph{Sample the same-world correlations $\gamma_1$ and $\gamma_0$ using the MH algorithm.}
\label{step:estimate_gamma}
Using observed data under $A_i=a\in\{1,0\}$, compute the rank uniforms and latent normals $u^{(a)}_{ij} = F^{(a)}_{kl}\big(D_{ij}(a)\big)$ and $z^{(a)}_{ij} = \Phi^{-1}\big(u^{(a)}_{ij}\big)$ for $j=1,\ldots,N_i,$
and collect $\mathbf z_i^{(a)}=(z^{(a)}_{i1},\ldots,z^{(a)}_{iN_i})^\top$.
Under an equicorrelation Gaussian copula for the same world, let $R_{i}^{(a)}(\gamma_a) = (1-\gamma_a)I_{N_i} + \gamma_a J_{N_i},$
and the copula (log-)likelihood for world $a$ is given by:
\begin{align*}
    \ell_i^{(a)}(\gamma_a)=\log c\big(\mathbf u_{i}^{(a)};R_{i}^{(a)}(\gamma_a)\big)
    = -\frac{1}{2}\log|R_{i}^{(a)}|
    -\frac{1}{2}\,\mathbf z_i^{(a)\top}\big((R_{i}^{(a)})^{-1}-I_{N_i}\big)\mathbf z_i^{(a)}.
\end{align*}
For $R_{i}^{(a)}(\gamma_a)$, the necessary terms are:
\begin{align*}
    \log|R_{i}^{(a)}|=(N_i-1)\log(1-\gamma_a)+\log\big(1+(N_i-1)\gamma_a\big),\\  
    (R_{i}^{(a)})^{-1}-I_{N_i} =\frac{\gamma_a}{1-\gamma_a}I_{N_i}
    -\frac{\gamma_a}{(1-\gamma_a)\big(1+(N_i-1)\gamma_a\big)}\,J_{N_i}.
\end{align*}
Write $s^{(a)}_{i1}=\sum_{j=1}^{N_i} z^{(a)}_{ij}$ and $s^{(a)}_{i2}=\sum_{j=1}^{N_i} \big(z^{(a)}_{ij}\big)^2$. Then, aggregating across all clusters $i=1,\ldots,I$, the log-likelihood is given by:
\begin{align*}
    \ell^{(a)}(\gamma_a)= \sum_{i:A_i=a} \ell_{i}^{(a)}(=\gamma_a)
    &-\frac{1}{2} \sum_{i:A_i=a} \left[\Big\{(N_i-1)\log(1-\gamma_a)+\log\big(1+(N_i-1)\gamma_a\big)\Big\} \right.\\
    &\left. +\left\{
    \frac{\gamma_a}{1-\gamma_a}\,s^{(a)}_{i2}
    -\frac{\gamma_a}{(1-\gamma_a)\big(1+(N_i-1)\gamma_a\big)}\,(s^{(a)}_{i1})^2
    \right\} \right].
\end{align*}

For priors, we use independent uniforms respecting the PD constraint e.g., $\gamma_a \sim \mathrm{Unif}(0,1),$ for $a\in\{1,0\}$, which satisfy the condition \eqref{eq:cond_posdef}. For the proposal distribution, we sample from the prior: $\gamma_1^{\mathrm{prop}}\sim \mathrm{Unif}(0,1)$, $\gamma_0^{\mathrm{prop}}\sim \mathrm{Unif}(0,1)$.

The joint log-posterior (up to an additive constant) is:
\begin{align*}
    \mathcal L(\gamma_1,\gamma_0) =\ell^{(1)}(\gamma_1)+\ell^{(0)}(\gamma_0)+\log \pi(\gamma_1)+\log \pi(\gamma_0),
\end{align*}
with $\pi(\cdot)$ the uniform prior.
Accept $(\gamma_1^{\mathrm{prop}},\gamma_0^{\mathrm{prop}})$ with probability
\begin{align*}
    \alpha=\min\left\{1,
    \exp\qty(\mathcal L(\gamma_1^{\mathrm{prop}},\gamma_0^{\mathrm{prop}})
    -\mathcal L(\gamma_1^{\mathrm{prev}},\gamma_0^{\mathrm{prev}})) \right\}.
\end{align*}
The initial values of $\gamma_a$ are drawn from the prior.

\item \emph{Draw latent groups.}
\label{step:begin}
      Sample $k,l,m$ from the weights
      $(\pi_k,w^{\theta}_{kl},w^{\phi}_{klm})$.

\item \emph{Draw cluster size and baseline covariates.}
\label{step:clustersize}
      Draw
      $N\sim\mathrm{Pois}(\lambda^{(n)}_{k})$.
      For $j=1,\dots,N$ draw
      $\mathbf X_{j}\sim
      \mathrm p\left(\cdot \mid \boldsymbol\phi_{m}\right)$.

\item \emph{Post-treatment confounder $D$.}
      Compute the design matrices
      $\mathbf C^{(d)}_{j}(a)=(1,a,N,\mathbf X_j)$  and sample
      $D_{j}(a)\sim \mathrm N\left(\mathbf C^{(d)}_{j}(a)\boldsymbol\beta^{(d)}_{l}, (\sigma^{(d)}_{l})^2\right)$
      for $a\in\{0,1\}$.

\item \emph{Counterfactual post-treatment confounder $\mathbf D_i(0)$}
For cluster size $N$, let $F^{(a)}$ denote the current CA-EDP-mixture marginal CDF of $D_j(a)$ for arm $a\in\{0,1\}$ and $j=1,\ldots,N$, evaluated at unit $j$'s covariates $\mathbf C^{(d)}_{j}(a)$. Under the finite approximation, we have $F^{(a)}(D_j(a)) = \sum_{k}^{K}\sum_{l}^{L}\pi_k w_{kl}^{\theta} \Phi(D_j(a) \mid \mathbf C^{(d)}_{j}(a), \boldsymbol{\theta}_l)$, where $\Phi$ is the Gaussian CDF, parameterized by  $\boldsymbol{\theta}_l$. Proceed as follows:
\begin{enumerate}[label*=\arabic*.]

\item For each $j=1,\ldots,N$, compute $p_{j}=F^{(1)}\big(D_{j}(1)\big)$ and  $z_{j}=\Phi^{-1}(p_{j})$
and collect $\mathbf z_1=(z_{1},\ldots,z_{N})^\top$.

\item Sample the cross-world parameter $\rho$ from the prior \eqref{eq:prior_rho}, that respects the bound in (\ref{eq:cond_posdef}),
and set $\rho^*=\frac{\gamma_1+\gamma_0}{2} \rho$.
Form the same-world equicorrelation blocks and cross block
\begin{equation*}
    R_1=(1-\gamma_1)I_{N}+\gamma_1 J_{N},\qquad
    R_0=(1-\gamma_0)I_{N}+\gamma_0 J_{N},\qquad
    B=(\rho-\rho^*)I_{N}+\rho^* J_{N}.
\end{equation*}
Using Gaussian conditioning, construct $\mathbf Z(0)\mid \mathbf Z(1)=\mathbf z_1 \sim \mathrm N\big(\mu,\Sigma\big),$ where $\mu=B\,R_1^{-1}\mathbf z_1$, $\Sigma=R_0 - B\,R_1^{-1}B.$
Here $I_{N}$ is the identity and $J_{N}$ is the all-ones matrix.

\item Draw $\mathbf z_0\sim \mathrm N(\mu,\Sigma)$, set $u_{0j}=\Phi(z_{0j})$, and compute $D_{j}(0)=\big(F^{(0)}\big)^{-1}(u_{0j})$ for $j=1,\ldots,N$,
where each inverse is obtained by solving $\min_{d}\big(F^{(0)}(d)-u_{0j}\big)^2$ with an optimization method (e.g., BFGS) with tolerance $\varepsilon_{\text{tol}}$.
\end{enumerate}

\item \emph{Mediator $M$.}
      Construct
      $\mathbf C^{(m)}_{j}(a,d)=\left(1,a,N,\mathbf X_j,d,\bar d_{-j}\right)$,
      where $\bar d_{-j}$ is the leave-one-out cluster mean, and draw
      $M_{j}(a,d)\sim
        \mathrm N\left(\mathbf C^{(m)}_{j}(a,d)\boldsymbol\beta^{(m)}_{l},
                        (\sigma^{(m)}_{l})^2\right)$.

\item \emph{Outcome $Y$.}
\label{step:last}
      Construct
      $\mathbf C^{(y)}_{j}(a,m)=
        (1,a,N,\mathbf X_j,d,m,\bar m_{-j})$.

       \noindent 
      Then compute $\E\left[ Y_{\cdot j} \mid \mathbf{M}=\mathbf{m}, \mathbf{D}=\mathbf{d}, A=a, \mathbf{C},N \right]$
      in Threorem \ref{thm:identification}.

\item \emph{Aggregate.}
\label{step:agg}
      Cluster averages of
      $
        \left\{Y_{j}(1,1,1),
               Y_{j}(1,1,0),
               Y_{j}(1,0,0),
               Y_{j}(0,0,0)\right\}
      $
      yield Monte-Carlo draws of
      $\text{TE},\text{NIE},\text{NDE},\text{SME},\text{IME}$,
      which are stored as posterior samples.
\end{enumerate}

\section{Baseline simulation details}
\label{sec:simulation_details}

This section provides additional details about the data-generating process for our simulation study, which involves hierarchical data with clusters and individuals, covariates, treatments, mediators, and outcomes. 
We consider a total of $K=40$ clusters (or groups), indexed by $i=1,2,\dots,I$. For each cluster $i$, the cluster-level covariate $V_i \sim \mathrm{N}\left( \dfrac{3N_i}{50}, 1 \right)$, and the cluster-level treatment $A_i \sim \text{Bernoulli}(0.5).$

\subsection{Mediators}
\label{sec:DGP_mediator}

We consider post-treatment confounders and mediators, $D$ and $M$, for each individual. We consider a scenario where $D$ and $M$ are correlated within the same units, and the same type of mediators are correlated between units within the same cluster as well.

For each individual $(i,j)$, we calculate the mediator mean parameters based on cluster-level and individual-level variables:
\begin{align*}
    \theta_{D_{ij}}(A_i) &= 1.5 \left\{ -2 + 2A_i + \left(0.5 + 0.5A_i\right) \dfrac{N_i}{50} + 0.5X_{1,ij} - 0.5X_{2,ij} + X_{2,ij} + 0.5V_i \right\}, \\
    \theta_{M_{ij}}(A_i) &= -\theta_{D_{ij}}(A_i).
\end{align*}
Additionally, we consider the following correlation structure between mediators for units $j,k$.
\begin{align*}
    \begin{pmatrix}
      D_{ij}(1) \\
      D_{ij}(0) \\
      M_{ij}(1) \\
      M_{ij}(0) \\
      D_{ik}(1) \\
      D_{ik}(0) \\
      M_{ik}(1) \\
      M_{ik}(0) \\
    \end{pmatrix}
    \sim \mathrm{MVN}\left( 
    \begin{pmatrix}
      \theta_{D_{ij}}(1) \\
      \theta_{D_{ij}}(0) \\
      \theta_{M_{ij}}(1) \\
      \theta_{M_{ij}}(0) \\
      \theta_{D_{ik}}(1) \\
      \theta_{D_{ik}}(0) \\
      \theta_{M_{ik}}(1) \\
      \theta_{M_{ik}}(0) \\
    \end{pmatrix},  
    \sigma^2 \begin{pmatrix}
      R & S  \\
      S & R
    \end{pmatrix} \right),
\end{align*}
where the correlation matrices are defined as
\[
    R = \begin{pmatrix}
      1 & \alpha_1 & \alpha_0 & \alpha_2 \\
      \alpha_1 & 1 & \alpha_2 & \alpha_0 \\
      \alpha_0 & \alpha_2 & 1 & \alpha_1 \\
      \alpha_2 & \alpha_0 & \alpha_1 & 1
    \end{pmatrix}, 
    \quad
    S = \begin{pmatrix}
      \rho_0 & 0 & \rho_1 & 0 \\
      0 & \rho_0 & 0 & \rho_1 \\
      \rho_1 & 0 & \rho_0 & 0 \\
      0 & \rho_1 & 0 & \rho_0 \\
    \end{pmatrix}.
\]
We let $\sigma^2=1.0$, $\alpha_0=\alpha_2=0.05$, $\alpha_1=0.03$,  $\rho_0=0.03$ and $\rho_1=0.0$. 

\subsection{Outcome variable}
\label{sec:GDP_outcome}
In Scenarios S1-S6, we consider a mixture model with nonlinear components, described below.
For each individual $(i,j)$, the outcome $Y_{ij}$ is generated based on a function of treatments, mediators, covariates, and random effects.
We consider mixture models of nonlinear data-generating processes with non-gaussian errors for potential outcomes. 
We first compute the following location parameters with nonlinear and interaction terms.
\begin{align*}
    \theta_{1_{ij}} &= 1.0 + A_i + \left(0.5 + 0.5A_i\right) \dfrac{N_i}{50} + 0.5\overline{D_i} - 0.5\overline{M_i} + D_{ij} - M_{ij} \\
    &+ 0.3X_{1,ij}A_i - 0.3X_{2,ij}A_i + 0.1X_{1,ij}^2 + 0.1X_{2,ij}^2 + 0.1X_{1,ij}X_{2,ij} + 0.5X_{3,ij} + 0.5V_i, \\
    \theta_{2_{ij}} &= -1.0 - A_i - \left(0.5 + 0.5A_i\right) \dfrac{N_i}{50} - 0.5\overline{D_i} + 0.5\overline{M_i} - D_{ij} + M_{ij} \\
    &- 0.3X_{1,ij}A_i + 0.3X_{2,ij}A_i - 0.1X_{1,ij}^2 - 0.1X_{2,ij}^2 - 0.1X_{1,ij}X_{2,ij} + 0.5X_{3,ij} + 0.5V_i, \\
    \theta_{3,ij} &= 1.0 + A_i + \left(0.3 + 0.3A_i\right) \dfrac{N_i}{50} + 0.3\overline{D_i} - 0.3\overline{M_i} + D_{ij} - M_{ij} \\
    &+ 0.1X_{1,ij}A_i - 0.1X_{2,ij}A_i + 0.1X_{1,ij}^2 + 0.1X_{2,ij}^2 + 0.1X_{1,ij}X_{2,ij} + 0.3X_{3,ij} + 0.3V_i, \\
    \theta_{4_{ij}} &= -1.0 - A_i - \left(0.3 + 0.3A_i\right) \dfrac{N_i}{50} - 0.3\overline{D_i} + 0.3\overline{M_i} - D_{ij} + M_{ij} \\
    &- 0.1X_{1,ij}A_i + 0.1X_{2,ij}A_i - 0.1X_{1,ij}^2 - 0.1X_{2,ij}^2 - 0.1X_{1,ij}X_{2,ij} + 0.3X_{3,ij} + 0.3V_i, \\
    \theta_{5,ij} &= -0.5 \theta_{1_{ij}}, 
    \theta_{6,ij} = -1.0\theta_{2_{ij}}, 
    \theta_{7,ij} = -1.5 \theta_{3_{ij}}, 
    \theta_{8,ij} = -2.0 \theta_{4_{ij}},
\end{align*}
where $\overline{D_i}$ and $\overline{M_i}$ are the cluster-level means of the mediators: $\overline{D_i} = \dfrac{1}{N_i} \sum_{j=1}^{N_i} D_{ij}, \quad \overline{M_i} = \dfrac{1}{N_i} \sum_{j=1}^{N_i} M_{ij}.$
The outcome $Y_{ij}$ is then generated from a mixture distribution according to a  group latent variable $G_i\sim \text{Categorical}(0.2, 0.3, 0.5),$
    \begin{itemize}
        \item If $G_i=1$, $Y_{ij} \sim 0.5\mathrm{t}( \theta_{1,ij}, 1.5 ) + 0.5  \mathrm{t}( \theta_{2,ij}, 1.5 ).$
        \item If $G_i=2$, $Y_{ij} \sim 0.5\mathrm{t}( \theta_{3,ij}, 1.5 ) + 0.25  \mathrm{t}( \theta_{4,ij}, 1.5 ) + 0.25  \mathrm{t}( \theta_{5,ij}, 1.5 ).$
        \item If $G_i=3$, $Y_{ij} \sim 0.5\mathrm{t}( \theta_{6,ij}, 1.5 ) + 0.25  \mathrm{t}( \theta_{7,ij}, 1.5 ) + 0.25  \mathrm{t}( \theta_{8,ij}, 1.5 ),$
    \end{itemize}
    where $X \sim \mathrm{t}( \theta, \nu )$ represents a t-distributed random variable with location parameter $\theta$ and $\nu$ degrees of freedom.

In Scenario S7, we consider a simple scenario where the parametric model is correctly specified. Specifically, $Y_{ij}$ is generated from:
\begin{align*}
    Y_{ij} \sim \mathrm{N}(\theta_{ij}, 1.0),
\end{align*}
where $\theta_{ij}=1.0 + A_i  + 0.5\overline{D_i} - 0.5\overline{M_i}  + 0.5 D_{ij} - 0.5 M_{ij} + 0.3X_{1,ij} - 0.3X_{2,ij} + 0.3X_{3,ij} + 0.3 N_i$. The following parametric model is fitted: $\E[Y_{ij} \mid A_i, \mathbf{D}_i, \mathbf{M}_i, \mathbf{C}_i, N_i] = \beta_0 + \beta_1A_i  + \beta_2\overline{D_i} + \beta_3\overline{M_i}  + \beta_4 D_{ij} + \beta_5 M_{ij} + \beta_6 X_{1,ij} + \beta_7 X_{2,ij} + \beta_8 X_{3,ij} + \beta_9 N_i.$

\end{document}